\newtheorem{theorem}{Theorem}[section]
\theoremstyle{definition}
\newtheorem{definition}{Definition}
\theoremstyle{plain}
\newtheorem{claim}{Claim}
\newtheorem{lemma}[theorem]{Lemma}
\newtheorem{fact}[theorem]{Fact}
\theoremstyle{definition}
\newtheorem{example}{Example}
\theoremstyle{plain}
\newtheorem{corollary}[theorem]{Corollary}
\newcommand{\opt}[0]{\mathrm{OPT}}
\newcommand{\feas}{\mathcal{F}}
\newcommand{\dist}{\mathcal{D}}
\newcommand{\ds}{T}
\newcommand{\ind}{I}
\newcommand{\mc}{\mathcal{M}}
\newcommand{\argmin}{\operatorname{argmin}}
\newcommand{\argmax}{\operatorname{argmax}}
\newcommand{\optf}{f}
\newcommand{\dd}[2]{\frac{d}{d#1}#2}
\newcommand{\surr}{\rho}
\newcommand{\expect}[1]{\operatorname{E}\left[#1\right]}
\newcommand{\expectt}[2]{\operatorname{E}_{#1}\left[#2\right]}
\newcommand{\prob}[1]{\operatorname{Pr}\left[#1\right]}
\newcommand{\real}{\mathbb{R}}
\newcommand{\matroid}{\mathrm{M}}
\newcommand{\comset}[1]{\mathcal{C}\left(#1\right)}
\newcommand{\instance}{\mathbb{I}}
\newcommand{\pbox}{\mathcal{B}}
\newcommand{\X}{\mathcal{X}}
\newcommand{\Z}{\mathcal{Z}}
\newcommand{\T}{\mathcal{T}}
\newcommand{\A}{\mathcal{A}}
\newcommand{\R}{\mathcal{R}}
\newcommand{\I}{\mathcal{I}}
\newcommand{\vecM}{\mathbb{M}}
\newcommand{\cg}{\mathrm{CG}}
\newcommand{\insp}{\mathrm{open}}
\newcommand{\alg}{\mathrm{alg}}
\newcommand{\grab}{\mathrm{grab}}
\DeclareMathOperator{\Bernoulli}{Bernoulli}
\newcommand{\Dimitris}[1]{{\color{red}[Dimitris] #1}}
\newcommand{\comms}{\mathcal{P}}
\newcommand{\traj}{\Upsilon}
\title{Combinatorial Selection with Costly Information}
\author{
Shuchi Chawla\footnote{University of Texas at Austin. The authors were supported in part by NSF award CCF-2225259.} \\ {\tt shuchi@cs.utexas.edu} 
\and 
Dimitris Christou$^*$  \\ {\tt christou@cs.utexas.edu} 
\and
Amit Harlev\footnote{Cornell University. This author was supported by the Department of Defense (DoD) through the National Defense Science $\&$ Engineering Graduate (NDSEG) Fellowship Program.} \\ {\tt ah843@cornell.edu}
\and
Ziv Scully\footnote{Cornell University. This author was supported by NSF award CMMI-2307008.} \\ {\tt zivscully@cornell.edu}
}
\date{}
\begin{document}

\maketitle
\thispagestyle{empty}
\addtocounter{page}{-1}
\begin{abstract}


We consider a class of optimization problems over stochastic variables where the algorithm can learn information about the value of any variable through a series of costly steps; we model this information acquisition process as a Markov Decision Process (MDP). The algorithm's goal is to minimize the cost of its solution plus the cost of information acquisition, or alternately, maximize the value of its solution minus the cost of information acquisition. Such {\em bandit superprocesses} have been studied previously but solutions are known only for fairly restrictive special cases.

We develop a framework for approximate optimization of bandit superprocesses that applies to arbitrary acyclic MDPs with a matroid feasibility constraint. Our framework establishes a bound on the optimal cost through a novel cost amortization; it then couples this bound with a notion of local approximation that allows approximate solutions for each component MDP in the superprocess to be composed without loss into a global approximation. 

We use this framework to obtain approximately optimal solutions for several variants of bandit superprocesses for both maximization and minimization. We obtain new approximations for combinatorial versions of the previously studied  Pandora's Box with Optional Inspection and Pandora's Box with Partial Inspection; the less-studied Additive Pandora's Box problem; as well as a new problem that we call the Weighing Scale problem.
    
\end{abstract}

\newpage
\thispagestyle{empty}
\addtocounter{page}{-2}
\tableofcontents

\newpage
\section{Introduction}\label{sec:intro}

Many real-world settings involve algorithmic decision-making under incomplete information about the underlying input and future outcomes. In some cases, additional information can be acquired at a cost. A fundamental question arises: how should this costly information acquisition be incorporated into the algorithmic process? 

Consider, for example, a biotechnology company running multiple drug discovery projects. These projects require upfront investment and entail a series of choices on which directions to pursue {\em before} their potential is known and realized. Other examples include an oil company's search for an optimal drilling site, a construction firm paying for multiple architectural designs before choosing one to pursue, and a manufacturer conducting market research before deciding which products to produce and in what quantity.

A classical model for costly information acquisition is the Pandora's Box problem, introduced by \citet{W79}. In this model, an algorithm faces an optimization problem over $n$ stochastic alternatives, each concealed inside a closed box. The algorithm can open a box at a cost to observe the value of the alternative within, and can select an alternative only after its box has been opened. The goal is to minimize the total cost, which includes both the cost of opening the boxes and the values of the chosen alternatives.\footnote{For maximization problems, the objective is to maximize the function value minus the cost of opening the boxes. In this section, we will focus our exposition on minimization problems although, as we show later in the paper, our framework and techniques extend to maximization problems as well.} The Pandora's Box model (henceforth, PB) is well understood and admits a simple optimal solution for a broad class of optimization problems. However, it is too simplistic to capture the complexities of real-world information acquisition: it assumes that each alternative can be revealed in only one way and that the algorithm either learns everything or nothing about it—there is no concept of partial information acquisition.

Recent research has explored extensions of PB that allow for multiple modes and stages of information acquisition.\footnote{See, for example, \citet{OW15, KWG16, S17, D18, BK19, EHLM19, GJSS19, BFLL20, AJS20, FLL22, BC22, BEFF23, BW24, DS24}.} These generalizations are typically NP-hard \citep{FLL22}, necessitating the development of approximation algorithms. \citet{BC24-survey} provide a survey of recent advances in this domain. However, a significant limitation of prior work is that existing solutions are tailored to specific problem variants and do not extend to broader settings. For example, \citet{AJS20} address a two-stage inspection process, but their techniques do not generalize to three stages. Similarly, while previous results separately handle cases with partial inspection and optional inspection, it is unclear how to approach an instance that contains alternatives of both kinds. 

\begin{center}
\parbox{0.95\textwidth}{
\noindent
In this paper, we develop a general framework for modeling \textit{arbitrarily complex} protocols for information acquisition and designing approximation algorithms for them. Using this framework, we obtain improved approximations for several PB variants and develop algorithms for new models where non-trivial approximations were previously unknown.
}
\end{center}

\paragraph{A Framework for Modeling Information Acquisition.}

We model the process of costly information acquisition for each alternative $i \in [n]$ using a finite-horizon Markov decision process (MDP), denoted by $\mathcal{M}_i$. The current state of this MDP encapsulates all information the algorithm has about alternative $i$ and determines the set of available actions for further exploration. Some actions lead to terminal states, where the alternative can be selected. For instance, both a standard optional-inspection PB and a three-stage inspection PB can be modeled as MDPs (see Figure~\ref{fig:mdp-examples}). A key assumption in our framework is the independence of these MDPs: exploring one alternative does not affect the state of another.

This framework defines a broad family of optimization problems, which we call \textit{Costly Information Combinatorial Selection} (CICS). In CICS, we are given $n$ stochastic alternatives, each associated with an MDP $\mathcal{M}_1, \dots, \mathcal{M}_n$. The goal is to select a subset of alternatives satisfying a given feasibility constraint; our work focuses on matroid feasibility constraints. The algorithm iteratively chooses an alternative $i$ and takes an action in $\mathcal{M}_i$ to gather information. At any time, it can stop and select a feasible subset of alternatives that have reached terminal states. The objective function is the total expected value of the selected alternatives plus the cost of all actions taken.

\begin{figure}[h!]
    \centering
    \includegraphics[width=0.8\textwidth]{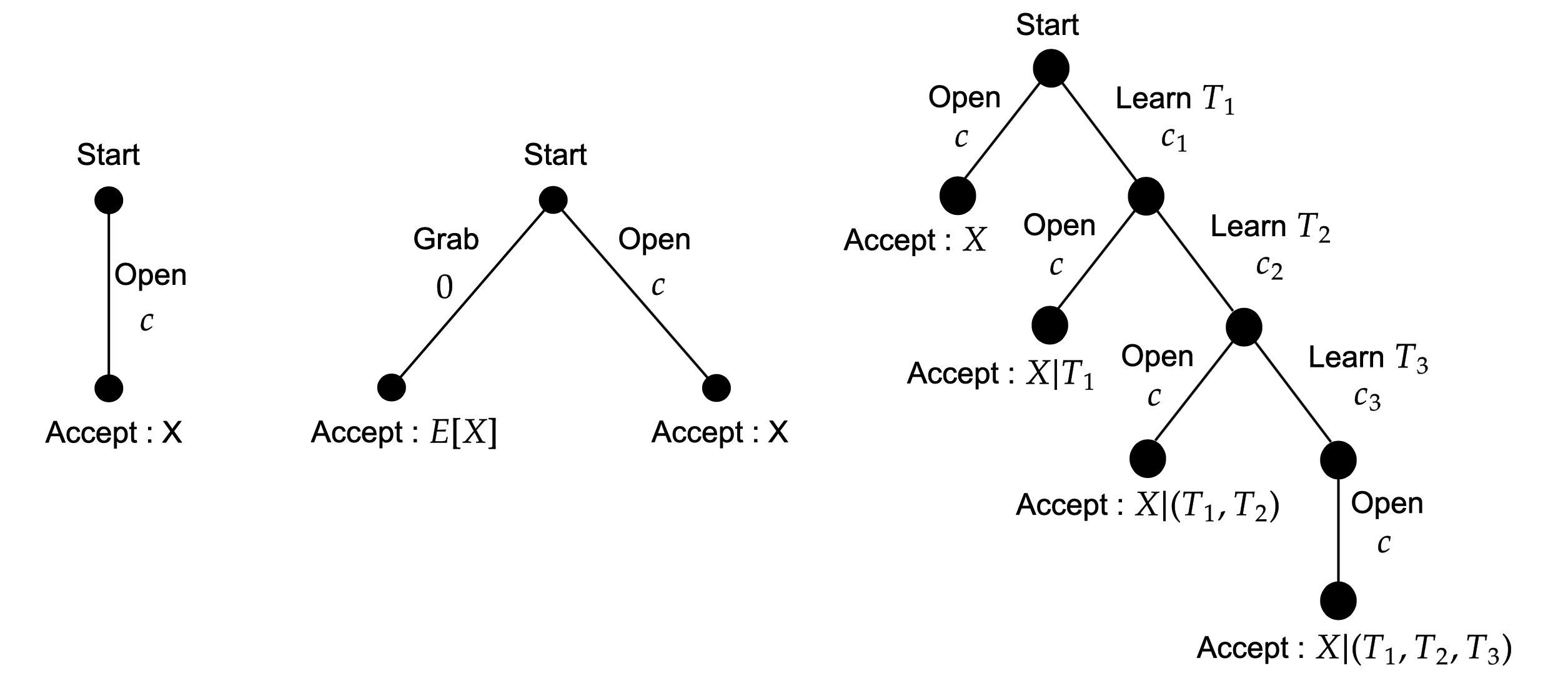}
    \caption{Three examples of a Pandora's Box style information acquisition protocol. (Left:) A classical PB is modeled as a Markov chain with a single costly action (opening the box) resulting in a random terminal state of value $X$. (Middle:) An optional-inspection PB modeled as an MDP; in addition to opening the box, there is a grab action that reveals no information and results in a single terminal state of deterministic value $\expect{X}$. (Right:) A three-stage inspection protocol where the box can be opened directly, or it can be partially explored by sequentially revealing hidden variables $T_i$ at a cost.}
    \label{fig:mdp-examples}
\end{figure}

The primary challenge in solving CICS arises from the interplay between two levels of decision-making: (i)~\textit{global decision-making}, which determines which alternative to explore at each step, and (ii)~\textit{local decision-making}, which dictates the costly actions taken within each alternative's MDP. These decisions are typically adaptive, responding to the information that was acquired in previous steps. Since CICS generalizes MDPs, we have no hope of obtaining a general purpose solution to the problem. Instead, what we aim for is a decomposition of the problem into smaller pieces that can each be solved independently and composed together into a global solution. We obtain such a decomposition by analyzing a class of algorithms called \textit{committing policies}.

\paragraph{Committing Policies and the Commitment Gap.}
A \emph{committing policy} for CICS is a policy that, in essence, predetermines all local decisions before making any global decisions.
In particular, it preselects an action  for each state in every MDP $\mathcal{M}_i$, effectively reducing the MDPs to Markov chains.
This is valuable because for the class of feasibility constraints we are interested in, the special case of CICS where each MDP $\mathcal{M}_i$ is effectively a Markov chain, i.e. has just one action in every state, admits a simple optimal policy \citep{DTW03, GJSS19}. By fixing local decisions in advance, we significantly reduce the complexity of global decision-making. This prompts two questions:
\* What is the best approximation ratio achievable by committing policies for a given class of CICS problems? We call this ratio the \emph{commitment gap}.
\* Can the local decision commitments be made in a \emph{composable} way? That is, can the commitments for each MDP $\mathcal{M}_i$ be determined using only the characteristics of $\mathcal{M}_i$, independent of other MDPs?
\*/
These questions have been posed and investigated for optional-inspection PB \citep{BK19, DS24}, but little is known beyond that setting. In this paper, we address both questions by developing a set of \emph{local} sufficient conditions for bounding the commitment gap in general CICS.
By ``local'', we mean that our conditions depend only on the properties of each individual MDP $\mathcal{M}_i$, without reference to the other MDPs or the feasibility constraint.
Furthermore, our conditions yield policies that determine their commitments locally (or, in one case, with a simple global algorithm after local pre-processing).

We introduce two key conceptual contributions that enable bounds on the commitment gap: a method of bounding the optimum, and notions of approximation ratio that can be checked locally for each MDP.

\paragraph{Bounding the Optimum.}
Our first conceptual contribution is a lower bound on the cost of the (unrestricted, non-committing) optimum. This bound is analogous to the Whittle integral \citep{W80} for infinite-horizon discounted-reward MDPs. While prior proofs of the Whittle integral exist for the discounted-reward setting \citep{BS13, HMR15}, we present an algorithmic proof for the finite-horizon case, which is both simpler and more insightful. 

Importantly, our approach extends the notion of \textit{surrogate costs} from PB literature to general MDPs and connects it to the Whittle integral,  offering valuable algorithmic insights into bounding the commitment gap. Surrogate costs were first defined for classical PB by \cite{KWG16} and \cite{S17}, and can be extended easily to arbitrary Markov chains \citep{W92, DTW03, GJSS19}. However, the extension to MDPs is highly non-trivial and is a novel contribution of our work. We describe the details in Section~\ref{sec:intro-tech}.

\paragraph{Local Approximation.}
Our second contribution is the development of a \textit{local approximation} technique, which expresses the global approximation factor of a committing policy $(\pi_1, \dots, \pi_n)$ in terms of the local properties of each constituent policy $\pi_i$ relative to its corresponding MDP $\mathcal{M}_i$. This concept was first introduced by \citet{DS24} for the optional-inspection PB; we generalize it to arbitrary MDPs.


Crucially, local approximations compose: if each $\pi_i$ achieves an $\alpha$-local approximation for $\mc_i$, then the committing policy $(\pi_1, \dots, \pi_n)$ achieves an $\alpha$-global approximation. This observation answers, for example, our aforementioned question of how to compose partial-inspection Pandora's Boxes with optional-inspection ones within a single instance. Furthermore, our global approximation guarantee extends to any combinatorial setting that admits a ``greedy-style'' algorithm via techniques similar to \citep{S17, GJSS19}.

Although local approximation can be defined directly with respect to the Whittle integral, as in \cite{DS24}, proving guarantees using this definition appears very challenging. One of our main contributions is to connect this concept to surrogate costs in a manner that enables simpler analysis. We develop multiple approaches for bounding local approximation factors that are suitable for different contexts, and demonstrate their use through the applications discussed in Section~\ref{sec:intro_applications}.


\paragraph{Example Application: Shortest Paths.}
To illustrate the power of our framework, consider the \textit{Pandora's Shortest Path} problem introduced by \citet{S17}: the edge weights of a given graph correspond to independent Pandora's boxes and the objective is to accept a set of (open) boxes that form a path between two given vertices $s$ and $t$, while minimizing the total cost of opening boxes plus the cost of the path. As path constraints do not admit frugal algorithms that enable the generalization of Weitzman's result to combinatorial settings, this problem is notoriously challenging and no approximation results are known.

When the underlying graph is a union of disjoint $s$-$t$ paths, this problem becomes an instantiation of CICS: each path corresponds to an alternative, whose MDP encapsulates the different orders in which the constituent ``edge boxes'' can be opened. A committing policy, in this case, pre-defines a protocol for the order of inspecting the edges within a path, that can adapt only to the instantiations of the edges from the same path. We show that the commitment gap of this special case of Pandora's Shortest Paths is at most $2$.


\subsection{Our technical contributions and their relationship to prior work}
\label{sec:intro-tech}
We now describe our technical contributions in more detail in the context of closely related prior work. 
A thorough discussion of other related work is presented in~\Cref{sec:related}. 

\subsubsection*{Bandit superprocesses and the Whittle integral}

CICS is closely related to a class of sequential decision-making problems called bandit superprocesses (BSPs), introduced by \cite{N73}. A BSP is composed from multiple independent MDPs; at every step, the algorithm advances one of the MDPs, receives a (discounted) reward, and the state of the underlying MDP evolves stochastically. A primary difference between the two settings is that a CICS algorithm must terminate by selecting a feasible set of alternatives, whereas a BSP can potentially have an infinite horizon. Moreover, since the states of the MDPs in a CICS model information acquisition, we assume that each constituent MDP is acyclic; our techniques rely on this acyclicity. In contrast, BSPs can be defined over arbitrary MDPs.

In the special case where the MDPs are Markov chains, BSP reduces to the extensively-studied Multi-Armed Bandit (MAB) problem. MAB admits an astonishingly simple optimal solution \citep{G79}: every state in each constituent Markov chain is assigned an index independent of the other chains; at every step, the Markov chain with the maximum index in its current state is advanced. \cite{DTW03} show that this ``indexability'' extends to the finite-horizon setting as well, and \cite{GJSS19} further extend it to combinatorial settings. We restate this result with a new proof as~\Cref{thm:MC-opt}. \cite{W79}'s algorithm for classical PB is essentially a special case of this result. 

General BSPs, in contrast, are not indexable \citep{G82}.\footnote{This is because MAB only involves global decision-making and does not require any local decision-making as there is only one way to advance the chosen Markov chain.} Nevertheless, the local structure of the problem provides insight into the optimal cost: given an MDP $\mc$, \cite{W80}
considers a local problem $(\mc, y)$, where at every step the algorithm can either advance $\mc$ or terminate with a cost of $y$. The values of these local games, formalized as ``optimality curves'' $\optf_\mc(y)$, one for each constituent MDP in the BSP, can be combined to obtain the Whittle integral, a lower bound on the global optimum \citep{W80,BS13}.

Our work provides a new interpretation for Whittle's optimality curves by connecting them with a mapping from trajectories in the MDP to ``surrogate'' costs.
The surrogate costs essentially allow an algorithm to amortize the costs of actions and pay (a part of) them only when the MDP terminates. We develop a recursive water filling algorithm for cost amortization that ensures that good, i.e. low cost or high reward, terminal states are responsible for paying most of the cost share. A crucial property we achieve is the {\em independence} of the distribution of surrogate costs from the actions chosen by an algorithm in the MDP in the local or global game. In particular, the costs capture the local structure of the MDP without limiting in any way how an algorithm ``solves'' the MDP. 
This independence allows us to compose them into a global bound for the optimal (adaptive, non-committing) policy. 


Our approach is heavily influenced by \cite{KWG16} and \cite{S17}'s  amortization viewpoint on Weitzman's index for classical PB, as well as its extension to Markov chains. However, we emphasize that the extension to MDPs while ensuring action independence is an important and novel contribution of our work that enables our approximation bounds. 


\subsubsection*{Commitment gap and local approximation}

We follow the approach of \cite{DS24} to establish a bound on the commitment gap by quantifying the performance of each commitment $\pi_i$ relative to the optimal policy for MDP $\mc_i$ in the local game $(\mc_i, y)$. Such an approach helpfully disentangles global and local decision making.


Local guarantees for $(\mc_i, y)$ need to be established with care, however. \cite{W80} showed that if every $\mc_i$ admits a commitment $\pi_i$ that is optimal for the local game $(\mc_i, y)$ regardless of the value $y$, then the tuple $\comms=(\pi_1,\dotsc , \pi_n)$ is globally optimal. In other words, unambiguous local optimality composes into global optimality. However, the reverse doesn't hold: the globally optimal policy might take actions that are unambiguously {\em suboptimal} in the local game! (See~\Cref{app:example}.) Likewise, simply relating the optimal cost $\optf_{\mc_i}(y)$ to the cost of the commitment $\optf_{\mc_i^{\pi_i}}(y)$ at every $y$ is not sufficient, as we demonstrate in~\Cref{app:example}. 

Following \cite{DS24} we say that $\pi_i$ $\alpha$-locally approximates $\mc_i$ if for all $y$ we have $\optf_{\mc_i^{\pi_i}}(\alpha y)\le \alpha\cdot \optf_{\mc_i}(y)$. Local approximations defined in this manner compose into global guarantees on the commitment gap. We use this approach to obtain new approximations for PB with partial inspections (defined below).



Unfortunately, however, local approximations can be very challenging to prove for general MDPs. We develop a new (but weaker) local condition based on surrogate costs that we call {\em pointwise approximation}. This allows us to analyze more complicated MDPs with multiple rounds of actions for which optimality curves are tricky to establish.

Finally, the commitment gap can sometimes be strictly smaller than what can be established through local approximation.  We investigate this issue in the context of PB with optional inspection and show that improved results can be obtained through a {\em semi-local} argument.


\subsubsection*{Applications of our framework}\label{sec:intro_applications}
We instantiate our framework to obtain new bounds on the commitment gap of four PB extensions. The first three variants are new problems studied in the minimization setting, and we obtain the following results:

\begin{itemize}
    \item {\bf PB with Partial Inspection}. In addition to opening the box, the algorithm can choose to ``peek'' into it at a cheaper cost and learn its value; boxes must still be opened before being selected. We prove that the commitment gap is at most $\sqrt{2}$.

    \item {\bf Additive PB}. The value of the box is given by the sum of independent random variables; each component must be separately inspected at a cost and the box can be selected only after all components have been inspected. This models the shortest paths problem over disjoint paths described earlier. We prove that the commitment gap is at most $2$.

    \item {\bf Weighing Scale Problem}. The box cannot be opened, but can be compared at a cost against a threshold $t$ (chosen by the algorithm); the comparison reveals whether $X_i\geq t$ or not. Any sequence and number of comparisons are allowed. We obtain a logarithmic bound (in key parameters of the problem) on the commitment gap.
\end{itemize}

\noindent All these results apply to any set of matroid feasibility constraints, and can be seamlessly extended to any set of constraints that admit a frugal algorithm, as defined by~\cite{S17}. Furthermore, they immediately extend to settings that consider combinations of these variants: an instance consisting of some (normal) boxes, some partial inspection boxes and some additive boxes under matroid feasibility constraints still has a commitment gap of at most $2$.

Furthermore, we prove that the commitment gap of {\bf PB with Optional Inspection} under matroid feasibility constraints in the maximization setting is at least $0.58$. This problem has been extensively studied in the literature; for the single-selection setting, it is known that the problem admits a PTAS \citep{FLL22,BC22}. For matroid constraints, \cite{BK19} show that the commitment gap is at least $0.63$; however, their approach cannot be used to determine the committing policy that achieves it, as it requires an enumeration of the exponentially many committing policies. In contrast, the committing policy achieving our bound can be efficiently computed and leads to the first approximation result that is strictly better than the (trivial) bound of $0.5$.

\subsection{Further related work}
\label{sec:related}

We have already discussed the relationship of CICS to {\bf bandit superprocesses} and Markovian multi-armed bandit.  
There is little work on BSPs that do not satisy Whittle's ``unambiguous local optimality'' condition, outside of the variants of Pandora's Box. The only such work we are aware of is by \cite{KVB18}, who look at a problem with two specific symmetric MDPs and an arbitrary outside option, and develop an adaptive (but complicated) exact algorithm for this problem. We refer the reader to the survey by \cite{HMR15} for further discussion of BSPs.

{\bf Combinatorial variants} of Pandora's Box were first studied by \cite{S17} who showed that for feasibility constraints admitting greedy-style  or ``frugal'' approximation algorithms, an extension of Weitzman's indexing algorithm provides the same approximation factor. This was further extended to CICS over Markov chains by \cite{GJSS19}.

The idea of augmenting the inspection process of a Pandora's box in order to explore more interesting decision-making settings has been a very active line of research over the last years. Arguably, the most studied variant is \textbf{optional inspection}. Most literature focuses on the single-item selection, as opposed to combinatorial variants. Study of it was initiated by \citet{GMS08}, who give a $4/5$-approximation for the maximization setting; and \citet{D18}, who characterized the solution to the single-box problem and proved certain conditions under which the Gittins policy remains optimal for single-item selection in the maximization setting (though the results naturally extend to the minimization setting). Beyond this results are separated by whether they are for the minimization or maximization setting. For the minimization setting, \citet{DS24} proved a composition theorem for a special case of local approximation (\cref{def:local-approx}) and used it to construct a committing policy with a $4/3$-approximation guarantee. In fact, their result extends beyond the single-item selection setting to the combinatorial Pandora's box setting studied by \citet{S17}. 
In the maximization setting, \citet{BK19} and \citet{GMS08} give approximation guarantees for committing policies. Furthermore, \citet{FLL22} and \citet{BC22} introduced polynomial time approximation schemes that for any $\varepsilon > 0$ compute a policy that is at least a $(1-\varepsilon)$-approximation. However, all of these results are for the single-item selection setting. For matroid feasibility constraints, \citet{BK19} observe that the commitment gap is at least $(1-1/e)$ and give an efficient policy that achieves a $0.5$ approximation.

The {\bf partial inspection} variant of Pandora's Box has also primarily been studied in the context of maximization. \cite{AJS20} provide a $(1-1/e)$-approximation via a committing policy, and show that, in fact, \textit{any} committing algorithm or its negation (flipping which box should be partially opened versus fully opened) admits a $(1/2)$-approximation to the optimal utility. We note that this already highlights a significant difference between the minimization and maximization settings for this variant. Whereas for maximization one can essentially flip a coin to decide which of the two actions to commit to, obtaining the optimal's utility for the committed action and non-negative utility for the action that was not selected, the same approach cannot be applied for minimization as the cost suffered by a bad flip could result to arbitrarily bad approximations. \citet{B19} introduces a more general inspection model, where the searcher has $k$ different methods for inspecting each box, and can select at most one of them. They provide a $(1-1/e)$-approximation that applies to $k$-element selection, but is computationally inefficient when $k$ is large. To our knowledge, partial inspection has not been studied in the context of minimization.

The \textbf{additive PB} model has been studied previously \citep{BW24,BW24_2} in the context of maximization for $k=2$ random variables per alternative with matching constraints. In this case, each alternative admits only two commitments: opening one of the two random variables followed by the other. The authors show via a global argument that picking one of the two commitments randomly independently for each alternative yields a bound of $2$ on the commitment gap (a further factor of $2$ is lost due to the matching constraint). 
For boxes with $k>2$ components each, applying the same approach implies a commitment gap of $O(k!)$. We leave as an open question whether a better bound can be obtained using our techniques. The minimization variant of this problem has not been explored in prior research.

It is noteworthy that a common approach in the literature to bounding the {\bf commitment gap} in maximization settings involves reducing the problem to finding the optimal committing policy for a stochastic submodular maximization instance and applying the bound of~\cite{AN16} on the adaptivity gap for such problems~\citep{BK19,AJS20,B19}. This method is inherently global, as it requires efficient optimization over the entire instance once mapped to a stochastic submodular maximization framework. Moreover, the optimization scales with the number of committing policies and does not extend beyond the single-selection case.

{\bf Other extensions} of Pandora's Box include settings with combinatorial rewards \citep{OW15}; combinatorial costs \citep{BEFF23}; correlated values \citep{CGTTZ19, CGMT21, GT24}; or constraints on the order of inspection \citep{EHLM19, BFLL20,BW24}. We refer the reader to the survey by \cite{BC24-survey} for further discussion of Pandora's Box.

The {\bf weighing scale} problem has not been studied previously, although similar settings where the algorithm is provided with a \textit{budget} on the total number of weighings it can perform have been considered. In particular, \citet{HSS24} study a setting, where each alternative can be weighed against the \textit{median} (or any other fixed quantile) of its distribution (conditioned on any past weighings). \citet{HS23} also consider the setting where the alternatives are identically distributed and the decision-maker is allowed to weigh them against any threshold of their choosing (much like the weighing scale problem). In both of these works, using the weighing scale is free; the algorithm is provided with a budget on the number of weighings it can perform; the objective is to find the best alternative subject to the budget; and constant factor approximations are obtained.

\paragraph{Simultaneous Work.} In simultaneous and independent work,~\cite{BLW25} also study CICS in the maximization setting. Rather than approaching the problem through the study of the commitment gap, they consider an ex-ante relaxation of the optimal policy that allows them to efficiently obtain a non-committing approximation policy in settings where the underlying combinatorial constraint admits a prophet inequality; in particular, they obtain a $0.5$-approximation under matroid constraints. A key difference between our frameworks is that (i) our results and local approximation also apply to the minimization setting, and (ii) the bounds we obtain on the commitment gap also depend on the structure of the underlying MDPs, allowing for improved approximations in some cases (such as PB with Optional Inspection).

\subsection{Outline of the rest of the paper}



In~\Cref{sec:prelims}, we provide formal definitions for CICS and the commitment gap. Our amortization framework and bound on the optimal policy for the minimization setting are presented in~\Cref{sec:amortization}; corresponding results for the maximization setting are established in~\Cref{app:maxim}. The local approximation framework and its composition to bounds for the commitment gap are presented in~\Cref{sec:local_approx}. Subsequent sections are dedicated to bounding the commitment gap for different instantiations of CICS: PB with Partial Inspection (\Cref{sec:pvo}), Additive PB (\Cref{sec:additive_pb}), Weighing Scale Problem (\Cref{sec:ws}) and Optional Inspection PB (\Cref{sec:gvo}). 


\newpage
\section{Definitions}\label{sec:prelims}

We begin by defining costly information acquisition for a random variable as a Markov decision process (MDP). 
In a Costly Information MDP, states represent the information the algorithm has gained about the corresponding random variable. Accordingly, we associate each state with the conditional value distribution it represents. Formally:

\begin{definition}[\bf Costly Information MDP]
    A Costly Information MDP for a random variable $X$ is a tuple $\mc_X=(S, \sigma, A, c, \dist, V, T)$, where $S$ is a set of states, $\sigma\in S$ is the starting state, $A(\cdot)$ maps states to sets of actions, $c(\cdot)$ is a cost function mapping actions to costs, and $\dist$ is a transition matrix. For each pair of states $s,s'\in S$ and each action $a\in A(s)$, $\dist(s,a,s')\in [0,1]$ specifies the probability of transitioning to $s'$ upon taking action $a$ in state $s$; naturally, $\sum_{s'}\dist(s,a,s') = 1$ for all states $s\in S$ and actions $a\in A(s)$. 
    
    $V(\cdot)$ is a function mapping states to distributions over values. $V(\sigma)$ is the (unconditional) distribution of $X$ and $V(s)$ is the posterior distribution of $X$ conditioned on being in state $s\in S$. As such, $V$ satisfies the rules of conditional probability: for all states $s\in S$ and all actions $a\in A(s)$, we have $V(s) = \sum_{s'} \dist(s,a,s') V(s')$. We also write $v(s) := \expect{V(s)}$. Finally, $T\subseteq S$ is the set of terminal states. Terminal states have only one action available, called the ``accept'' action. This accept action comes at no cost; results in a value of $v(s)$ at terminal state $s\in T$; and terminates the MDP. In the special case where there is only one action (accept or advance) available at every state $s\in S$, we call $\mc_X$ a Costly Information Markov chain. When clear from the context, we will drop the subscript $X$.
\end{definition}


\paragraph{Assumptions.} Note that each costly information MDP $\mc_X$ is inherently acyclic (i.e. the underlying graph is a DAG), as each costly action further refines the algorithm's knowledge of the underlying random variable. For simplicity of exposition and without loss of generality, we assume that the state spaces, action sets, and the support of the random variable $X$ are finite, although our framework and results extend seamlessly to continuous settings. We will also assume a bounded horizon. In particular, there exists a constant $H$ such that any state reached after a sequence of $H$ actions is terminal.

\begin{definition}[\bf Costly Information Combinatorial Selection] 
    The Costly Information Combinatorial Selection problem (henceforth, CICS) is defined over a ground set of $n$ nonnegative random variables, $X_1, X_2, \cdots, X_n$; a costly information MDP $\mc_i := \mc_{X_i}=(S_i, \sigma_i, A_i, c_i, \dist_i, V_i, T_i)$ for each variable $X_i$; and a feasibility constraint $\feas\subseteq 2^{[n]}$. The constraint $\feas$ corresponds to an upwards closed set in the minimization version (henceforth, min-CICS), and to a downwards closed set in the maximization version (henceforth, max-CICS). We use $\vecM := (\mc_1,\cdots , \mc_n)$ to denote the set of MDPs and $\instance = (\vecM , \feas)$ to denote the CICS instance.
\end{definition}

\paragraph{General Policies for CICS.} A policy (a.k.a. algorithm) for CICS proceeds as follows. Let $S\subseteq [n]$ denote the set of indices of all terminated MDPs.
Let $s_i$ denote the current state of the MDP $\mc_i$ at any point of time during the process. Initially, $S=\emptyset$ and $s_i=\sigma_i$ for all $i\in [n]$. The algorithm chooses at every step an index $i\in [n]\setminus S$ corresponding to a non-terminated MDP $\mc_i$ and an action $a_i\in A_i(s_i)$. It then follows the action at a cost of $c_i(a_i)$. If $a_i$ is the accept action (i.e. $s_i\in T_i$ is a terminal state), it adds $i$ to $S$ and collects the value $v_i(s_i)$. Otherwise, it updates the state of $\mc_i$ to a new state drawn from the distribution $\dist(s_i, a_i, \cdot)$, while the states of all other MDPs $\mc_{i'}$ with $i'\ne i$ remain unchanged. Observe that the algorithm can make both of these choices -- the index of the MDP to move in and the action to take in that MDP -- adaptively, depending on the evolution of all $n$ MDPs in previous steps.

\begin{itemize}
    \item For min-CICS, the algorithm terminates as soon as $S\in \feas$. The objective of min-CICS is to find an algorithm with \textit{minimum total cost}, defined as the expectation (over the randomness of the algorithm and the underlying processes) of the total cost of all actions undertaken by the algorithm until termination \textit{plus} the values accrued from accept actions. 

   \item For max-CICS, the algorithm needs to ensure feasibility by selecting at every step indices $i\in [n]\setminus S$ such that $S\cup \{i\}\in\feas$. The objective of max-CICS is to find an algorithm with \textit{maximum utility}, defined as the expectation (over the randomness of the algorithm and the underlying processes) of the total value accrued from accept actions {\em minus} the total cost of all actions undertaken by the algorithm until termination. 
\end{itemize}

\noindent We use $\opt(\instance)$ to denote the cost/utility of the optimal policy for the CICS instance $\instance = (\vecM,\feas)$.

\paragraph{Committing Policies for CICS.} 
A commitment $\pi_i$ for the MDP $\mc_i$ is a mapping from every state $s_i\in S_i$ of $\mc_i$ to a distribution $\pi_i(s_i)$ over the actions in $A_i(s_i)$. A committing policy is then characterized by a tuple of commitments $\comms=(\pi_1,\cdots , \pi_n)$; at each step, the policy selects an MDP $\mc_i$ at current state $s_i$ and takes an action that is sampled from $\pi_i(s_i)$. 
Formally, for $i\in [n]$, let $\mc_i^{\pi_i}$ denote the Markov chain resulting from applying the commitment $\pi_i$ to MDP $\mc_i$. Then any policy for the instance $\instance_{|\comms}:= (\mc_1^{\pi_1},\cdots, \mc_n^{\pi_n},\feas)$ is a committing policy for $\instance$ consistent with the tuple $\comms$. Note that while committing policies must make all local decisions as dictated by $\comms$, the index of the MDP to move in can be selected adaptively: committing policies are therefore adaptive algorithms. 

We use $\comset{\mc_i}$ to denote the set of all possible commitments $\pi_i$ for the MDP $\mc_i$ and $\comset{\instance} = \prod_i \comset{\mc_i}$ to denote the set of all possible tuples of commitments $\comms$ for the CICS instance $\instance$. The commitment gap quantifies the performance loss of committing policies relative to the optimal policy.


\begin{definition}[\bf Commitment Gap]
    The commitment gap for any min-CICS (or max-CICS) instance $\instance = (\vecM ,\feas)$ is defined as 
    \[\cg(\instance) := \min_{\comms\in\comset{\instance}}\frac{\mathrm{OPT}(\instance_{|\comms})}{\mathrm{OPT}(\instance)}\geq 1 \;\;\;\;\Big(\text{or }\;\cg(\instance):=\max_{\comms\in\comset{\instance}}\frac{\mathrm{OPT}(\instance_{|\comms})}{\mathrm{OPT}(\instance)}\leq 1\Big).
    \]
\end{definition}

\paragraph{Matroid-min-CICS.} From now on, we focus primarily on matroid feasibility constraints: each MDP corresponds to an element of some known matroid $\matroid = ([n], \mathcal{I})$. For min-CICS, $\feas$ is the collection of all sets that contain a basis of $\matroid$; for max-CICS, $\feas$ contains all the independent sets of $\matroid$. Observe that single-element selection, where the algorithm's goal is to accept one MDP, is a special case of matroid selection. Following the work of~\citet{S17},
our results will apply to any feasibility constraint that admits an efficient {\em frugal approximation algorithm}. We describe this extension in~\Cref{app:combinatorial}. For conciseness, in the main body of the paper, we state our results for the minimization setting (matroid-min-CICS); the analogous framework and results for the maximization setting are discussed in Appendix \ref{app:maxim}.

\newpage

\section{An Amortization Framework}\label{sec:amortization}

In this section, we develop a novel cost amortization technique that will allow us to lower bound the cost of the optimal adaptive algorithm for matroid-min-CICS. We first establish our amortization for the special case of Markov chains (Section~\ref{sec:mc-amortization}). We then connect our amortization framework to the notion of optimality curves (Section~\ref{sec:optimality-curves}) and use this connection to extend our approach to general MDPs (Section~\ref{sec:mdp-amortization}). All proofs are deferred to~\Cref{app:amortization}, except for the proof of \Cref{lem:mdp-wf-char-new} which can be found at the end of this section.

\subsection{Amortized Surrogate Costs for Markov Chains}\label{sec:mc-amortization}
In this section we consider instances $\instance = (\vecM ,\feas)$ of the matroid-min-CICS where every MDP $\mc_i$ is a Markov chain. We begin with some notation. For a Markov chain $\mc=(S, \sigma, A, c, \dist, V, T)$ and a state $s\in S$, we use $c(s)$ to denote the cost of the unique action in $A(s)$. A {\em trajectory} $\tau$ in this Markov chain is a sequence of states that results from advancing the chain until it terminates. For a state $s\in S$, let $\traj(s)$ denote the set of all possible trajectories starting in $s$ and let $\traj :=\traj(\sigma)$. For state $s\in S$ and trajectory $\tau\in \traj(s)$, let $p(\tau)$ denote the probability that a random trajectory starting in the initial state $\sigma$ visits the state $s$ and then continues along $\tau$ until terminating. Let $p(s):=\sum_{\tau\in\traj(s)} p(\tau)$ denote the probability of visiting $s$. For $\tau\in\traj$ and $s\in\tau$, let $\tau_s$ denote the suffix of $\tau$ starting with $s$. Finally, let $v(\tau)$ be the value $v(t)$ of the terminal state $t$ that $\tau$ ends in.

We will now define an amortization of action costs in a Markov chain over the resulting (random) trajectories that follow. The intention is that in the amortized setting, the algorithm does not incur any action costs when the action is taken but pays the amortized cost of the instantiated trajectory {\em only if} it follows the trajectory all the way to termination. 

\begin{definition}[\bf Cost Amortization]\label{def:mc-amort}
A cost amortization of a Markov chain $\mc=(S, \sigma, A, c, \dist, V, T)$ is a non-negative vector $b=\{b_{s\tau}\}_{s\in S, \tau\in \traj(s)}$ with the property that $\sum_{\tau\in \traj(s)} p(\tau)b_{s\tau}= p(s)c(s)$ for all states $s\in S$. Based on this amortization, we define:
\begin{itemize}
    \item The amortized cost of a trajectory $\tau\in \traj$ as $\surr_b(\tau) := v(\tau) + \sum_{s\in\tau} b_{s\tau_s}$. 
    
    \item The surrogate cost of the Markov chain $\mc$ as the random variable $\surr_{\mc,b}$ that takes on value $\surr_b(\tau)$ for $\tau\in \traj$ with probability $p(\tau)$.
    
    \item The index of a state $s\in S$ of the Markov chain $\mc$ as $\ind_{\mc, b}(s):=\min_{\tau\in \traj: s\in\tau}\surr_b(\tau)$.
\end{itemize}
\end{definition}
Note that the ``cost shares'' $b_{s\tau}$ distributed by a state $s$ towards its resulting trajectories fully cover the cost of $s$'s action and no more. However, in the amortized setting, these cost shares are paid only when the  trajectory terminates, and go unpaid if the algorithm stops advancing the Markov chain midway through the trajectory. We therefore obtain the following lemma.
\begin{restatable}{lemma}{mclbtag}
\label{lem:MC-lb}
    Consider any instance $\instance = (\vecM ,\feas)$  of matroid-min-CICS over Markov chains and let $b_i$ be any cost amortization of $\mc_i$ with surrogate cost $\surr_i := \surr_{\mc_i, b_i}$ for all $i\in [n]$. Then, $\opt(\instance)\geq\expect{\min_{S\in\feas} \sum_{i\in S}\surr_i}$.
\end{restatable}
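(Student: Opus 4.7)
I will reduce the inequality to a per-chain bound and then sum. Fix an arbitrary algorithm $\alg$ for $\instance$ and let $S^*\subseteq[n]$ denote the (random) set of chains it accepts. For each $i$, let $s_0^{(i)}=\sigma_i,s_1^{(i)},\ldots$ be the states visited in $\mc_i$ and let $\tau_i$ be the number of advance actions $\alg$ takes on $\mc_i$, so that its contribution to the algorithm's total cost is $\sum_{j<\tau_i}c_i(s_j^{(i)})+v_i(s_{\tau_i}^{(i)})\mathbbm{1}[i\in S^*]$. I will show the per-chain bound
\[
\expect{\sum_{j<\tau_i}c_i(s_j^{(i)})+v_i(s_{\tau_i}^{(i)})\mathbbm{1}[i\in S^*]}\;\ge\;\expect{\mathbbm{1}[i\in S^*]\,\surr_i}.
\]
Summing over $i$ and using $S^*\in\feas$ almost surely then yields $\expect{\mathrm{cost}(\alg)}\ge\expect{\sum_{i\in S^*}\surr_i}\ge\expect{\min_{S\in\feas}\sum_{i\in S}\surr_i}$.

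\textbf{Per-chain argument.} To prove the per-chain bound I exploit the independence of chains: after conditioning on the randomness of every chain $\mc_j$ with $j\neq i$, the algorithm's interaction with $\mc_i$ becomes a function of $\mc_i$'s trajectory alone, and in particular $\tau_i$ becomes a stopping time with respect to the natural filtration $(\mathcal{F}_k)$ of $\mc_i$. I then extend the trajectory past $\tau_i$ (independently of $\alg$) into a full natural run terminating at $T_i^*\sim p_i(\cdot)$, so that $\surr_i=\surr_{b_i}(T_i^*)$. The amortization property $\sum_{t\in\ds_i(s)}p_i(t)b_{s,t}=p_i(s)c_i(s)$ translates into the conditional identity
\[
\expect{b_{s_k^{(i)},T_i^*}\bigm|\mathcal{F}_k}=\sum_{t\in\ds_i(s_k^{(i)})}\frac{p_i(t)}{p_i(s_k^{(i)})}\,b_{s_k^{(i)},t}=c_i\bigl(s_k^{(i)}\bigr),
\]
and because each indicator $\mathbbm{1}[j<\tau_i]$ is $\mathcal{F}_j$-measurable, applying the tower property term by term gives the key identity
\[
\expect{\sum_{j<\tau_i}c_i(s_j^{(i)})}=\expect{\sum_{j<\tau_i}b_{s_j^{(i)},T_i^*}}.
\]
On the event $i\in S^*$ the algorithm must have advanced $\mc_i$ all the way to a terminal, so $s_{\tau_i}^{(i)}=T_i^*$ and $\{s_j^{(i)}:j<\tau_i\}$ is precisely the ancestor set $\{s:T_i^*\in\ds_i(s)\}$; combined with $b\ge 0$, this gives the pathwise bound $\sum_{j<\tau_i}b_{s_j^{(i)},T_i^*}\ge\mathbbm{1}[i\in S^*]\sum_{s:T_i^*\in\ds_i(s)}b_{s,T_i^*}$. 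Adding the value payment $\expect{\mathbbm{1}[i\in S^*]v_i(s_{\tau_i}^{(i)})}=\expect{\mathbbm{1}[i\in S^*]v_i(T_i^*)}$ to both sides then delivers the per-chain bound.

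\textbf{Main obstacle.} The main subtlety is that the algorithm's accept/reject decision at a terminal can be correlated with the realized value of $T_i^*$, and hence with the amortization weights $b_{s,T_i^*}$; a clever algorithm can accept precisely on the realizations where $\sum b_{s,T_i^*}$ is large and the action cost $\sum c_i(s_j^{(i)})$ is small, so the bound can genuinely fail pathwise on a single trajectory. The amortization condition is exactly strong enough to rescue this in expectation via the tower identity above, and the independence across chains is what lets the per-chain bounds compose without loss into the global bound.
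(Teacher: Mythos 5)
Your proof is correct, and it establishes the same two ingredients as the paper's argument (action costs cover the amortized shares of accepted terminals in expectation; feasibility of the accepted set then gives the $\min_{S\in\feas}$ bound), but the bookkeeping is genuinely different. The paper works globally and per state: it introduces the events $I(s_i)$ (advance at $s_i$), $A(t_i)$ (accept $t_i$), $Q(\cdot)$ (state realized), and proves, for each state separately, $\prob{I(s_i)}c_i(s_i)\ge\sum_{t_i\in\ds(s_i)}\prob{A(t_i)}b_{s_it_i}$ via the conditional-probability comparison $\prob{A(t_i)\mid I(s_i)}\le\prob{Q(t_i)\mid Q(s_i)}$. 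You instead condition on all randomness outside chain $i$ so that $\tau_i$ becomes a stopping time (this mirrors what the paper does only later, in the MDP-level bound), extend the run past $\tau_i$ to realize $\surr_i$ explicitly as $\surr_{b_i}(T_i^*)$, and use the tower property to get the exact identity $\expect{\sum_{j<\tau_i}c_i(s_j^{(i)})}=\expect{\sum_{j<\tau_i}b_{s_j^{(i)},T_i^*}}$, then restrict to the acceptance event using $b\ge 0$. Your coupling has the side benefit of making explicit that $(\surr_1,\dots,\surr_n)$ has the intended independent joint law and that $\sum_{i\in S^*}\surr_i\ge\min_{S\in\feas}\sum_{i\in S}\surr_i$ holds pathwise, which the paper leaves implicit; the paper's per-state version has the benefit of isolating exactly where slack arises, which it then reuses to prove optimality of the water filling policy (Corollary A.1). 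Two small points to tidy: for $\tau_i$ to be a stopping time you must condition on the algorithm's internal randomness as well as the other chains (or, alternatively, skip the conditioning entirely and run the tower argument in the enlarged filtration, since chain $i$'s transitions are independent of everything else given its current state); and the identification of $\{s_j^{(i)}:j<\tau_i\}$ with the ancestor set of $T_i^*$ on the acceptance event uses the tree structure of the chain plus the fact that $b_{tt}=0$ at terminals (forced by $c(t)=0$), which is worth one line.
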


We will now exhibit a specific cost amortization and a corresponding algorithm that achieves the lower bound of~\Cref{lem:MC-lb}, proving optimality. The {\bf water filling cost amortization} is described algorithmically in a bottom-up fashion. We consider the states of the chain in reverse topological order, starting from the terminals up towards the root $\sigma$. Trajectories $\tau\in\traj(t)$ for terminal states $t\in T$ are singletons and do not carry cost shares. Consider a state $s$ such that the cost shares for all states reachable from it have been determined. The state distributes its total cost $c(s)$ across its downstream trajectories $\tau\in\traj(s)$, starting from those with the \textbf{lowest current total cost}, until the equation $\sum_{\tau\in \traj(s)} p(\tau)b_{s\tau}= p(s)c(s)$ is satisfied. Formally, let $\tau' = \tau\setminus s$ be the sequence of states in $\tau$ following $s$; its current downstream cost is $\surr(\tau') = v(\tau')+\sum_{s'\in\tau'} b_{s'\tau'_{s'}}$. We find the lowest water level $g$ such that the cost shares $b_{s\tau}:=(g-\surr(\tau\setminus s))^+$ satisfy the cost equation above. The new total cost of $\tau$ becomes $\surr(\tau)  = b_{s\tau}+\surr(\tau\setminus s) = \max\{g, \surr(\tau\setminus s)\}$. We continue in this manner until $\sigma$'s cost is amortized. 

We use $W^*_\mc$ to denote the water filling surrogate cost of a Markov chain $\mc$, and $\ind^*_\mc(s)$ to denote the water filling index of a state $s$ in $\mc$. Finally, we can describe an index-based policy based on the water filling amortization.

\begin{definition}[\bf Water Filling Index Policy]
The water filling index policy for an instance $\instance = (\vecM ,\feas)$ of matroid-min-CICS over Markov chains selects at every step the Markov chain $i^*=\argmin_{i\in\feas_S} \ind^*_i(s_i)$, where $s_i$ is the current state of Markov chain $\mc_i$; $S$ is the set of terminated (selected) Markov chains; and $\feas_S=\{i: \text{rank}(S\cup\{i\})>\text{rank}(S)\}$.
\end{definition}

Note that, by definition, the indices of states encountered on any trajectory weakly increase with each next action. The water filling amortization ensures that if $b_{s\tau}>0$ for some state $s\in S$ and $\tau\in \traj(s)$, then the index of all states in the downstream trajectory $\tau$ stays {\em equal to} the index of $s$. Since the policy described above always advances the Markov chain of minimum index, this implies that for any action taken by the policy, any downstream trajectory that ``owes'' a non-zero cost share to that action will terminate with acceptance if instantiated. In other words, all the amortized cost shares are paid in expectation, establishing the following optimality result:

\begin{restatable}{theorem}{mcopttag}
\label{thm:MC-opt}
   For any matroid-min-CICS instance $\instance = (\vecM ,\feas)$ over Markov chains, the expected cost of the water filling index policy is equal 
    to $\expect{\min_{S\in\feas} \sum_{i\in S}W^*_{\mc_i}}$. The policy is therefore optimal for $\instance$.
\end{restatable}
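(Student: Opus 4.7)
My plan is to combine the lower bound of Lemma~\ref{lem:MC-lb}, instantiated with the water filling amortization, with a matching upper bound on the expected cost of the water filling index policy. Since Lemma~\ref{lem:MC-lb} already gives that every algorithm costs at least $\expect{\min_{S\in\feas} \sum_{i\in S} W^*_{\mc_i}}$, it suffices to show that the water filling index policy meets this bound with equality; the ``therefore optimal'' conclusion will then follow automatically. I will split the upper bound into two independent claims: (a) the policy's expected cost equals $\expect{\sum_{i \in S} W^*_{\mc_i}}$ where $S$ is the random set it terminates with, and (b) the random set $S$ almost surely attains $\min_{S' \in \feas} \sum_{i \in S'} W^*_{\mc_i}$.

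The technical heart of the argument, and what I expect to be the main obstacle, is a \emph{flatness lemma} for the water filling amortization: whenever $b_{st} > 0$, every state $s'$ on the unique path from $s$ to $t$ satisfies $\ind^*_{\mc}(s') = \surr^*(t)$. I would prove this by induction on the reverse topological order in which water filling processes states, maintaining the invariant ``for every $t' \in \ds(s)$, the current cost share of $t'$ is at least the current cost share of $t$.'' When $s$ itself is processed, the procedure raises the cheapest downstream terminals (including $t$) to a common water level $g_s$, so the invariant holds immediately; for any ancestor $s^a$ processed subsequently, splitting on whether its water level lies above or below the current cost share of $t$ shows the invariant is preserved, while states off the ancestor chain of $s$ cannot touch $\ds(s)$ by the tree structure of $\mc$. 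The claimed lemma then follows because $\ds(s') \subseteq \ds(s)$ forces $\ind^*(s') \ge \ind^*(s) = \surr^*(t)$, while $t \in \ds(s')$ forces $\ind^*(s') \le \surr^*(t)$.

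With the flatness lemma in hand, I would prove claim (a) by a ``pay-at-acceptance'' accounting. Fix a state $s$ visited by the policy and $t \in \ds(s)$ with $b_{st} > 0$. The flatness lemma pins the chain's index at $\surr^*(t)$ along every state on the path from $s$ to $t$; since only the advancing chain's state evolves at each step while all other chains' indices stay frozen, that chain remains a minimum-index feasible chain (under any fixed tie-breaking rule) all the way until it terminates at $t$. Hence $\prob{\text{accept at }t \mid \text{visit }s} = p(t)/p(s)$, and summing against the amortization identity $\sum_{t \in \ds(s)} p(t) b_{st} = p(s) c(s)$ converts the expected action cost at $s$ into an equal expected amortized payment collected at an accepted terminal. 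Aggregating over all visited $s$ and adding the $v(t)$ contributions from accepted terminals gives $\expect{\text{cost}(\text{policy})} = \expect{\sum_{i \in S} W^*_{\mc_i}}$.

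For claim (b), I would invoke a standard matroid exchange argument. At the step the policy accepts chain $i$ at terminal $t_i^*$, the index of $i$ equals its realized surrogate cost $W^*_{\mc_i} = \surr^*(t_i^*)$; every other feasible chain $j \in \feas_S$ has current index $\ind^*_j(s_j) \ge W^*_{\mc_i}$ by the min-index rule, and $\ind^*_j(s_j) = \min_{t \in \ds(s_j)} \surr^*(t) \le \surr^*(t_j^*) = W^*_{\mc_j}$ trivially, so $W^*_{\mc_j} \ge W^*_{\mc_i}$. Therefore the policy accepts elements in non-decreasing order of realized surrogate cost among the rank-increasing feasible extensions, which is precisely the Kruskal greedy rule for a minimum-weight matroid basis, so $S$ almost surely coincides with the hindsight minimizer. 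Once the flatness lemma is established, both (a) and (b) reduce to relatively standard couplings and exchange arguments, so the water filling bookkeeping is really where the novelty lies.
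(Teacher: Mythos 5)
Your proposal is correct and takes essentially the same route as the paper: it also instantiates the Lemma~\ref{lem:MC-lb} lower bound with the water filling amortization and then shows the index policy meets it with equality by verifying two tightness conditions. Your flatness lemma together with the pay-at-acceptance accounting is exactly the paper's ``promise of payment'' property (your induction just spells out the one-line observation that a terminal receiving a positive cost share stays the minimum-cost terminal for the rest of the amortization), and your claim (b) is the paper's ``surrogate optimality'' via the matroid greedy argument.
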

To conclude this section, we will illustrate water filling through an example.
\begin{example}\label{ex:mdp}
Consider an MDP $\mc$ whose starting state is given a choice between following one of two Markov chains $\mc_1$ and $\mc_2$. Each Markov chain has a single costly action and results in a distribution over two distinct states, as illustrated in~\Cref{fig:mdp_example}. The action costs, terminal values, and instantiation probabilities are indicated in the figure. For each chain, the water filling amortization computes a unique index $g$ (here, $g_1=2$ and $g_2=1$), corresponding to the value for which the highlighted area equals the cost of the amortized action. Each trajectory leads to a unique terminal node, so we can associate cost shares with terminals. For each terminal $t$, the corresponding cost share is given by $(g-v(t))^+$; here, $b_{11} = 4/3$, $b_{21} = 1/2$ and $b_{12}=b_{22}=0$. Therefore, the surrogate cost $W^*_{\mc_1}$ is $2$ with probability $(3/4)$ and $4$ with probability $(1/4)$. Likewise, $W^*_{\mc_2}$ is $1$ with probability $(1/4)$ and $3$ with probability $(3/4)$. 
\end{example}
\begin{figure}[h!]
    \centering
    \includegraphics[width=0.9\textwidth]{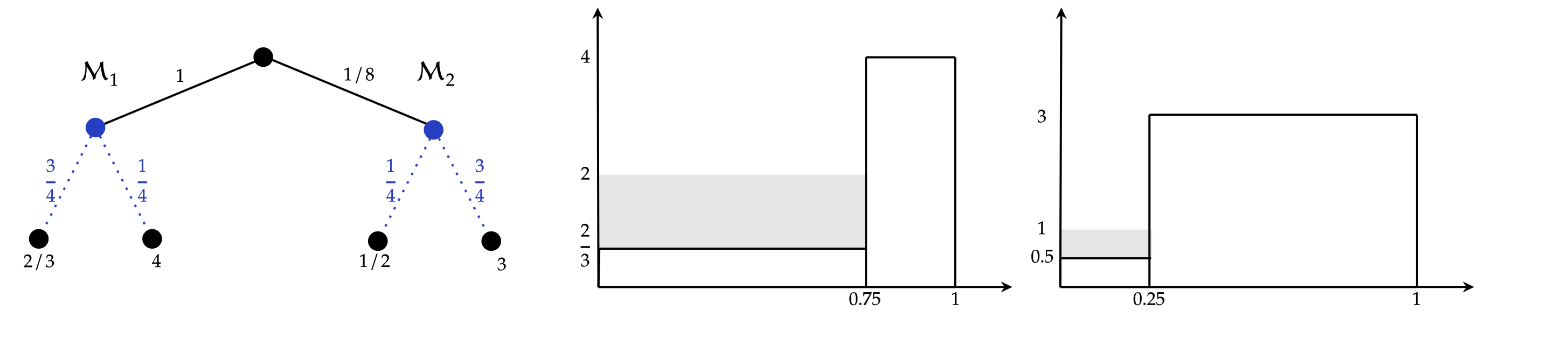}
    \caption{ The water filling amortization for Markov chains $\mc_1$ (left branch) and $\mc_2$ (right branch). The solid black edges denote costly actions (of costs $1$ and $1/8$) and the dashed blue edges denote random transitions to terminal states upon taking these actions, with the corresponding probabilities of transition. Each terminal state has a corresponding value, depicted in the bottom line. The graphs on the right display the water-filling operation for each chain with the $x$-axis representing probabilities of instantiation of the respective trajectories and the $y$-axis representing the terminal values for these trajectories. The area of the gray region equals the action cost and its height is the water level that determines cost shares. By definition of water-filling, cost shares are first transferred to the trajectories of minimum value.}
    \label{fig:mdp_example}
\end{figure}

\subsection{Local Games and Optimality Curves}\label{sec:optimality-curves}

We now connect the water filling amortization described above to the notion of optimality curves for MDPs, defined in the work of~\citet{W80} and its follow ups. We first define a ``local game'' for an MDP $\mc$.

\begin{definition}[\bf Local Game and Optimality Curve]\label{def:opt-curve}
    The local game $(\mc, y)$ is a single-selection min-CICS with two MDPs, one of which is the MDP $\mc$. The second MDP, a.k.a. the outside option, has a single terminal state with deterministic value $y$. A policy for the local game advances $\mc$ for some (zero or non-zero) number of steps and either accepts the deterministic option $y$ or the value from $\mc$. Let $\optf_\mc(y)$ denote the expected cost of the optimal policy for the local game $(\mc, y)$. We refer to the function $\optf_\mc$ as the optimality curve of the MDP $\mc$.
\end{definition}

\noindent
The surrogate cost of the outside option $y$ (under any cost amortization) is simply $y$; as a consequence of Theorem~\ref{thm:MC-opt}, we obtain the following characterization:
\begin{corollary}\label{corollary:mc-curve}
    For any Markov chain $\mc$ and any $y\in \real$, it holds that
    $\optf_\mc(y) = \expect{\min(y, W^*_\mc)}.$ 
\end{corollary}

Observe from this characterization that the CDF of the surrogate cost $W^*_{\mc}$ can be derived\footnote{In particular, let $H$ and $h$ denote the CDF and PDF of $W^*_\mc$ respectively, then we have $\optf_\mc(y) = y(1-H(y)) +\int_0^y zh(z)dz$, from which the statement follows by differentiating both sides.} from the optimality curve as $1-\dd{y}{\optf_\mc(y)}$.
Inspired by this connection, we can extend the definition of water filling surrogate costs to arbitrary MDPs:


\begin{definition}[\bf Water Filling Surrogate Costs]
\label{def:mdp-wf}
    Let $\mc$ be an MDP with optimality curve $\optf_\mc$. The water filling surrogate cost for $\mc$ is the random variable $W^*_\mc$ generated from the CDF $1-\dd{y}{\optf_\mc(y)}$. That is, $W^*_\mc$ is the random variable satisfying $\optf_\mc(y) = \expect{\min(y,W^*_\mc)}$ for all $y\in\real$.
\end{definition}

\noindent Our eventual goal is to prove the following analog of Theorem~\ref{thm:MC-opt} for general CICS over MDPs.
\begin{restatable}{theorem}{mdplbtag}
\label{thm:MDP-lb}
    In any instance $\instance = (\vecM ,\feas)$  of matroid-min-CICS, $\opt(\instance)\geq \expect{\min_{S\in\feas}\sum_{i\in S} W^*_{\mc_i}}$.
\end{restatable}
The quantity $\expect{\min_i W^*_{\mc_i}}$ is called the Whittle integral in the context of single selection over infinite-horizon discounted-reward MDPs. It was first suggested as a lower bound on the optimal cost by \cite{W80} and proved in that context by \cite{BS13} and \cite{HMR15}. Unfortunately, \Cref{def:mdp-wf} does not give us insight into how the surrogate costs $W^*_{\mc_i}$ relate to the cost of an adaptive algorithm for matroid-min-CICS. 
We will instead prove the theorem by relating optimality curves to the water filling procedure.

\subsection{Water Filling and Surrogate Costs for General MDPs}\label{sec:mdp-amortization}

In this section we will prove~\Cref{thm:MDP-lb}. In the case of Markov chains, the water filling surrogate cost of a randomly sampled trajectory recovers the optimality curve of the Markov chain. The challenge to extending this argument for general MDPs is that each sequence of actions creates a different distribution over the trajectories and consequently also over the surrogate costs, as illustrated in \Cref{ex:mdp}. 

Our main observation is that if we are willing to cover action costs only partially through the amortization, then we can define an appropriate amortization for {\em every possible} deterministic commitment in $\mc$ such that the distribution over surrogate costs generated by that commitment is independent of the commitment itself. In other words, {\em every} sequence of actions in the MDP generates the {\em same} distribution over surrogate costs. This distribution is a fundamental property of the MDP itself, and not of the algorithm operating on the MDP. We formalize this property through the following lemma.

\begin{lemma}
\label{lem:mdp-wf-char-new}
    For any MDP $\mc=(S, \sigma, A, c, \dist, V, T)$ and any deterministic commitment $\pi\in\comset{\mc}$, generating a Markov chain $\mc^\pi$ with states $S_\pi\subseteq S$, realizable trajectories $\traj_\pi\subseteq\traj$ and a distribution $p_\pi$ over trajectories and reachable states, there exists an amortized cost function $\surr^\pi:\traj_\pi \mapsto \Delta(\real)$ mapping trajectories to distributions over costs and a non-negative cost sharing vector $b^\pi = \{b^\pi_{s\tau}\}_{s\in S_\pi, \tau\in \traj_\pi(s)}$, such that the following properties hold:
    \begin{enumerate}
    \smallskip
    \item \textbf{\emph{Cost Sharing:}} the amortized cost of a trajectory pays for its own acceptance value and the cost shares it sends to upstream states. For all $\tau\in \traj_\pi$, it holds that $\expect{\surr^\pi(\tau)} = v(\tau) + \sum_{s\in \tau} b^\pi_{s\tau_s}$.
    \smallskip

    \item \textbf{\emph{Cost Dominance:}} the cost shares received by any state pay towards its action cost, but do not overpay. For all $s\in S_\pi$, it holds that $\sum_{\tau\in \traj_\pi(s)} p_\pi(\tau) b^\pi_{s\tau}\le p_\pi(s)c(\pi(s))$.
    \smallskip

    \item \textbf{\emph{Action Independence:}} sampling a trajectory $\tau\sim p_\pi$ and then sampling from the amortized cost distribution $\surr^\pi(\tau)$ generates a random surrogate cost for the MDP. This random variable is distributed identically to the water filling surrogate cost $W^*_\mc$. 
\end{enumerate}
\end{lemma}

As mentioned earlier, the amortized costs $\surr^\pi(\tau)$ defined by the lemma are necessarily different from the water filling costs in the respective Markov chains $\mc^\pi$ as they must satisfy action independence. 
Observe that in the absence of the last condition connecting the surrogate costs $\surr^\pi(\tau)$ to the water filling costs $W^*_\mc$ as defined in~\Cref{def:mdp-wf}, the lemma is trivial to satisfy. Indeed we can define all of the cost shares to be $0$ and still satisfy cost dominance and action independence. The key contribution of the lemma is to show that we can achieve these properties while recovering the ``optimal'' surrogate costs as determined by the optimality curve $\optf_\mc$. \Cref{fig:mdp-ex-2} illustrates how the lemma applies to the MDP in \Cref{ex:mdp}. 

Once we establish the lemma, the proof of~\Cref{thm:MDP-lb} follows in much the same way as the proof of~\Cref{lem:MC-lb}. At a high level, we can view the algorithm's choices {\em in hindsight} as corresponding to some deterministic commitment, and relate the algorithm's cost to the surrogate costs as instantiated through that commitment; the details are deferred to the appendix. We conclude this section with the proof of Lemma~\ref{lem:mdp-wf-char-new}. A casual reader can safely skip this proof and proceed to \Cref{sec:local_approx}.

\begin{figure}[h!]
\centering
\includegraphics[width=0.95\linewidth,valign=m]{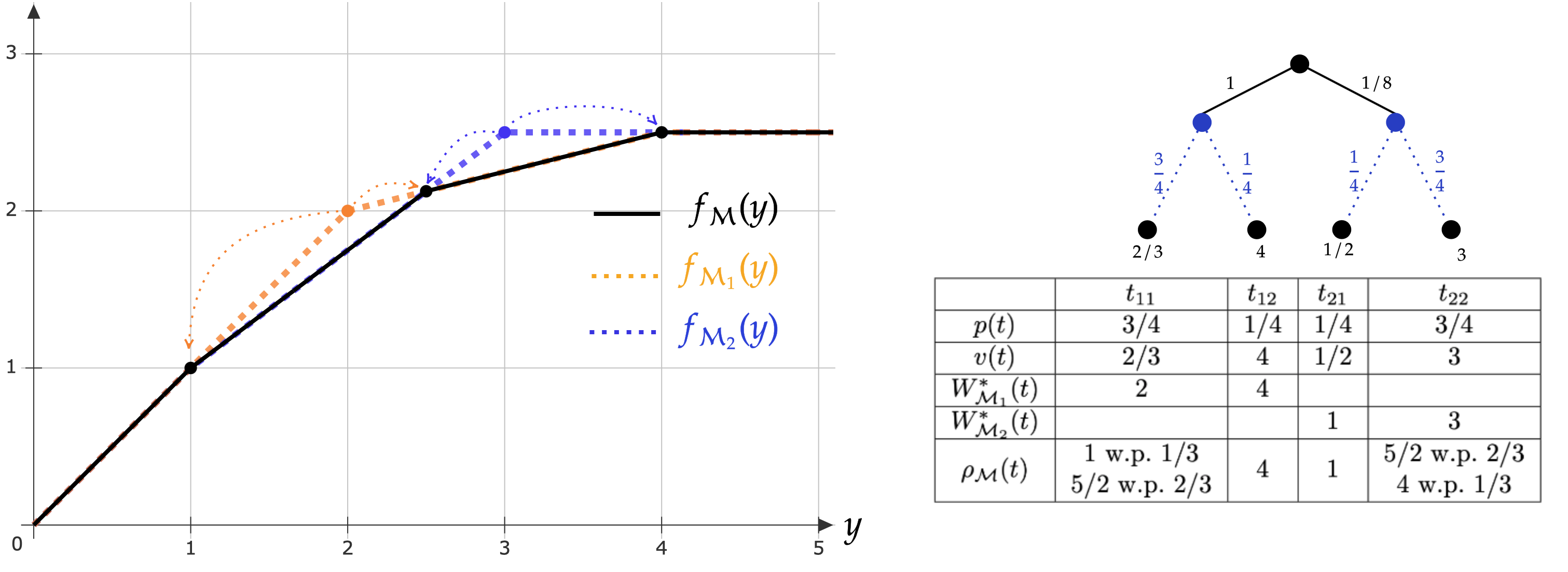} 
\caption{We consider the same setting as in~\Cref{ex:mdp}. The optimality curve for MDP $\mc$ is obtained by the minimum of the optimality curves for $\mc_1$ (orange dashed line) and $\mc_2$ (blue dashed line). Each action/commitment leads to two possible trajectories. The table on the right displays the four trajectories along with their respective probabilities of instantiation, values, and amortized costs under their respective commitments (commitment 1 for the first two columns and commitment 2 for the last two columns). Observe that resulting surrogate costs have different distributions. The last row in the table defines an alternate set of (random) surrogate costs $\surr_\mc(t)$. Observe that for each $t$, we have $\operatorname{E}[\surr_\mc(t)]\le v(t)$. This implies cost dominance. 
The distribution of $\surr_\mc(t)$ when $t$ is picked according to commitment 1 is $1$ with probability $(3/4)\cdot(1/3) = (1/4)$, $2.5$ with probability $(3/4)\cdot(2/3) = (1/2)$ and $4$ with probability $(1/4)\cdot(1) = (1/4)$. We obtain the same distribution if $t$ is picked according to commitment 2. The same distribution also arises from the solid black optimality curve in the graph on the left, exhibiting the action independence property.
\label{fig:mdp-ex-2}}
\end{figure}

\paragraph{Proof of~\Cref{lem:mdp-wf-char-new}.}
Fix any MDP $\mc=(S, \sigma, A, c, \dist, V, T)$. The proof follows by induction on the horizon $H$ of the MDP. If $H=1$, then $\mc$ is a singleton MDP where $\sigma\in T$ is the unique state, $\pi_\sigma$ is the unique (trivial) commitment and $\traj_{\pi_\sigma} = \{\{\sigma\}\}$. Therefore, the lemma is immediately satisfied by $\surr^{\pi_\sigma}(\{\sigma\}):=v(\sigma)$ and $b^{\pi_{\sigma}}_{\sigma\{\sigma\}}:=0$. We now assume that the conditions of the lemma hold for all MDPs of horizon up to $H$ and extend it to MDPs of horizon $H+1$.

Let $\mc$ be of horizon $H+1$ and let $A(\sigma)=\{a_1,\cdots , a_k\}$ be the set of actions available at the starting state $\sigma$. Let $R_j$ denote the set of all states that can be reached directly by taking action $a_j$ on $\sigma$; that is, $R_j:=\{s\in S: \dist(\sigma, a_j, s) > 0\}$. For each $s\in R_j$, we use $\mc_s$ to denote the subprocesses of $\mc$ that starts in state $s$ and $W^*_{\mc_s}$ to denote the water filling surrogate cost of $\mc_s$. We begin by computing the water filling amortization of each action $a_j$.

Let $Z_j$ denote the random variable that first draws a state $s$ from the distribution $\dist(\sigma, a_j, \cdot)$ and then draws a value from the distribution $W^*_{\mc_s}$. Let $g_j$ be the solution to the equation
\[c(a_j) + \expect{Z_j} = \expect{\max\{g_j, Z_j\}}.\]
Then, $\hat{Z}_j := \max\{g_j, Z_j\}$ is the water filling surrogate cost of the action $a_j$. We observe that for any $y\ge g_j$, we have $c(a_j)+\expect{\min\{y,Z_j\}}=\expect{\min\{y, \hat{Z}_j\}}$, and because $\hat{Z}_j$ is always at least $g_j$, we get for all $y$:
\begin{align}
\label{eq:lem2-1}
\min\{y, c(a_j)+\expect{\min\{y,Z_j\}}\}=\expect{\min\{y, \hat{Z}_j\}}.
\end{align}
\noindent
We can now write the optimality curve of $\mc$ as:
\begin{align*}
\optf_\mc(y) &= \min\left\{y, \min_{j\in [k]} \bigg(c(a_j) + \sum_{s\in R_j}\dist(\sigma,a_j,s)\cdot f_{\mc_s}(y)\bigg)\right\}\\
    &= \min\left\{ y , \min_{j\in [k]} \bigg(c(a_j) + \sum_{s\in R_j}\dist(\sigma,a_j,s)\cdot \expect{\min\{y, W^*_{\mc_s}\}}\bigg)\right\} \\
    &= \min\left\{y, \min_{j\in [k]}\bigg( c(a_j) + \expect{\min\{y, Z_j\}}\bigg)\right\} 
    = \min\left\{y, \min_{j\in [k]} \expect{\min\bigg(y, \hat{Z}_j\bigg)}\right\}.
\end{align*}
Here the first equation follows from noting that in the local game $(\mc, y)$, the algorithm can either choose the outside option $y$ or takes one of the actions $a_j$ from $\sigma$ and then proceeds optimally in the game $(\mc_{s}, y)$ where $s$ is instantiated from $R_j$. The second equation follows by the definition of $W^*_{\mc_s}$; the third by the definition of $Z_j$; and the fourth by Equation~\eqref{eq:lem2-1}. 

Recall that $\optf_\mc(y) = \expect{\min\{y, W^*_\mc\}}$, and so, we conclude that \[\expect{\min\{y, W^*_\mc\}}\le \expect{\min\{y, \hat{Z}_j\}}\] for all $j\in [k]$. This implies that the random variable $\hat{Z}_j$ second-order stochastically dominates the random variable $W^*_\mc$, allowing us to use the following lemma. 
\begin{restatable}{lemma}{stdomtag}
\label{lemma:sdom}
\textrm{\normalfont (Second Order Stochastic Dominance.)} 
Let $X, Z$ be discrete random variables that satisfy the property $\expect{\min\{y, X\}} \leq \expect{\min\{y, Z\}}$ for all $y\in\real$. There exists a mapping $m: \mathrm{supp}(Z) \mapsto \Delta(\mathrm{supp}(X))$ from the support of $Z$ to distributions over the support of $X$ such that:
\begin{enumerate}
    \item $X$ is obtained by sampling from $m(z)$ for a randomly sampled $z\sim Z$.
    \item For all $z\in\mathrm{support}(Z)$, it holds that $\expect{m(z)}\leq z$.
\end{enumerate}
\end{restatable}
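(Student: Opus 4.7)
The plan is to reformulate the existence of $m$ as the feasibility of a transportation-style linear program and establish this feasibility via Farkas' lemma. I introduce non-negative variables $m_{zx}$ for $(z,x) \in \mathrm{supp}(Z) \times \mathrm{supp}(X)$ subject to: (a) $\sum_x m_{zx} = \prob{Z=z}$ for each $z$, so $m(\cdot)$ is a valid conditional distribution; (b) $\sum_z m_{zx} = \prob{X=x}$ for each $x$, so sampling $z \sim Z$ and then from $m(z)$ recovers $X$; and (c) $\sum_x x\, m_{zx} \leq z \prob{Z=z}$ for each $z$, encoding $\expect{m(z)} \leq z$. By Farkas' lemma, this LP is feasible iff for every $\beta : \mathrm{supp}(X)\to\real$ and every $\gamma : \mathrm{supp}(Z)\to[0,\infty)$, one has $\expect{\beta(X)} \geq \sum_z \prob{Z=z}\cdot\min_{x}[\beta_x + \gamma_z(x-z)]$.

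To establish this inequality, for any such $\beta,\gamma$ I define $\nu(z) := \sup_{\gamma' \geq 0}\min_x[\beta_x + \gamma'(x-z)]$. Since for each fixed $\gamma'\geq 0$ the inner minimum is affine in $z$ with slope $-\gamma' \leq 0$, $\nu$ is a supremum of non-increasing affine functions and is therefore convex and non-increasing; equivalently, $-\nu$ is non-decreasing and concave. Plugging $x = x_0$ into the inner minimum shows $\nu(x_0) \leq \beta_{x_0}$ for every $x_0 \in \mathrm{supp}(X)$. The key step is to extend the hypothesis from comparators $\min\{y,\cdot\}$ to all non-decreasing concave functions and then apply it to $-\nu$: this yields $\expect{-\nu(X)} \leq \expect{-\nu(Z)}$, equivalently $\expect{\nu(X)} \geq \expect{\nu(Z)}$. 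Chaining then gives $\expect{\beta(X)} \geq \expect{\nu(X)} \geq \expect{\nu(Z)} \geq \sum_z \prob{Z=z}\cdot\min_x[\beta_x + \gamma_z(x-z)]$, which is precisely the Farkas condition.

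For the extension step: on the finite joint support $\mathrm{supp}(X)\cup\mathrm{supp}(Z) = \{v_1 < \cdots < v_N\}$, every non-decreasing concave function admits a decomposition $c_0 + \sum_{j \geq 1} c_j\min\{v_j,\cdot\}$ with $c_j \geq 0$ for $j \geq 1$, so $\expect{\phi(X)} \leq \expect{\phi(Z)}$ follows by linearity of expectation from the hypothesis. A minor well-definedness check is that $\nu$ is finite on $\mathrm{supp}(Z)$: applying the hypothesis with $y$ slightly below $\min\mathrm{supp}(X)$ forces $\mathrm{supp}(Z) \subseteq [\min\mathrm{supp}(X),\infty)$, which keeps the relevant supremum finite. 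The main obstacle is the LP-duality manipulation itself: although the dual witness may pick a separate slope $\gamma_z$ at each $z$, taking the pointwise supremum produces a single convex non-increasing function $\nu$ (equivalently, a single non-decreasing concave $-\nu$) that can be compared directly against $X$ and $Z$ via the extended hypothesis.
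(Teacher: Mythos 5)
Your proof is correct, but it takes a genuinely different route from the paper's. The paper argues constructively: it orders the supports, proceeds by induction on the largest index up to which $X$ and $Z$ agree, and at each step builds an intermediate variable $Z'\preceq Z$ by cutting the concave curve $f_Z(y)=\expect{\min\{y,Z\}}$ with a line and splitting a single support point of $Z$ into two points via an explicit ``gadget'' (a mean-decreasing split), finally chaining these steps through transitivity of $\preceq$; this mirrors the water-filling amortization and is what gives the authors intuition for how $W^*_\mc$ arises. You instead encode the desired mapping $m$ as a transportation LP with row marginals $\prob{Z=z}$, column marginals $\prob{X=x}$, and the conditional-mean constraints $\sum_x x\,m_{zx}\le z\,\prob{Z=z}$, and verify the Farkas condition: eliminating the equality multipliers correctly yields the stated inequality, your $\nu(z)=\sup_{\gamma'\ge 0}\min_x[\beta_x+\gamma'(x-z)]$ is indeed a finite, convex, non-increasing envelope with $\nu(x_0)\le\beta_{x_0}$ on $\supp{X}$ (finiteness on $\supp{Z}$ via your observation that the hypothesis forces $Z\ge\min\supp{X}$ a.s.), and the extension of the hypothesis from the comparators $\min\{y,\cdot\}$ to all non-decreasing concave functions via the non-negative decomposition $c_0+\sum_j c_j\min\{v_j,\cdot\}$ is the standard characterization of the increasing-concave (SSD) order, so the chain $\expect{\beta(X)}\ge\expect{\nu(X)}\ge\expect{\nu(Z)}\ge\sum_z\prob{Z=z}\min_x[\beta_x+\gamma_z(x-z)]$ closes the argument. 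In effect you prove a finite Strassen-type coupling theorem by duality, which is conceptually cleaner and more portable (e.g., it adapts immediately to the maximization variant by negation), whereas the paper's proof is longer but constructive and explicitly exhibits the coupling used elsewhere in the framework; note also that your LP/Farkas step and the decomposition tacitly use finiteness of the supports, which matches the paper's standing assumption but would require a limiting argument for general discrete variables.
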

\noindent
We note that the lemma is standard (see for example~\citep{S65,FS16}) but we provide a constructive proof in~\Cref{app:amortization} for the sake of intuition and completeness. 
We apply~\Cref{lemma:sdom} to all tuples $(W^*_\mc , \hat{Z}_j)$ to obtain mappings $m_j(\cdot)$.

We are finally ready to define the amortized cost functions $\surr^\pi$ and the cost sharing vectors $b^\pi$. Fix any deterministic commitment $\pi\in\comset{\mc}$ and let $j = \pi(\sigma)\in [k]$ be the fixed action that $\pi$ takes at state $\sigma$. For each state $s\in R_j$, we use $\pi|s\in\comset{\mc_s}$ to denote the commitment $\pi$ after we transition to state $s$; note that this is also a deterministic commitment. By definition, each MDP $\mc_s$ has horizon up to $H$ and thus by the induction hypothesis, it admits a mapping $\surr^{\pi|\sigma}(\cdot)$ and a non-negative cost sharing vector $b^{\pi|\sigma}$, satisfying the properties of the lemma. Now, consider any trajectory $\tau\in\traj_\pi$ and observe that $\tau = \{\sigma, \tau_s\}$ for some $s\in R_j$ and some $\tau_s\in\traj_{\pi}(s)$. We define
\[\surr^\pi(\tau) := m_j(\max\{g_j, \surr^{\pi|s}(\tau_s)\})\] 
and
\[b^\pi_{\sigma \tau} := \expect{\surr^\pi(\tau)}-\expect{\surr^{\pi|s}(\tau_s)}\]
and for all other $s\in S_\pi\setminus\{\sigma\}$ and $\tau\in\traj_\pi(s)$, we use the same cost share $b^\pi_{s\tau} = b^{\pi|s}_{s\tau}$ that was used in $\mc_s$.

\medskip

\begin{itemize}   
    \item {\bf Cost sharing:} This holds trivially by the definition of the cost shares in the starting state $\sigma$, the fact that we don't change the cost shares in any other state $s\neq \sigma$, the fact that $v(\tau)=v(\tau_s)$ for any suffix $\tau_s$ of $\tau$ and the induction hypothesis.
    \smallskip
    \item {\bf Cost dominance:} By the induction hypothesis and the fact that we maintain the cost shares of every state $s\neq \sigma$, we only need to prove the inequality for $\sigma$. We have
    \begin{align*}
        \sum_{\tau \in \traj_\pi}p_\pi(\tau)b^\pi_{\sigma \tau} &=
        \sum_{s\in R_j, \tau_s\in \traj_\pi(s)} p_\pi(\{\sigma, \tau_s\})(\expect{\surr^\pi(\{\sigma,\tau_s\})}-\expect{\surr^{\pi|s}(\tau_s)})\\
        & = \sum_{\tau \in \traj_\pi}p^\pi(\tau)\expect{\surr^\pi (\tau)} - \sum_{s\in R_j} \dist(\sigma, a_j, s) \sum_{\tau_s\in \traj_{\pi}(s)} p^{\pi|s}(\tau_s)\expect{\surr^{\pi|s}(\tau_s)}\\
        & = \expect{W^*_\mc} - \sum_{s\in R_j} \dist(\sigma, a_j, s) \expect{W^*_{\mc_s}}\\
        & \le \expect{\hat{Z}_j} - \expect{Z_j} = c(a_j).
    \end{align*}
     and the statement follows by noting $p^\pi(\sigma)=1$. Here the first equation follows from the definition of $b^\pi_{\sigma \tau}$ and a decomposition of the trajectories $\tau$; the second just rewrites the terms separately; the third is by the action independence of $\surr^\pi$ (proven below), and by the induction hypothesis applied to $\mc_s$ similarly; the fourth uses property (2) in Lemma~\ref{lemma:sdom} for the first term and the definition of $Z_j$ for the second term; and the last equality follows from the definition of $\hat{Z}_j$. 
    \smallskip
    \item {\bf Action independence:} Drawing a trajectory $\tau\sim p_\pi$ is equivalent to first drawing a state $s\in R_j$ from the distribution $\dist(\sigma, a_j,\cdot)$ and then drawing a trajectory $\tau_s\in\traj_\pi(s)$ from $p_{\pi|s}$. Consider drawing $\tau$ in this manner and then sampling from the distribution $\surr^{\pi|s}(\tau_s)$. By the induction hypothesis and the definition of $Z_j$, this provides us with a sample drawn from $Z_j$. Then, $\max\{g_j,\surr^{\pi|s}(\tau_s)\}$ with $\tau$ drawn in this manner corresponds to an instantiation of $\hat{Z}_j$, and $m_j$ applied to that instantiation results in an instantiation of $W^*_\mc$ by the definition of $m_j$ and property (1) in Lemma~\ref{lemma:sdom}. 
\end{itemize}
This concludes the proof of Lemma~\ref{lem:mdp-wf-char-new}.

\newpage

\section{Local Approximation and Composition Theorems}\label{sec:local_approx}

Recall that the commitment gap of a CICS instance is defined as the ratio between the cost of the optimal committing policy and the cost of the optimal (non-committing) policy. Let $\instance=(\vecM , \feas)$ be an instance of matroid-min-CICS and let $\pi_i$ be a commitment for $\mc_i\in\vecM$. We will establish bounds on the commitment gap achieved by $\comms=(\pi_1, \cdots, \pi_n)$ by quantifying for each $i\in [n]$ the performance of $\pi_i$ in the local game $(\mc_i, y)$. 

Recall that the performance of an optimal algorithm for the local game $(\mc, y)$ is described by the optimality curve $\optf_\mc(y)$. The performance of the commitment $\pi$ in the same game is given by $\optf_{\mc^\pi}(y)$. \cite{DS24} define local approximation by relating these two quantities:\footnote{Their definition is provided in the context of optional-inspection PB, and we extend the same definition to general MDPs.}
\begin{definition}[\textbf{Local Approximation}]\label{def:local-approx}
    Let $\mc$ be any MDP and let $\alpha\geq 1$. We say that a commitment $\pi\in\comset{\mc}$ is an $\alpha$-local approximation for $\mc$ if 
    $\optf_{\mc^\pi}(\alpha y)\leq \alpha\cdot \optf_\mc(y)$ for all $y\in\real$.
\end{definition}
Restating the inequality in terms of the water-filling costs provides a more intuitive definition as well as a composition theorem. Theorems~\ref{thm:MC-opt} and~\ref{thm:MDP-lb} together imply that we can bound the commitment gap as:
\[\cg(\instance) \leq \min_{(\pi_1, \cdots, \pi_n)\in\comset{\instance}}\frac{\expect{\min_{S\in\feas}\sum_{i\in S} W^*_{\mc^{\pi_i}_i}}}{\expect{\min_{S\in\feas}\sum_{i\in S} W^*_{\mc_i}}}.\]
On the other hand, using~\Cref{def:mdp-wf}, an $\alpha$-local approximation for $\pi_i$ with respect to $\mc_i$ can be restated as a second-order stochastic dominance condition: 
\begin{equation}
\label{eq:why-horizontal-scaling}
\expect{\min\{y, W^*_{\mc_i^{\pi_i}}\}} \leq \expect{\min\{y, \alpha W^*_{\mc_i}\}} \;\;\forall y\in\real.
\end{equation}

\noindent Combining these inequalities provides the following composition result.
\begin{restatable}{theorem}{loccomptag}\label{thm:la-comp}
Let $\instance = (\vecM , \feas)$ be any instance of matroid-min-CICS, where each constituent MDP $\mc_i$ admits an $\alpha$-local approximation under some commitment $\pi_i\in\comset{\mc_i}$. Then, $\cg (\instance)\leq \alpha$.
\end{restatable}

\noindent The proof of~\Cref{thm:la-comp} is straightforward from our previous discussion and is presented in~\Cref{app:local_approx}. The theorem also provides a recipe for designing approximation algorithms for CICS: if we can efficiently identify a commitment $\pi_i$ for each MDP $\mc_i$ that achieves an $\alpha$-local approximation, then we can simply restrict each MDP $\mc_i$ to the corresponding Markov chain $\mc_i^{\pi_i}$ and run the (efficient) water filling index policy; the resulting committing policy is guaranteed to be an $\alpha$-approximation to the unrestricted optimum.

\paragraph{Geometric Intuition for Local Approximation.}
It is worth emphasizing that $\alpha$-local approximation is \emph{not equivalent} to simply achieving a standard $\alpha$-approximation in the local game. Slightly rearranging the condition in \cref{def:local-approx}, we have that for commitment $\pi$ to be an $\alpha$-local approximation, we require for all~$y \in \bbR$,
\[
    f_{\mc^\pi}(y) \leq \alpha f_\mc(y/\alpha)
    .
\]
In contrast, for commitment $\pi$ to be a standard $\alpha$-approximation for the local game, we require only
\[
    f_{\mc^\pi}(y) \leq \alpha f_\mc(y)
    ,
\]
the difference being $y$ instead of $y/\alpha$ on the right-hand side.
As illustrated in \cref{fig:standard-vs-local}, we can understand both of the above conditions as saying that the graph of $f_{\mc^\pi}(y)$ needs to lie below a geometrically scaled version of the graph of $f_\mc(y)$. However, the scaling used is different for the two conditions.
\* For standard $\alpha$-approximation, we vertically scale up $f_\mc(y)$ by a factor of~$\alpha$.
\* For $\alpha$-local approximation, we vertically \emph{and horizontally} scale up $f_\mc(y)$ by a factor of~$\alpha$.
\*/
The additional horizontal scaling in local approximation is critical in proving \cref{eq:why-horizontal-scaling} and thereby obtaining \cref{thm:la-comp}. Standard $\alpha$-approximation implies only a weaker version of \cref{eq:why-horizontal-scaling} with the $y$ on the right-hand side replaced by $\alpha y$. Working through the proof of \cref{thm:la-comp} with this change, one would eventually obtain a commitment gap bound of $\alpha^n$, scaling exponentially\footnote{\Cref{app:example} exhibits an explicit example of this exponential scaling.} in the number of MDPs~$n$. Using the right local approximation notion is thus critical for obtaining approximation guarantees that do not depend on~$n$.

\begin{figure}
    \centering%
    \newcommand{\drawApproxGraph}[5]{%
        \begin{tikzpicture}[scale=0.3]
            \node[align=center] at (7, 10) {\itshape#5};
            \draw[<->] (0, 7) -- (0, 0) -- (14, 0) node[right] {$y$};
            \draw[ultra thick, #3] (0, 0) -- (2, 2) to[out=25, in=190] (8, 3.75) node[right] {$f_\mc(y)$}; 
            \ifblank{#1}{}{
                \begin{scope}
                    \ifstrequal{#1}{1}{
                        \newcommand{\overAlpha}{}
                        \newcommand{\nodePlace}{right}
                    }{
                        \newcommand{\overAlpha}{/\alpha}
                        \newcommand{\nodePlace}{above left}
                    }
                    \begin{scope}[xscale=#1, yscale=#2]
                        \draw[ultra thick, #4] (0, 0) -- (2, 2) to[out=25, in=190] (8, 3.75) node[\nodePlace] {$\alpha f_\mc(y \overAlpha)$}; 
                    \end{scope}
                    \end{scope}
            }
        \end{tikzpicture}%
        \ignorespaces%
    }%
    \drawApproxGraph{}{}{teal}{}{Original Cost Curve}
    \hfill
    \drawApproxGraph{1}{1.7}{teal!30}{purple}{Standard Approximation:\\vertical scaling}
    \hfill
    \drawApproxGraph{1.7}{1.7}{teal!30}{purple}{Local Approximation:\\diagonal scaling}
    \caption{For an MDP $\mc$ with a given optimality curve (left), we contrast two approximation notions: standard $\alpha$-approximation (center) vs. $\alpha$-local approximation (right). In both cases, achieving the approximation requires designing a commitment whose optimality curve lies below a scaled version of the optimal cost (purple curves). For standard $\alpha$-approximation, the curve is scaled up vertically by a factor of~$\alpha$. For $\alpha$-local approximation, the curve is scaled up \emph{diagonally}, i.e. both vertically and horizontally, by a factor of~$\alpha$.}
    \label{fig:standard-vs-local}
\end{figure}

\paragraph{Establishing Local Approximation.} We note that in some cases, the surrogate costs of an MDP might not admit a simple structure, making second-order stochastic dominance conditions difficult to establish. We can alternatively establish local approximation through a stronger but conceptually easier relationship between the surrogate costs: namely, that $\alpha W^*_\mc$ first-order stochastically dominates $W^*_{\mc_\pi}$. Formally, we define the following alternate notion of approximation, where for a quantile $q\in [0,1]$ and a random variable $X$, we let $X(q)$ denote the $q$th quantile value of the r.v.: $X(q) = \inf\{x: Pr[X\le x]\ge q\}$.

\begin{definition}[\textbf{Pointwise  Approximation}]\label{def:pw-approx}
    Let $\mc$ be any MDP and let $\alpha\geq 1$. We say that a commitment $\pi\in\comset{\mc}$ is an $\alpha$-pointwise approximation for $\mc$ if $W^*_{\mc^\pi}(q)\le \alpha\cdot W^*_{\mc}(q)$ for all $q\in [0,1]$.
\end{definition}

\noindent
The following is immediate from the fact that first-order stochastic dominance implies second-order stochastic dominance, but the converse need not hold.

\begin{restatable}{fact}{pwapproxtag}\label{lemma:pw-to-local}
    If $\pi\in\comset{\mc}$ is an $\alpha$-pointwise approximation for $\mc$, it also an $\alpha$-local approximation for $\mc$.
\end{restatable}

Finally, we note that by definition we have $\optf_\mc(y) = \min_{\pi\in\comset{\mc}} \optf_{\mc^\pi}(y)$. Moreover, because randomized commitments generate optimality curves as averages over those of deterministic commitments, we can write $\optf_\mc(y)$ as the minimum over the optimality curves of deterministic commitments. Thus, we can obtain local or pointwise approximation guarantees by showing that the commitment $\pi$ satisfies the desired definition against \textit{all other deterministic commitments} $\pi'\in\comset{\mc}$. This simple observation can prove essential when establishing these conditions, as the water-filling surrogate costs of Markov chains have a much more intuitive structure (\Cref{lem:MC-lb}) than the surrogate costs of MDPs (\Cref{lem:mdp-wf-char-new}). For example, while the surrogate costs of the MDPs corresponding to Additive PBs (\Cref{sec:additive_pb}) are very complicated, the surrogate costs of committing policies for the same setting admit an exceptionally simple form that can be exploited to prove a pointwise approximation.

\paragraph{Beyond Local Approximation.}
For one of our applications, namely the maximization version of Pandora's box with optional inspection (\cref{sec:gvo}), local approximation turns out to be too stringent of a condition. As such, we introduce a slight weakening of local approximation, which we call \emph{semilocal approximation} (\cref{def:semilocal_approx}), that is tailored to this application. Roughly speaking, local approximation requires purely multiplicative suboptimality, while semilocal approximation allows for some additive suboptimality, too. See \cref{sec:gvo} for details.

\newpage

\section{Pandora's Box with Partial Inspection (Minimization)} \label{sec:pvo}

In classical Pandora's Box, the algorithm pays some cost $c^o$ to open a box and observe its value, and is then allowed to select the box. We will now study the minimization version of the generalization called
Pandora's Box with Partial Inspection (henceforth, PBPI) where the algorithm can additionally ``peek'' into the box at a smaller cost $c^p<c^o$ and learn its value. If upon obtaining this information, the algorithm wants to select the box, it must still open the box at a cost of $c^o$ before accepting it. This presents a choice: in some cases it may be better to open the box outright, while in others it is better to pay the smaller peeking cost to potentially avoid paying the opening cost later.

Formally, we consider an instance with $n$ partial inspection boxes (henceforth, PI-boxes) $\{\pbox_i\}_{i=1}^n$, 
where each box $\pbox_i = (\dist_i, c^o_i, c^p_i)$ is characterized by a distribution $\dist_i$ over values; an opening cost $c^o_i> 0$; and a peeking cost $c^p_i\in (0, c^o_i]$\footnote{Since we can only select an opened box, if $c^p_i\geq c^o_i$, then we can safely exclude the peeking action and the box reduces to a classical Pandora's Box.}. Each PI box $\pbox_i$ can be expressed as a Costly Information MDP with two possible actions, peeking and opening, where we can interpret the accept action after peeking as incurring a cost of $c^o$. An instance of PBPI corresponds to an instance of min-CICS over the corresponding MDPs. A pictorial representation of the different states of a box and the underlying MDP is shown in~\Cref{fig:pvo-states}.

\begin{figure}[h!]
    \centering
    \includegraphics[width=0.9\textwidth]{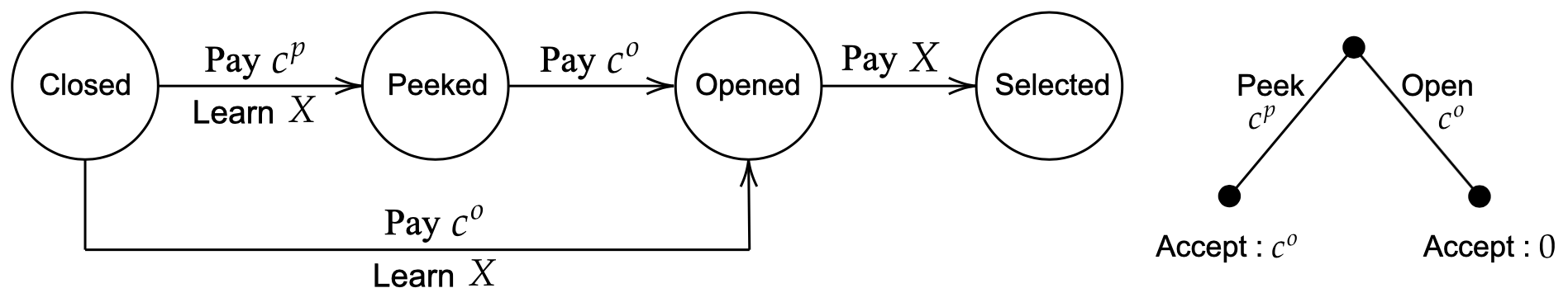}
    \caption{In PBPI, a peeking box $\pbox = (\dist , c^o, c^p)$ is initially closed. In order to learn the value realization realization $X\sim\dist$, the decision maker can either peek into the box (at a cost of $c^p$) or open it (at a cost of $c^o$). To accept the box, the decision maker must first open it and then pay its (now known) value $X$.}
    \label{fig:pvo-states}
\end{figure}





\noindent A committing policy for PBPI needs to decide in advance (perhaps randomly) whether each box $\pbox_i$ will be directly opened or peeked into and then (potentially) opened. Our main result in this section is the following:

\begin{theorem}\label{thm:pvo-approximation}
The commitment gap of matroid-PBPI is at most $\sqrt{2}$.
\end{theorem}

In Section~\ref{sec:pvo-bounds} we show a lower bound of $16/15$ for the commitment gap of matroid-PBPI (\Cref{example:PBPI-adaptivity-gap}) and a trivial upper bound of $\phi\approx 1.618$ (\Cref{claim:pvo-global-approx}). We then prove~\Cref{thm:pvo-approximation} by showing that each PI box $\pbox$ admits a $\sqrt{2}\approx 1.414$ local approximation  (\Cref{lemma:pvo-local-approx}) and employing~\Cref{thm:la-comp}; we note that the commitment achieving this guarantee decides whether each box will be directly opened or peeked before opened \textit{deterministically}, based on the parameters of the box, and can be efficiently computed.

\subsection{Lower and Upper Bounds on the Commitment Gap of PBPI}\label{sec:pvo-bounds}

We begin by showing that that there are simple instances where all committing policies are sub-optimal up to a constant factor.

\begin{example}\label{example:PBPI-adaptivity-gap}
    Consider an instance of single-selection PBPI over two boxes. The first box has an opening cost of $1$, a peeking cost of $\frac{1}{4}$ and its random value is $0$ with probability $\frac{1}{2}$ and $2$ otherwise. The second box has both an opening and peeking cost of $0$, and its random value is $2$ with probability $\frac{1}{2}$ and $\infty$ otherwise.

    \begin{itemize}
        \item Since opening the second box is free, we can assume without loss that all policies start by opening it and observing its value, call it $y$. The optimal policy will simply open and accept the first box if $y=\infty$. But if $y=2$, it can peek into the first box and only open it if it contains a value of $0$. The expected cost of this algorithm is $15/8$.

        \item Now consider any policy that commits to opening or peeking into box $1$ {\em before} observing $y$. If it commits to opening the first box, it accepts the value of this box regardless of $y$, as box $1$ always has a value smaller than $y$. The expected cost of this policy is $2$. If it commits to peeking into the first box, then at $y=\infty$ it incurs a cost of $9/4$ due to having to pay the extra peeking cost; and at $y=2$ it incurs an expected cost of $7/4$ by opening the first box only if it contains a value of $0$. It's net cost is again $2$.
    \end{itemize}
    \noindent 
    Since both deterministic committing policies have an expected cost of $2$, so does any randomized commitment. Consequently, the expected cost of the optimal adaptive policy is strictly smaller than that of the optimal committing policy.
\end{example}

\Cref{example:PBPI-adaptivity-gap} illustrates that the commitment gap of matroid-PBPI at least $(16/15)$. On the positive side, it is easy to obtain an upper bound of $2$. In particular, consider the policy that commits to peeking all boxes. This policy can mimic the optimal one as follows. Whenever the optimal algorithm peeks, so does this committing policy. Whenever the optimal algorithm opens without peeking, the committing policy peeks and then opens; on a box with costs $c^p_i$ and $c^o_i$, this policy pays $c^p_i+c^o_i$ or at most twice the amount $c^o_i$ paid by the optimal algorithm. In fact, we can further refine this argument by choosing the action we commit to more carefully: there always exists a simple commitment under which we can achieve a $\phi\approx 1.618$ approximation to the optimal adaptive policy. The proof is presented in~\Cref{app:pvo}.

\begin{restatable}{lemma}{lemmaphiapprox}\label{claim:pvo-global-approx}
    Consider any instance of matroid-PBPI and partition the $n$ boxes into two sets \[O:=\left\{i\in [n] : \frac{c_i^o}{c_i^p} \leq 1 + \frac{c_i^p}{c_i^o}\right\}\] and $P=[n]\setminus O$. The policy that commits to directly opening the boxes in $O$ and peeking before opening the boxes in $P$ achieves a $\phi$-approximation to the optimal (non-committing) policy.
\end{restatable}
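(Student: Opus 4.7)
The plan is to reduce the global guarantee to a per-box local-approximation check and then invoke Corollary~\ref{cor:composition}. For a PI-box $\pbox = (\dist, c^o, c^p)$ with $X \sim \dist$, write the two ``branch'' costs of the local game as
\begin{align*}
g_o(y) := c^o + \expect{\min(X,y)}, \qquad g_p(y) := c^p + \expect{\min(X + c^o, y)},
\end{align*}
so that $f_\pbox(y) = \min\{y, g_o(y), g_p(y)\}$, the open-commitment curve is $f_\pbox^o(y) = \min\{y, g_o(y)\}$, and the peek-commitment curve is $f_\pbox^p(y) = \min\{y, g_p(y)\}$. The partition condition $c^o/c^p \le 1 + c^p/c^o$ rearranges to $(c^o)^2 \le c^o c^p + (c^p)^2$, equivalent to the two convenient inequalities $c^o \le \phi c^p$ and $c^o + c^p \le \phi c^o$ (using $1 + 1/\phi = \phi$). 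Each of these is used in one of the two cases below.

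For $i \in O$ I would show that committing to ``open'' achieves $\phi$-local approximation, i.e.\ $f_\pbox^o(\phi y) \le \phi f_\pbox(y)$ for every $y$, by branching on which of the three terms attains $f_\pbox(y)$. The outside-option case $f_\pbox(y) = y$ is immediate because $f_\pbox^o(\phi y) \le \phi y$; the within-case $f_\pbox(y) = g_o(y)$ follows termwise from the scaling bound $\min(X, \phi y) \le \phi \min(X, y)$; and the cross-case $f_\pbox(y) = g_p(y)$ combines $c^o \le \phi c^p$ with the pointwise inequality $\min(X, \phi y) \le \phi \min(X + c^o, y)$ (verified by splitting on whether $X \le y$). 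Symmetrically, for $i \in P$ I would prove $f_\pbox^p(\phi y) \le \phi f_\pbox(y)$. The within-case $f_\pbox(y) = g_p(y)$ again follows from a single scaling inequality $\min(X + c^o, \phi y) \le \phi \min(X + c^o, y)$, while the cross-case $f_\pbox(y) = g_o(y)$ uses the decomposition $\min(X + c^o, \phi y) \le \min(X, \phi y) + c^o$ to isolate a scaled copy of $\expect{\min(X, y)}$, then absorbs the leftover constant $c^p + c^o$ into $\phi c^o$ via $c^o + c^p \le \phi c^o$.

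The only genuinely nontrivial steps are the two cross-cases, where the committed policy uses one action while the local MDP would have preferred the other; these are exactly the cases in which the threshold defining the partition becomes tight, and $\phi$ arises as the smallest constant under which both cross-case inequalities hold simultaneously. Once $\phi$-local approximation is established for every box in $O$ and in $P$, Corollary~\ref{cor:composition} applied to Algorithm~\ref{alg:meta} with this choice of commitments immediately yields a $\phi$-approximation to the optimal adaptive policy for $\instance$.
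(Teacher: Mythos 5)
Your proposal is correct, and the per-box inequalities check out: for $i\in O$ the cross-case follows from $c^o\le \phi c^p$ (which is exactly the partition condition, since $c^o/c^p\le 1+c^p/c^o \iff c^p/c^o\ge 1/\phi$) together with the pointwise bound $\min(X,\phi y)\le \phi\min(X+c^o,y)$, and for $i\in P$ the cross-case follows from $\min(X+c^o,\phi y)\le \min(X,\phi y)+c^o$ and $c^o+c^p\le\phi c^o$; the within- and outside-option cases are handled by the scaling bound $\min(a,\phi y)\le\phi\min(a,y)$ and $f^{o/p}_\pbox(z)\le z$. This is, however, a genuinely different route from the paper's proof. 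The paper argues globally: since peeking and opening reveal the same information, the committing policy can exactly mimic the optimal adaptive policy's decision tree, and each of its actions is charged to the corresponding optimal action with per-box ratio at most $\min\{c_i^o/c_i^p,\,(c_i^p+c_i^o)/c_i^o\}\le\phi$; this argument never touches the value distributions and works for any feasibility constraint under which the mimicking policy remains feasible, which is why the paper presents it as the distribution-free ``global argument'' baseline. Your route instead establishes the stronger statement that this distribution-independent commitment is already a $\phi$-local approximation for every box, and then invokes \cref{cor:composition}; this keeps everything inside the local-approximation framework (and foreshadows how Section~\ref{sec:pvo-local-approx} improves $\phi$ to $\sqrt2$ by choosing the commitment using $g^p$), but it inherits the composition theorem's restriction to matroid (or frugal) constraints, whereas the paper's charging argument is constraint-agnostic. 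Both yield the claimed $\phi$-approximation for matroid-PBPI, just via different witnesses: your witness is the water-filling index policy of \cref{alg:meta} under the stated commitment, the paper's is the mimicking policy itself.
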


We note that by using a \textit{global argument} such as the one above, i.e. an argument where we charge each action of the committing policy directly to the optimal adaptive policy, one cannot improve on this upper bound of $\phi$ (for example, consider a setting where all the boxes have $c^p=1$ and $c^o=\phi$). In order to achieve a better guarantee, we would need to leverage our knowledge of the value distributions. In the next section, we achieve this by establishing local approximation guarantees for PBPI.

\subsection{Local Approximation Guarantees for PBPI}\label{sec:pvo-local-approx}

Given a PI box $\pbox= (\dist , c^o , c^p)$, let $g^p$ denote the water filling (a.k.a. Gittins) index of the policy that commits to peeking. Equivalently, $g^p$ is the solution to the equation $c^p = \expectt{X\sim\dist}{(g^p-X-c^o)^+}$. We call $g^p$ the \textbf{peeking index} of the box.


\begin{lemma}\label{lemma:pvo-local-approx}
    Let $\pbox = (\dist, c^o, c^p)$ be a PI box with peeking index $g^p$. Let $\pi$ commit to the opening action whenever \[\frac{c^o}{c^p}\cdot \left(1-\frac{c^o}{g^p}\right) \leq 1+ \min\left(\frac{c^p}{c^o},\frac{c^o}{g^p}\right)\]
    and to the peeking action otherwise. Then, $\pi$ is a $\sqrt{2}$-local approximation to $\pbox$. 
\end{lemma}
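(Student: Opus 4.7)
The plan is to establish the local approximation guarantee by directly analyzing the optimality curves of the two candidate commitments: the policy $\pi^o$ that directly opens the box, and the policy $\pi^p$ that peeks before (inevitably) opening. Applying the water-filling amortization of~\Cref{def:water-filling-index-policy} bottom-up to each resulting Markov chain yields explicit surrogate costs
\begin{equation*}
W^*_{\mc^{\pi^o}} = \max(g^o, X), \qquad W^*_{\mc^{\pi^p}} = \max(g^p, X + c^o),
\end{equation*}
where $g^o$ solves $c^o = \expectt{X\sim\dist}{(g^o - X)^+}$; the peek cost $c^p$ is amortized on top of the already open-inflated terminal costs $X + c^o$. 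Both optimality curves $\optf_{\mc^{\pi^o}}$ and $\optf_{\mc^{\pi^p}}$ are concave through the origin, and $\optf_\mc = \min(\optf_{\mc^{\pi^o}}, \optf_{\mc^{\pi^p}})$; subadditivity therefore gives $\optf_{\mc^\pi}(\sqrt{2}y) \leq \sqrt{2}\optf_{\mc^\pi}(y)$ for either $\pi$, making the local approximation inequality automatic at any $y$ where $\optf_\mc(y)$ agrees with the chosen commitment.

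The real work is in the adversarial case $\optf_\mc(y) = \optf_{\mc^{\bar\pi}}(y) < \optf_{\mc^\pi}(y)$, where $\bar\pi$ is the commitment not chosen. I would define the gap $D(y) := \sqrt{2}\optf_{\mc^{\bar\pi}}(y) - \optf_{\mc^\pi}(\sqrt{2}y)$ and verify $D(y) \geq 0$ for the $\pi$ selected by the lemma. Partitioning the $y$-axis into four regions according to whether $y$ exceeds $g^{\bar\pi}$ and whether $\sqrt{2}y$ exceeds $g^\pi$, both curves reduce to elementary closed forms and $D'(y)$ becomes $\sqrt{2}$ times a difference of two tail probabilities of $X$, so $D$ is monotone on each region. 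The only region where $D$ can become negative is the ``bad'' one in which the alternative curve has already bent ($y > g^{\bar\pi}$) while the committed one has not ($\sqrt{2}y \leq g^\pi$); this region is non-empty precisely when $g^\pi > \sqrt{2}g^{\bar\pi}$, and within it $D$ is non-increasing from $D(g^{\bar\pi}) = 0$ down to its minimum at the boundary $y = g^\pi/\sqrt{2}$.

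For commit-to-open the bad region is $(g^p, g^o/\sqrt{2}]$ and $D$'s minimum is attained at $y = g^o/\sqrt{2}$; evaluating $D$ there yields the tight geometric constraint
\begin{equation*}
g^o \leq \sqrt{2}\bigl(c^p + \expect{\min(g^o/\sqrt{2},\, X + c^o)}\bigr).
\end{equation*}
To rewrite this purely in terms of $(c^o, c^p, g^p)$, I would combine the water-filling identity $c^o = \expect{(g^o - X)^+}$ with the peek identity $c^p = \expect{(g^p - c^o - X)^+}$ and use concavity of $y\mapsto \expect{\min(y, X)}$ to eliminate $g^o$. Careful algebra then shows that this sufficient condition becomes exactly $\frac{c^o}{c^p}\bigl(1 - \frac{c^o}{g^p}\bigr) \leq 1 + \min\bigl(\frac{c^p}{c^o},\, \frac{c^o}{g^p}\bigr)$, with the two arguments of the $\min$ on the right corresponding to the two sub-regimes $\sqrt{2}g^p \leq g^o$ and $\sqrt{2}g^p > g^o$. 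A fully symmetric analysis at the corresponding minimizer for commit-to-peek establishes that whenever this condition fails, the analog of $D$ for $\pi^p$ is non-negative everywhere; in both branches the policy specified by the lemma is therefore a $\sqrt{2}$-local approximation.

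The main obstacle I foresee is the algebraic bookkeeping in this last step: since $g^o$ is defined only implicitly by the water-filling identity, translating the tight geometric condition into the closed-form inequality stated in the lemma requires simultaneously invoking both water-filling identities together with the concavity of $y \mapsto \expect{\min(y, X)}$, and carefully identifying which of the two sub-regimes ($\sqrt{2}g^p \leq g^o$ versus $\sqrt{2}g^p > g^o$) corresponds to which branch of the $\min$ on the right-hand side of the lemma's condition.
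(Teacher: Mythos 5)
Your setup — the surrogate costs $\max(g^o,X)$ and $\max(g^p,X+c^o)$, the reduction via concavity to the region where the optimal curve is given by the \emph{other} commitment, and the monotone-deficit argument for the open commitment — matches the paper's route. But the quantitative core has a genuine gap. For commit-to-open, your own monotonicity observation defeats the plan: on the ``bad'' interval $(g^p, g^o/\sqrt{2}]$ the deficit is $D(y)=\sqrt{2}\,(f^p_\pbox(y)-y)$, which equals $0$ at $y=g^p$ and has derivative $-\sqrt{2}\,F(y-c^o)<0$ there (positivity of $c^p$ forces $F(g^p-c^o)>0$), so $D$ is \emph{strictly negative} throughout the interior whenever the interval is nonempty. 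Consequently your ``tight geometric constraint'' $g^o \le \sqrt{2}\,(c^p+\expect{\min(g^o/\sqrt{2},\,X+c^o)})$ can never hold non-vacuously, and no algebra can turn it into the lemma's condition; what the analysis actually shows is that commit-to-open is a $\sqrt{2}$-local approximation essentially iff $g^o\le \sqrt{2}\,g^p$. Connecting the lemma's closed-form condition to that fact requires the separate bound $g^o/g^p \le \frac{c^o}{c^p}\bigl(1-\frac{c^o}{g^p}\bigr)$ (proved in the paper via convexity of $h(z)=\expect{(z-X)^+}$) together with the final optimization over $g^p$ at fixed $\lambda=c^p/c^o$ showing that the minimum of the two per-action guarantees never exceeds $\sqrt{2}$ — steps absent from your proposal.

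The ``fully symmetric'' treatment of commit-to-peek is also structurally wrong. Since $g^p<g^o$ in the nontrivial regime, your criterion $g^\pi>\sqrt{2}\,g^{\bar\pi}$ never triggers for peek, so your argument would conclude that peek is \emph{always} a $\sqrt{2}$-local approximation, making the lemma's case distinction superfluous — which is false. The reason is that the peek deficit $\sqrt{2}\,f^o_\pbox(y)-f^p_\pbox(\sqrt{2}y)$ on $y\ge\tau$ is not monotone on the regions you delineate: its derivative is proportional to $F(\sqrt{2}y-c^o)-F(y)$, whose sign flips at $y=(\sqrt{2}+1)c^o$, a point determined by the opening cost (the $+c^o$ shift in the peek curve), not by the indices. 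Controlling this interior minimum is exactly what produces the peek guarantee $1+\min\bigl(\frac{c^p}{c^o},\frac{c^o}{g^p}\bigr)$ in the paper — and it is from these two bounds on the \emph{peek} commitment, not from sub-regimes of the open analysis, that the two branches of the $\min$ in the lemma's condition arise.
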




\noindent
The rest of this section is devoted to proving Lemma~\ref{lemma:pvo-local-approx}. We first note the following characterization:
\begin{definition}[\textbf{Optimality Curves for PBPI}]
    The optimality curve of a PI box $\pbox = (\dist, c^o, c^p)$ for cost minimization is given by
    \[\optf_{\pbox}(y):= \min\{ y, \optf^o_\pbox(y), \optf^p_\pbox(y)\}\]
    where $\optf^o_\pbox(y):= c^o + \expectt{X\sim\dist}{\min\{y,X\}}$ is the optimality curve of the policy that commits to opening the box and $\optf^p_\pbox (y):= c^p + \expectt{X\sim\dist}{\min\{y, X+c^o\}}$ is the optimality curve of the policy that commits to peeking. We also define the \textbf{opening index} of the box, $g^o$, as the water filling index of the opening policy, equivalently, the solution to the equation $c^o = \expectt{X\sim\dist}{(g^o-X)^+}$. We depict these curves and indices in~\Cref{fig:pvo-curves}.
\end{definition}

\begin{figure}[h!]
    \centering
    \includegraphics[width=0.9\textwidth]{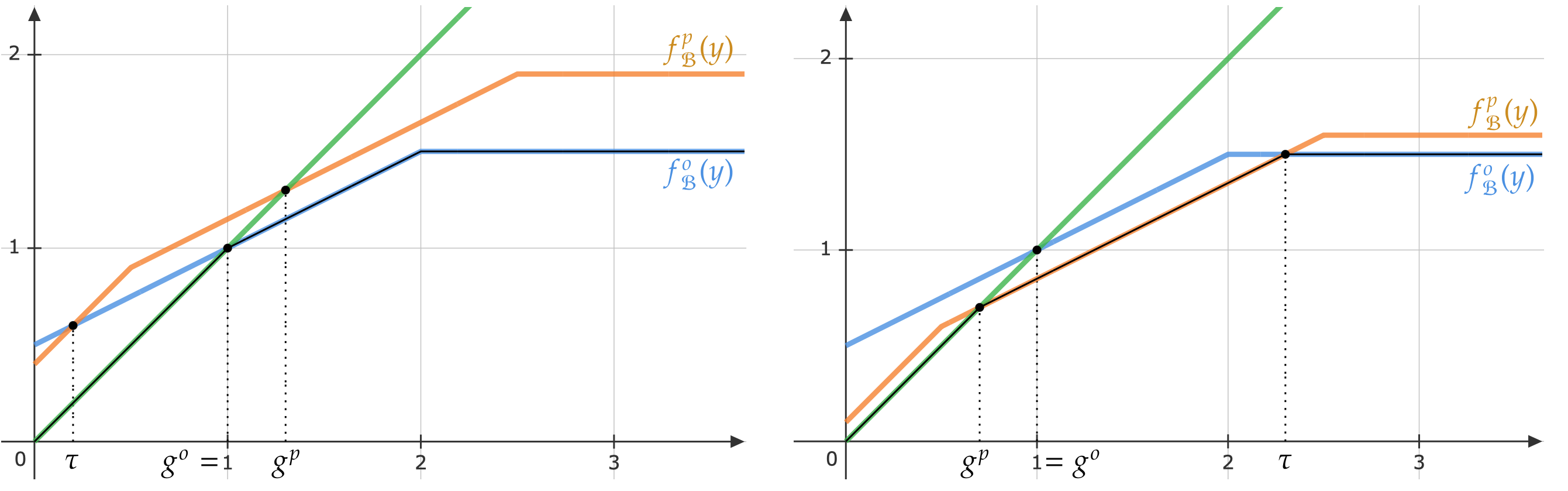} 
    
    \caption{The optimality curves for a box $\pbox$ with opening cost $c^o=0.5$, value $X=0$ with probability $0.5$ and $X=2$ with probability $0.5$ and peeking cost $c^p=0.4$ (left) and $c^p=0.1$ (right). The optimality curve of $\pbox$ is given by the minimum of the three curves. Observe that the curves of the opening and peeking action intersect at a unique point $\tau$ and that unless $g^p<g^o$, the opening action dominates the peeking action.}%
    \label{fig:pvo-curves}%
\end{figure}

\Cref{lemma:pvo-local-approx} is a consequence of the following claims, establishing local approximation guarantees for the opening and peeking actions respectively.

\begin{restatable}{claim}{pvoclone}\label{claim:pvo-open}
    For all $y\in\real$, it holds that $\min\{y, f^o_\pbox(y)\} \leq \alpha\cdot f_{\pbox}(\frac{y}{\alpha})$ for $\alpha = \frac{c^o}{c^p}\cdot \big(1-\frac{c^o}{g^p}\big)$.
\end{restatable}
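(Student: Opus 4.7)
The plan is to substitute $z = y/\alpha$, reducing the claim to the inequality $\min\{\alpha z, f^o_\pbox(\alpha z)\} \leq \alpha f_\pbox(z)$ for every $z \geq 0$, and then split into three regimes determined by the breakpoints $g^p$ and $\tau$ of $f_\pbox$. The regime $z \leq g^p$ is immediate since then $f_\pbox(z) = z$, so the RHS equals $\alpha z$ which dominates the LHS trivially. The regime $z > \tau$ has $f_\pbox(z) = f^o_\pbox(z)$, and the target $f^o_\pbox(\alpha z) \leq \alpha f^o_\pbox(z)$ follows from the pointwise bound $\min(\alpha z, X) \leq \alpha\min(z, X)$ together with $c^o \leq \alpha c^o$, valid since we may assume $\alpha \geq 1$ (if $\alpha < 1$ then opening strictly dominates peeking in the relevant parameter range and the commitment is already $1$-locally optimal by a direct comparison of optimality curves).

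The interesting case is $g^p < z \leq \tau$, where $f_\pbox(z) = f^p_\pbox(z)$. The first key step is a geometric lemma: $\alpha g^p \geq g^o$. Once established, this forces $\alpha z > g^o$, so the LHS collapses to $f^o_\pbox(\alpha z)$ and the target reduces to $f^o_\pbox(\alpha z) \leq \alpha f^p_\pbox(z)$. To prove the lemma I would show that $\psi(y) := \expect{(y - X)^+}/y$ is non-decreasing on $y > 0$, an elementary consequence of $y \cdot \prob{X > y} \leq \expect{\min(X, y)}$. Applied at $y = g^p - c^o < g^o$, monotonicity yields $c^p/(g^p - c^o) \leq c^o/g^o$, which after clearing denominators is exactly $\alpha g^p \geq g^o$ by the definition of $\alpha$.

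For the remaining inequality $f^o_\pbox(\alpha z) \leq \alpha f^p_\pbox(z)$ on $z \in [g^p, \tau]$, I would rewrite both sides using $\expect{\min(u, X)} = u - \expect{(u - X)^+}$ and the algebraic identity $c^o - \alpha c^p = (c^o)^2/g^p$ (an immediate unfolding of the definition of $\alpha$). This reduces the goal to showing $q(z) := \expect{(\alpha z - X)^+} - \alpha\expect{(z - X - c^o)^+} \geq (c^o)^2/g^p$ for all $z \geq g^p$. The base value $q(g^p) \geq (c^o)^2/g^p$ follows from $\alpha g^p \geq g^o$ (which yields $\expect{(\alpha g^p - X)^+} \geq c^o$) combined with $\alpha c^p = c^o - (c^o)^2/g^p$; and differentiating gives $q'(z) = \alpha\bigl(\prob{X \leq \alpha z} - \prob{X \leq z - c^o}\bigr) \geq 0$ since $\alpha z \geq z - c^o$ whenever $\alpha \geq 1$, so the bound propagates. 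The main obstacle is the geometric lemma $\alpha g^p \geq g^o$: without it, the sub-regime $g^p < z < g^o/\alpha$ would be nonempty, and there $\min\{\alpha z, f^o_\pbox(\alpha z)\} = \alpha z$ would exceed $\alpha f^p_\pbox(z) < \alpha z$, breaking the inequality.
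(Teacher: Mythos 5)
Your proof is correct and follows essentially the same route as the paper: your key lemma $\alpha g^p \ge g^o$ is exactly the paper's convexity step (monotonicity of $\expect{(y-X)^+}/y$, i.e.\ $c^o/g^o \ge c^p/(g^p-c^o)$), and your base-point-plus-monotone-derivative argument for $q(z)$ is the paper's analysis of $D_\alpha(y)=\alpha f^p_\pbox(y/\alpha)-f^o_\pbox(y)$ after the substitution $y=\alpha z$, the only difference being that you verify the inequality directly at the stated $\alpha$ whereas the paper first proves the bound (with equality at the base point) for $\alpha=g^o/g^p$ and then transfers to the larger value. Your parenthetical about $\alpha<1$ is moot: under the standing assumption $g^p<g^o<\tau$, in force when the claim is invoked, your own lemma already yields $\alpha \ge g^o/g^p>1$.
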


\vspace{-8pt}
\begin{restatable}{claim}{pvocltwo}\label{claim:pvo-peek}
    For all $y\in\real$, it holds that $\min\{y, f^p_\pbox(y)\} \leq \alpha\cdot f_{\pbox}(\frac{y}{\alpha})$ for $\alpha = 1+ \min\left(\frac{c^p}{c^o},\frac{c^o}{g^p}\right)$.
\end{restatable}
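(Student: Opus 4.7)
My approach is to decompose the desired inequality $\min\{y, f^p_\pbox(y)\} \leq \alpha\cdot f_\pbox(y/\alpha)$ based on which term attains the minimum in $\alpha f_\pbox(y/\alpha) = \min\{y, \alpha f^o_\pbox(y/\alpha), \alpha f^p_\pbox(y/\alpha)\}$. I will also handle the min over $\{c^p/c^o,\, c^o/g^p\}$ by proving the inequality separately with $\alpha_1 := 1 + c^p/c^o$ and with $\alpha_2 := 1 + c^o/g^p$; whichever of these two is smaller for the given instance supplies the claimed bound.

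Two of the three sub-inequalities are immediate. First, $\min\{y, f^p_\pbox(y)\} \leq y$, so the case where the outside option attains the min in $\alpha f_\pbox(y/\alpha)$ (giving value $y$) is trivial. Second, $f^p_\pbox(y) \leq \alpha\cdot f^p_\pbox(y/\alpha)$ holds for any $\alpha \geq 1$ from the pointwise bound $\alpha\min\{y/\alpha,\, X+c^o\} = \min\{y,\, \alpha(X+c^o)\} \geq \min\{y,\, X+c^o\}$, valid whenever $X+c^o \geq 0$.

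The substantive task is to establish $f^p_\pbox(y) \leq \alpha\, f^o_\pbox(y/\alpha)$ in the regime where the opening curve is optimal for the MDP at $y/\alpha$. For $\alpha = \alpha_1$ this follows from a pointwise argument: both $y + c^o \geq y$ and $\alpha_1 X + c^o \geq X + c^o$ for $X \geq 0$ and $\alpha_1 \geq 1$, so $\min\{y+c^o,\, \alpha_1 X + c^o\} \geq \min\{y,\, X+c^o\}$, i.e., $c^o + \min\{y,\, \alpha_1 X\} \geq \min\{y,\, X + c^o\}$. Taking expectations and using $\alpha_1 c^o = c^o + c^p$ yields $f^p_\pbox(y) \leq \alpha_1 f^o_\pbox(y/\alpha_1)$ universally (not just in Case C).

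The subtle case is $\alpha = \alpha_2$. I plan to study $h(y) := f^p_\pbox(y) - \alpha_2 f^o_\pbox(y/\alpha_2)$ as a function of $y$. Its derivative $h'(y) = \Pr[X + c^o > y] - \Pr[X > y/\alpha_2]$ changes sign exactly where $y - c^o = y/\alpha_2$, which by the choice $\alpha_2 = 1 + c^o/g^p$ occurs at $y = \alpha_2 g^p = g^p + c^o$. Hence $h$ is maximized at $y = \alpha_2 g^p$ and decreasing on $[\alpha_2 g^p, \infty)$. The bound is only needed when $y/\alpha_2 \geq \tau$; in the non-trivial setting where $g^p < g^o < \tau$ (otherwise opening gives a trivial $1$-local approximation, as noted after the statement), this means $y \geq \alpha_2 \tau > \alpha_2 g^p$, so $h$ is decreasing over the relevant interval and it suffices to verify $h(\alpha_2 \tau) \leq 0$. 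Since $\tau$ is defined by $f^o_\pbox(\tau) = f^p_\pbox(\tau)$, I have $h(\alpha_2 \tau) = f^p_\pbox(\alpha_2 \tau) - \alpha_2 f^p_\pbox(\tau)$, and concavity of $f^p_\pbox$ together with $f^p_\pbox(0) = c^p \geq 0$ (applied to the convex combination $\tau = \tfrac{1}{\alpha_2}(\alpha_2\tau) + (1-\tfrac{1}{\alpha_2}) \cdot 0$) gives $f^p_\pbox(\alpha_2 \tau) \leq \alpha_2 f^p_\pbox(\tau) - (\alpha_2 - 1)c^p \leq \alpha_2 f^p_\pbox(\tau)$, as required. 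The main obstacle is this argument for $\alpha_2$: identifying that the maximizer of $h$ lies at $\alpha_2 g^p$ (just below the Case C regime that begins at $\alpha_2 \tau$) and then leveraging the crossing identity $f^o_\pbox(\tau) = f^p_\pbox(\tau)$ to reduce the boundary inequality at $y = \alpha_2 \tau$ to a clean concavity statement about $f^p_\pbox$ alone.
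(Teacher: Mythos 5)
Your proof is correct, and its skeleton matches the paper's: like the paper, you reduce the claim to showing $f^p_\pbox(y)\le\alpha f^o_\pbox(y/\alpha)$ on the regime $y\ge\alpha\tau$ where the opening curve attains the minimum, and your treatment of $\alpha_2=1+c^o/g^p$ is essentially the paper's argument — the derivative $F(y/\alpha_2)-F(y-c^o)$ changes sign at $\alpha_2 g^p=\tfrac{\alpha_2 c^o}{\alpha_2-1}<\alpha_2\tau$, so monotonicity on $[\alpha_2\tau,\infty)$ reduces everything to the boundary point $y=\alpha_2\tau$, which is handled via the crossing identity $f^o_\pbox(\tau)=f^p_\pbox(\tau)$ and concavity of $f^p_\pbox$ (with $f^p_\pbox(0)=c^p\ge 0$). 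Where you genuinely depart from the paper is the case $\alpha_1=1+c^p/c^o$: your pointwise coupling $c^o+\min\{y,\alpha_1 X\}=\min\{y+c^o,\alpha_1 X+c^o\}\ge\min\{y,X+c^o\}$, combined with the identity $\alpha_1 c^o=c^o+c^p$, yields $f^p_\pbox(y)\le\alpha_1 f^o_\pbox(y/\alpha_1)$ for \emph{every} $y$ with no calculus, whereas the paper locates the interior minimizer $y=\tfrac{\alpha c^o}{\alpha-1}$ of $D_\alpha(y)=\alpha f^o_\pbox(y/\alpha)-f^p_\pbox(y)$ and verifies nonnegativity there using $\expectt{X\sim\dist}{(z-X)^+}\le z$. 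Your route is more elementary and gives a slightly stronger (universal) inequality for that case; both versions use nonnegativity of $X$ in the same place, and both work under the standing assumption $g^p<g^o<\tau$ (the degenerate case being covered by committing to open), which you correctly flag.
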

\noindent 
The proofs of the two claims are deferred to~\Cref{app:pvo}. Note that once proven, they immediately imply an $\alpha$-local approximation with
\[\alpha = \min\left\{\frac{c^o}{c^p}\cdot \left(1-\frac{c^o}{g^p}\right) \;,\; 1+ \frac{c^p}{c^o} \; , \; 1 + \frac{c^o}{g^p}\right\}.\]
Fixing $c^o$ and $c^p$ and letting $\lambda := c^p/c^o \in (0,1)$, the minimum of the first and third terms is maximized at $g^p = c^o\cdot \frac{1+\lambda}{1-\lambda}$; so we have $\alpha \leq \min\{1+\lambda , \frac{2}{1+\lambda}\}\leq \sqrt{2}$. 

\newpage
\section{Additive Pandora's Box (Minimization)} \label{sec:additive_pb}

In this section, we study a different generalization of Pandora's Box which we call \textit{Additive Pandora's Box} (henceforth, APB). In this setting, the value of each (additive) box is given by the sum of independent random variables, all of which need to be separately probed by paying a corresponding probing cost. The objective is to minimize the total sum of opening costs plus the additive value of the accepted box. 

The primary motivation for studying APB is that it constitutes a special case of the notoriously challenging \textit{Pandora's Shortest Path} problem, where the edge weights of a given graph correspond to independent Pandora's boxes and the objective is to accept a set of boxes that form a path between two given vertices $s,t$. Due to the fact that path constraints do not admit frugal algorithms, no results are currently known for this problem and it is left as an open direction by \cite{S17}. It is not hard to see that APB corresponds to the special case of Pandora's Shortest Path where the graph consists of parallel paths between $s$ and $t$; each path is then modeled via an additive Pandora's Box (the components of which correspond to the path's edges) and the path constraint translates to a single-selection constraint over the additive boxes.

Formally, an instance of APB consists of $n$-additive boxes $\{\mathcal{B}_i\}_{i=1}^n$. Each box $\mathcal{B}_i = (\vec{\mathcal{D}}_i, \vec{c}_i)$ is characterized by $k_i\geq 1$ random variables, each distributed independently according to $\mathcal{D}_{ij}$ and with an opening cost of $c_{ij}\geq 0$ for $j\in [k_i]$. The algorithm can observe the realization $V_{ij}\sim \mathcal{D}_{ij}$ by paying $c_{ij}$ and can accept the box $\mathcal{B}_i$ only after all the $k_i$ random variables  have been observed, at a cost of $\sum_{j=1}^{k_i}V_{ij}$. At any point, the algorithm can adaptively select which additive box to explore, as well as which of its components to probe.

We can immediately cast this problem as a min-CICS instance where each additive box corresponds to an MDP $\mathcal{M}$ (which we call an additive-MDP), capturing all the different (adaptive) probing orders for the components of the box. A commitment for an additive-MDP corresponds to a protocol that determines which component to probe first, which second depending on the realization of the first value etc. Our main contribution is a to show that the commitment gap of APB is constant, and does not depend on the number of components.

\begin{theorem}\label{thm:sum-mdp-gap}
    The commitment gap of matroid-APB is at most $2$.
\end{theorem}

Our proof of~\Cref{thm:sum-mdp-gap} is existential and does not indicate how to efficiently construct the committing policy that achieves this bound. One could consider a restricted class of committing policies, which we call \textbf{static commitment policies}, where the components of each additive box are always probed in a fixed order. Trivial arguments suffice to show that static policies probing the components in increasing order of costs achieve a commitment gap of $k$. By considering the structure of water filling surrogate costs, we can obtain an improved bound on the gap:

\begin{restatable}{theorem}{additivestatic}
\label{thm:sum-mdp-static} Let $\mc$ be any additive-MDP. For $k=2$, there exists a static commitment policy that achieves a $\phi\approx 1.618$-pointwise approximation. For $k\geq 3$, the static policy of minimum index achieves a $\big(1+\lfloor \frac{k+1}{2}\rfloor\big)$-pointwise approximation.
\end{restatable}

In order to constructively achieve the approximation factor guaranteed by \Cref{thm:sum-mdp-static}, one can compute the indices of the $k!$ static policies for each constituent MDP and commit to the one with the smallest index. We leave open the questions of whether it is possible to achieve the same approximation more efficiently, and whether one could efficiently compute a committing policy that admits a constant factor  gap. The proof of~\Cref{thm:sum-mdp-static} is presented in~\Cref{app:additive_pb}.

\subsection*{Proof of~\Cref{thm:sum-mdp-gap}}
Our proof boils down to proving that each additive-MDP admits a $2$-pointwise approximation under a suitably chosen commitment and then employing~\Cref{lemma:pw-to-local} and~\Cref{thm:la-comp}. Our proof consists of two steps: first, we identify a simple structure for the surrogate costs under any commitment for APB and translate pointwise-approximation into a simple condition for the commitment. Then, we show that there exists some commitment satisfying this condition.

\paragraph{Step 1.} Let $\mathcal{M}$ be any additive-MDP with $k\geq 1$ components corresponding to values $V_i\sim\mathcal{D}_i$ and costs $c_i\geq 0$ for $i\in [k]$. Let $\pi \in C(\mathcal{M})$ be any commitment; observe that $\pi$ specifies which value to be probed first, which to be probed second depending on the first's realization and so on until all $k$-values have been probed. Therefore, $\mathcal{M}^\pi$ is a tree-structured Markov chain of height $k$, the terminal states $t(\vec{V})$ of which correspond to all the possible realizations of the value vector $\vec{V} =(V_1,\cdots , V_k)$. Furthermore, the probability of reaching such a terminal state $t(\vec{V})$ is always equal to the probability of $\vec{V}$ being realized, independently of $\pi$'s probing protocol. Therefore, the water filling surrogate cost $W^*_{\mc^\pi}$ for any commitment $\pi$ is obtained by sampling $\vec{V}\sim\vec{\mathcal{D}}$ and then returning a (commitment-specific) surrogate cost $\surr^{\pi}(t(\vec{V}))$. Due to this structure, the $\alpha$-pointwise approximation condition for additive-MDPs translates to proving that there exists some commitment $\pi\in C(\mc)$ such that
\[\surr^{\pi}(t(\vec{V})) \leq \alpha\cdot \surr^{\pi'}(t(\vec{V}))\]
for all vectors $\vec{V}\sim\vec{\mathcal{D}}$ and all {\em deterministic} commitments  $\pi'\in C(\mc)$.

Up next, we will use the inductive definition of water filling amortization to obtain structure on these surrogate costs. Fix a commitment $\pi\in C(\mc)$ and the underlying Markov chain $\mc^\pi$. For each state $s$ of $\mc^\pi$, we define the following quantities:
\begin{enumerate}
    \item The \textbf{index} $g^\pi(s)$ of state $s$ is defined as the water filling index of the sub-chain rooted at $s$, if all values that have been already probed have been realized to $0$. In other words, $g^\pi(s)$ corresponds to the index of the committing policy that proceeds as $\pi$ after state $s$ is reached, without the additive cost of already observed values. Clearly, $g^\pi(t(\vec{V})) = 0$ for all terminal states $t(\vec{V})$ and $g^\pi(r) = g^\pi$ for the root $r$ of $\mc^\pi$.

    \item The \textbf{value} $V^\pi(s)$ of state $s$ is defined as the sum of all the realized values in the trajectory from the root to $s$; clearly, we have that $V^\pi(r)=0$ and $V^\pi(t(\vec{V})) = \sum_{i=1}^k V_i$ for all terminal states $t_{\vec{V}}$.
\end{enumerate} 

The key observation is that when amortizing some state $s$, all the terminal states in it's subtree will share the same additive offset $V^\pi(s)$, corresponding to the sum of realized values leading to $s$. It is immediate by the definition of the water filling amortization that such an additive offset does not affect the cost shares of vertex $s$. Also, recall that water filling amortization is performed in a bottom-up manner. Thus, the surrogate cost of a terminal state $t(\vec{V})$ immediately after one of its ancestors $s$ gets amortized will be precisely $V^\pi(s) + \max(g^\pi(s), \surr'(t(\vec{V})))$, where $\surr'(t(\vec{V}))$ is the surrogate cost of $t(\vec{v})$ immediately after it's ancestor that is a child of $s$ was amortized. Performing this step for all ancestors of $t(\vec{V})$, we obtain that
\[\surr^\pi(t(\vec{V})) = \max_{i=1}^{k+1}\bigg( V^\pi(s_i) + g^\pi(s_{i})\bigg)\]
where $(r=s_1,s_2,\cdots, s_k, s_{k+1}=t(\vec{V}))$ is the unique path in $\mc^\pi$ that reaches $t(\vec{V})$.

From this expression, it becomes clear that $\surr^\pi(t(\vec{V})) \geq \max(g^\pi , \sum_{i=1}^k V_i)$. Furthermore, if it is the case that $g^\pi \geq g^\pi(s)$ for all states $s$, then we can easily upper bound $\surr^\pi(t(\vec{V}))\leq g^\pi + \sum_{i=1}^k V_i$. We refer to commitments that have this property as \textbf{root dominant}. Combining everything, we then obtain that for any root dominant commitment $\pi$, any commitment $\pi'$ and any terminal state $t(\vec{V})$, we have that
\[\frac{\surr^{\pi}(t(\vec{V}))}{\surr^{\pi'}(t(\vec{V}))} \leq \frac{g^\pi + \sum_{i=1}^k V_i}{\max(g^{\pi'} , \sum_{i=1}^k V_i)}\leq 1 + \frac{g^\pi}{g^{\pi'}}. \]
In other words, we have shown that any root dominant commitment $\pi$ will achieve an $\alpha$-pointwise approximation for $\alpha = 1 + g^\pi/\min_{\pi'\in C(\mc)}g^{\pi'}$. The proof of~\Cref{thm:sum-mdp-gap} is then completed by arguing that there exists a root dominant commitment $\pi$ that has minimum index across all committing policies.

\paragraph{Step 2.} Let $\pi$ be any commitment of minimum index. If $\pi$ is root dominant, then we are obviously done. Otherwise, there exists at least one state $s_1$ in $\mc^\pi$ such that $g^\pi(s_1) > g^\pi$. Among all these states, we consider one of minimum distance to the root. Notice that $s_1$ must have at least one sibling state $s_2$ with $g^\pi(s_2) \neq g^\pi(s_1)$; otherwise, the index of of the parent state of $s_1$ will be at least $g^\pi(s_1)$ (by the inductive definition of indices) and thus $s_1$ won't be a maximum height state satisfying the condition.

Let $\pi'$ be the commitment that is obtained by substituting the sub-tree rooted at $s_1$ in $\mc^\pi$ with the sub-tree rooted at $s_2$; notice that both of these subtrees correspond to commitments over the (same) set of unprobed values and thus such a substitution is always allowed (i.e. $\pi'\in C(\mc)$ as well). Up next, we will show that $g^\pi \geq g^{\pi'}$; then, $\pi'$ will also be a minimum index commitment and we can apply the same process to it. Furthermore, each substitution reduces the heteromorphity of the underlying chain (i.e. the number of different child sub-trees that a state can have) so our iterative process is guaranteed to terminate after some finite number of steps, resulting in a commitment that is both root dominant and has minimum index.

We finally argue that $g^\pi\geq g^{\pi'}$. Since $g^\pi$ is the water filling index of $\mc^\pi$, it is by definition at least as large as the minimum surrogate cost of \textit{any} valid amortization of $\mc^\pi$. Let $b=\{b_{st}\}_{s,t}$ denote the water filling cost shares of $\mc^\pi$ and $b'=\{b'_{st}\}_{s,t}$ denote the water filling cost shares of $\mc^{\pi'}$ Now, consider the following amortization $\hat{b}=\{\hat{b}_{st}\}_{s,t}$ of $\mc^\pi$:
\begin{itemize}
    \item For each state $s$ that does not belong in the sub-tree rooted at $s_1$, let $\hat{b}_{st} = b'_{st}$ for all terminals $t$. 
    \item  For each state $s$ in the sub-tree rooted at $s_1$ (including $s_1$), let $\hat{b}_{st} = b_{st}$ for all terminals $t$. 
\end{itemize}
Since all the states of $\mc^\pi$ and $\mc^{\pi'}$ that are not ancestors of state $s_1$ have the same distribution over terminal states, this is indeed a valid amortization. Furthermore, any terminal state $t$ that is not a descendant of $s_1$ will receive precisely the same cost shares as it does in $\mc^{\pi'}$ and thus it's surrogate cost will be at least $g^{\pi'}$. On the other hand, every terminal state $t$ that is a descendant of $s_1$ receives the same cost shares as it did in $\mc^\pi$ up until state $s_1$ gets amortized; by definition, this implies that it's surrogate cost will be at least $g^\pi(s_1) + V^\pi(s_1)\geq g^\pi(s_1)$. Thus, we conclude that $g^\pi \geq \min(g^{\pi'}, g^\pi(s_1)) = g^{\pi'}$ as desired.

\newpage
\bigskip
\section{The Weighing Scale Problem (Minimization)}\label{sec:ws}

We will now introduce the {\em Weighing Scale} (henceforth, WS) problem. A decision maker is presented with $n$ alternatives $(X_i,c_i)$ and a combinatorial constraint $\feas\subseteq 2^{[n]}$; $X_i\geq 0$ is the random value of the alternative realized independently by a known distribution and $c_i\geq 0$ is a weighing cost. The only way the decision maker can determine any further information about each value $X_i$ beyond its distribution, is to use a weighing scale to compare it against some fixed threshold $t$ of their choosing at the additional cost of $c_i$ and learn whether $X_i\leq t$ or not. The process terminates with the decision maker selecting a feasible set of alternatives $S\subseteq \feas$, paying their total value. Note that each alternative $(X_i,c_i)$ corresponds to a Costly Information MDP $\mc_i$ that captures the information acquisition process described above. In particular, the available actions at the starting state of each MDP $\mc_i$ are as follows:
\begin{enumerate}
    \item Pick a threshold $t\in\mathrm{support}(X_i)$ and weigh the alternative against it at a cost of $c_i$. Upon taking one of these actions, the MDP advances to one of two random subprocesses $\mc_i^{\leq t}$ and $\mc_i^{>t}$, defined over the random variables $(X_i|X_i\leq t)$ and $(X_i|X_i>t)$ respectively, based on the outcome of the weighing.

    \item Commit no more weighings of the alternative; this is a $0$-cost action resulting to a terminal state $x_i$ of value $v(x_i):=\expect{X_i}$.
\end{enumerate}
\medskip
\noindent
From this equivalence, an instance of WS corresponds to an instance of min-CICS over the corresponding MDPs.\footnote{Technically, our framework does not  capture infinite horizon MDPs. However, observe that whenever the decision maker has identified that the value of the alternative lies in some interval of length $\leq c$, performing any extra weighings is suboptimal. Thus, the corresponding MDPs for the WS problem are finite horizon without loss, even for continuous random variables $X$.} 
Our main result for this section is the following:

\begin{theorem}\label{thm:ws-approximation}
The commitment gap of matroid-WS is at most  $O(\max_i\kappa_i)$, with the parameter $\kappa_i$ for each alternative $i\in [n]$ defined as
\[\kappa_i := \frac{\mu_i}{M_i} + \log\frac{\mu_i}{g_i}\]
where $\mu_i=\expect{X_i}$ is the expected value of $X_i$, $M_i$ is the median value of $X_i$, and $g_i$ denotes the Gittins index of the alternative; i.e. the solution to the equation $c_i = \expect{(g_i-X_i)^+}$.
\end{theorem}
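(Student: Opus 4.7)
} The plan is to apply the local-approximation composition framework of Section~\ref{sec:local_approx} (in particular \Cref{cor:composition}) so that it suffices to exhibit, for each weighing-scale MDP $\mc_i$, a committing policy $\pi_i\in\comset{\mc_i}$ that is an $O(\kappa_i)$-local approximation of $\mc_i$. Rather than work directly with optimality curves, I would use \Cref{lemma:pw-to-local} and establish the stronger $O(\kappa_i)$-pointwise approximation of surrogate costs: for every quantile $q\in[0,1]$, $W^*_{\mc_i^{\pi_i}}(q) \le O(\kappa_i)\cdot W^*_{\mc_i}(q)$. Once this is shown, plugging the $\pi_i$'s into \Cref{alg:meta} yields the theorem.

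\textbf{Design of $\pi_i$.} I would use a binary-search-like commitment driven by two parameters of the distribution of $X_i$, the median $M_i$ and the Gittins index $g_i$. Concretely, $\pi_i$ first weighs $X_i$ against $M_i$. If the outcome is $X_i > M_i$, $\pi_i$ performs no further weighings and accepts at cost $\mathbb{E}[X_i\mid X_i>M_i]\le 2\mu_i$ (since $M_i$ is the median, the upper half contributes at most $2\mu_i$ in expectation). If instead $X_i\le M_i$, $\pi_i$ performs a sequence of at most $L=\lceil\log_2(M_i/g_i)\rceil=O(\log(\mu_i/g_i))$ further weighings at geometrically decreasing thresholds $M_i/2, M_i/4, \dots, g_i$, each conditional on the previous outcome being ``below,'' and stops the instant the outcome is ``above'' (in which case $X_i$ lies in a geometric window $[\tau/2,\tau]$), committing to the upper endpoint of that window (or to $g_i$ if all weighings returned ``below''). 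This yields a committed Markov chain $\mc_i^{\pi_i}$ in which the total weighing cost on any trajectory is at most $(L+1)c_i$ and the accepted value is within a factor of $2$ of the upper endpoint of the realized window.

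\textbf{Surrogate-cost comparison.} For the lower bound on $W^*_{\mc_i}$, I would use the fact that the Gittins index $g_i$ of the single-weighing Pandora version of the alternative is a lower bound on the surrogate cost in the regime where $X_i$ is realized small: no adaptive weighing scheme can drive the surrogate cost below roughly $\max(X_i, g_i)$ on the low side, and below roughly $\mu_i$ on the high side (since accepting without weighings costs $\mu_i$ and all weighings cost at least $c_i$). Pairing this against $\mc_i^{\pi_i}$: on trajectories where $X_i>M_i$, my policy pays at most $c_i + \mathbb{E}[X_i\mid X_i>M_i]\le c_i + 2\mu_i$, while $W^*_{\mc_i}$ at the corresponding quantile is at least $\Omega(M_i)$, giving a ratio of $O(\mu_i/M_i)$; on trajectories where $X_i$ lands in window $[g_i/2^{k-1},g_i/2^{k}]$ for $k\ge 0$, $\mc_i^{\pi_i}$ pays $O((k+1)c_i) + g_i/2^{k-1}$ while $W^*_{\mc_i}$ at the corresponding quantile is at least $\Omega(\max(X_i, g_i/L))$ in the worst case, giving a ratio of $O(\log(\mu_i/g_i))$. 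Summing the two regimes yields the claimed factor $\kappa_i = \mu_i/M_i + \log(\mu_i/g_i)$.

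\textbf{Main obstacle.} The hard part will be making the surrogate-cost comparison rigorous quantile-by-quantile, because the water-filling amortization of $\mc_i$ on the one hand, and of $\mc_i^{\pi_i}$ on the other, redistribute the weighing costs across terminal states in a distribution-dependent manner. The natural handle is \Cref{lem:mdp-wf-char-new}: action independence lets me evaluate $W^*_{\mc_i}$ against any convenient committing policy of $\mc_i$ (e.g.\ the same sequence of geometric thresholds, but with the option of stopping mid-search whenever advancing is no longer locally beneficial), and then compare the resulting per-terminal surrogate distributions to those of $\mc_i^{\pi_i}$ directly. The second subtlety is ensuring that the logarithmic factor is tight: I would need to verify that no additional structural slack is lost when the geometric thresholds do not align with the quantiles at which $\mc_i$'s optimality curve bends, which I expect to handle by a routine concavity argument on $f_{\mc_i}$.
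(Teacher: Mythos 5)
Your overall strategy is the same as the paper's: reduce to an $O(\kappa_i)$-pointwise approximation for each alternative via \Cref{lemma:pw-to-local} and \Cref{cor:composition}, realized by a committing policy that halves thresholds geometrically from roughly $M_i$ down to $g_i$ (the paper's One-Sided Halving policy with $t_1=g$, $t_2=\min(M,h)$, plus the degenerate commitment to no weighings when $g>\min(\mu,M)$, a case you omit). The two-regime ratio analysis you sketch is also the paper's. However, there are genuine gaps in the quantitative surrogate comparison, precisely at the step you flag as the main obstacle, and your proposed remedy does not close them.

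First, on the lower-bound side you never pin down $W^*_{\mc}$. The paper's key (and simple) observation is that in the local game $(\mc,y)$ the optimal policy needs at most one weighing, namely against $y$ itself, so $\optf_\mc(y)=\min\{y,\mu,c+\expect{\min\{y,X\}}\}$ and hence the optimal surrogate is exactly $\surr^*(x)=\min\{h,\max\{g,x\}\}$ with $h$ solving $c=\expect{(X-h)^+}$; in particular it is at least $\max\{g,x\}$ for $x\le h$ with no loss. Your hedged bound $\Omega(\max\{X_i,g_i/L\})$ is both unproven and too weak: combined with a correct upper bound on $W^*_{\mc^{\pi}}$ it would only give $O(L^2)$ rather than $O(L)$. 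Second, and more seriously, your upper bound on the committed chain's surrogate charges each terminal only the weighing costs incurred on its own trajectory, e.g. $O((k+1)c_i)$ plus the window endpoint. This is not a valid bound on the water-filling surrogate of $\mc^{\pi}$: the cost share a terminal absorbs from an upstream action can be as large as $p(s)c/p(t)$, which for the bottom terminal is of order $g$ per step (not $c$), and a small example shows the trajectory-cost random variable can strictly understate $\optf_{\mc^{\pi}}$. The paper's actual argument for this step is the real content of the proof: $\prob{X\le t_f}\ge c/g$ caps each step's share to the bottom terminal at $g$; the one-sided structure ensures the bottom terminal has the minimum value and participates in every amortization step, hence is the worst-affected; therefore every terminal's surrogate exceeds its conditional mean $\mu(I)$ by at most $kg$, and the monotone envelope $u(x)=kg+2\min\{2\mu,\max\{x,g\}\}$ compared against $\surr^*(x)$ gives $O(\log(\mu/g)+\mu/M)$. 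Your proposed fix via the action-independence property of \Cref{lem:mdp-wf-char-new} addresses the $W^*_{\mc}$ (lower-bound) side, not this upper-bound side, so as written the pointwise comparison does not go through without importing essentially the paper's cost-share argument.
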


We prove~\Cref{thm:ws-approximation} by showing that for each alternative $i\in [n]$, the MDP $\mc_i$ admits an $O(\kappa_i)$-pointwise approximation. Furthermore, the commitment $\pi$ that achieves this condition corresponds to a very natural and efficient algorithm which we call \textit{One-Sided Halving}. Finally, we note that without any assumptions on the distributions of the values, the parameters $\kappa_i$ can be unbounded; in particular, for heavy tail distributions, $\mu_i/M_i$ can be arbitrarily large. We show in~\Cref{thm:ws-lb} (proven in~\Cref{app:ws}) that this dependence is necessary for any pointwise approximation guarantee. Better bounds may be possible via the weaker notion of local approximation or a global argument. 



\begin{restatable}{theorem}{wslbtag}\label{thm:ws-lb}
For any constant $\alpha \geq 1$, there exists a WS alternative that does not admit an $\alpha$-pointwise approximation.
\end{restatable}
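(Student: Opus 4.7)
The plan is to exhibit, for each $\alpha \geq 1$, a WS alternative whose water filling surrogate cost has quantile structure that no committing policy can match within factor $\alpha$. I would use a heavy-tailed value distribution supported on geometrically spaced scales, so that the adaptive policy's optimal strategy in the local game $(\mc, y)$ genuinely depends on $y$ (with different depths of weighing being optimal at different scales of $y$). Concretely, take $X$ supported on $\{1, V, V^2, \ldots, V^K\}$ with $\Pr[X = 1] = 1 - K/V$ and $\Pr[X = V^k] = 1/V$ for $k=1,\ldots,K$, together with a weighing cost $c$ of constant order; the parameters $V, K$ are to be chosen large in terms of $\alpha$. Here the mean scales like $V^{K-1}$, the median is $1$, and the Gittins index is $1 + O(c)$, so this is exactly the regime where the parameter $\kappa$ of the upper bound blows up.

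I would first characterize $\optf_\mc(y)$ by analyzing the adaptive policy's options: accept $y$, commit, or weigh at a threshold $t_k \in [V^{k-1}, V^k)$ and recurse on either subprocess. By induction on the scale, I expect to show that each band $y \in [\Theta(V^{k}), \Theta(V^{k+1}))$ is handled by a qualitatively different strategy (weighing down to scale $k$), producing a piecewise linear concave curve with $\Theta(K)$ distinct slope regions. Reading off the slope drops through the relation $1 - H(y) = \dd{y}{\optf_\mc(y)}$ yields $W^*_\mc$ as a discrete distribution with $\Theta(K)$ atoms at values of order $V^0, V^1, \ldots, V^K$.

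Next, I would argue that every deterministic committing policy fails pointwise at some quantile. A deterministic policy is a fixed decision tree over thresholds with optional accept-commit leaves; I would compute $W^*_{\mc^\pi}$ via water filling on the resulting chain and dichotomize. A tree that commits to weighing at all $K$ thresholds amortizes total cost $\Theta(cK)$ onto the low-value terminal (value $1$), which through the cost sharing $\sum p(t)b_{st} = p(s)c(s)$ forces the smallest-quantile surrogate value up to $1 + \Omega(cK)$; meanwhile the adaptive's corresponding atom is $1 + O(c)$, producing a ratio $\Omega(K)$. Conversely, a tree that skips some threshold $t_k$ leaves the surrogate for the atom at scale $V^k$ pinned at the raw value $V^k$ (or a conditional mean of order $V^{k+1}$), whereas the adaptive's corresponding atom equals $V^k - \Omega(cV^{k-1})$, yielding a ratio $\Omega(V)$ at that quantile. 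Since no single tree can simultaneously weigh everywhere (blocked by the first obstruction) and weigh nowhere (blocked by the second), every deterministic committing policy fails by factor $\Omega(\min(K, V))$, and choosing $V, K \gg \alpha$ completes the deterministic case.

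The main obstacle is extending the argument to \emph{randomized} committing policies, since $\mc^\pi$ for randomized $\pi$ is a single chain with mixture transitions and $W^*_{\mc^\pi}$ does not decompose as a mixture of deterministic surrogate costs. I would handle this by a curve-bounding argument: the optimality curve $\optf_{\mc^\pi}$ at each state $s$ is the expectation under $\pi(s)$ of the deterministic action-curves, so it lies pointwise above $\min_{\pi' \text{ det.}} \optf_{\mc^{\pi'}}$; translating via the quantile identity $W^*_{\mc^\pi}(q) = \inf\{y : \dd{y}{\optf_{\mc^\pi}(y)} \leq 1 - q\}$ transports the deterministic quantile lower bounds to the randomized setting, possibly at slightly different quantile levels but still within the same order of magnitude. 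This suffices because the adaptive's atoms are separated by factors of $V$, so constant quantile shifts do not absorb the multiplicative gap.
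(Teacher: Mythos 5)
There is a genuine gap: the proposed instance does not actually force a large pointwise gap, so the dichotomy at the heart of your argument collapses. The claim that any tree which weighs at all thresholds must push $\Theta(cK)$ onto the low-value terminal is false — the charge depends on the shape of the tree, not just on which thresholds appear. Consider the committing policy that first weighs against a threshold separating the atom at $1$ from the atom at $V$, commits if $X$ is below it, and otherwise binary-searches the atoms $V,\dots,V^K$. Water filling sends the root's cost entirely to the bottom terminal (it has value $1$ and probability $1-K/V$, far below the other terminals), so that terminal's surrogate becomes exactly $1+c/\prob{X\le 1}=g$, matching the adaptive atom. Each upper leaf $V^k$ only receives shares from ancestors of probability at most $K/V$, hence at most $cK\log K$ in total, which is negligible against $V^k$; and because your top atom is a single point mass, $h\approx V^K$, so the isolated top leaf is also within a $1+o(1)$ factor of the adaptive atom. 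Since both surrogate maps are non-decreasing in $x$, the quantiles align and this policy is a $(1+o(1))$-pointwise approximation of your instance for any choice of $V,K$ large — there is nothing for your "weigh everywhere vs.\ skip a threshold" dichotomy to obstruct, because separating all atoms is cheap here. A secondary issue is the premise about the adaptive side: in the local game $(\mc,y)$ the optimal policy never performs more than one weighing (weighing against $y$ itself dominates, by concavity of $\min(y,\cdot)$ and the martingale property of conditional means), so $\optf_\mc(y)=\min(y,\mu,c+\expect{\min\{y,X\}})$ and $W^*_\mc=\min(h,\max(g,X))$; in particular the adaptive atom at $V^k$ ($k<K$) is exactly $V^k$, not $V^k-\Omega(cV^{k-1})$, though this by itself is not what breaks the argument.

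The missing ingredient is a genuinely heavy tail above every possible top threshold. What the paper exploits is that the unavoidable tension sits in the committing policy's \emph{topmost} interval: one needs $\expect{X\mid X>t}$ to exceed $\max(g,t)$ and $h$ by a factor of roughly $\alpha$ for \emph{every} candidate largest threshold $t$, while simultaneously $\mu/g$ is large so that committing to no weighings is also bad. The paper achieves this with a truncated Pareto density proportional to $x^{-2}$ on $[1,B]$ plus a distant point mass at $(k+1)B$ (with $k=\alpha+1$, $B=2^{k^2}$): then $\prob{X\ge t}=1/(kt)$ gives $\expect{X\mid X>t}\gtrsim kt$ for $t\in(1,B)$, $h=B$ while the conditional mean above any $t\ge B$ is $(k+1)B$, and the lower bound follows by noting that the conditional mean of the interval containing $x$ is a non-decreasing lower bound on $\surr^\pi(x)$, so a single bad point suffices. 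Your geometric point-mass support lacks this property (the tail above the largest threshold is a single atom), which is why a cheap full-separation tree defeats the construction; bottom-heavy cost accumulation alone can never give more than a logarithmic-type gap, consistent with the paper's positive $O(\log(\mu/g)+\mu/M)$ result.
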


\subsection*{Proof of~\Cref{thm:ws-approximation}}


We begin by introducing the committing policy for WS described below, which we call the \textit{One-Sided Halving algorithm}. The policy begins by weighing the alternative against some threshold $t_2$; if the alternative is larger, then the policy commits to no more weighings and if it is smaller, it halves its threshold from $t_2$ to $t_2/2$ and repeats, until the threshold reaches some fixed lower bound $t_1$. 

\begin{center}
\begin{algorithm}[H]
  \SetAlgoLined
  \SetKwInOut{Input}{Input}

    \Input{An MDP $\mc$ corresponding to a WS alternative $(X,c)$ and two thresholds $0<t_1<t_2$.}

    \smallskip 
    Set $t=t_2$.\\
    \smallskip
    \While{$t\geq t_1$}{
    \smallskip
    Weigh the random variable $X$ against $t$.\\
    \smallskip
    \textbf{if } $X\leq t$  \textbf{ then } $\mathrm{t=t/2}$ \textbf{ else } break.\\
    }
    Commit to no more weighings for the alternative.
  \caption{One-Sided Halving Algorithm.}
\end{algorithm}
\end{center}

Let $h$ be the solution to equation $c=\expect{(X-h)^+}$. The following lemma, combined with our reduction from pointwise to local approximation (\Cref{lemma:pw-to-local}) and our local approximation composition result (\Cref{thm:la-comp}) directly implies~\Cref{thm:ws-approximation}.

\begin{restatable}{lemma}{wsupptag}\label{lem:osh-policy}
    Committing to no weighings if $g>\min\{\mu, M\}$, and otherwise
    committing to the One-Sided Halving algorithm with $t_1=g$ and $t_2 = \min(M,h)$  achieves an $O(\kappa)$-pointwise approximation.
\end{restatable}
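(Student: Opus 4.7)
The proof breaks into two cases based on which branch of the policy is used.

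Case~1: $g > \min(\mu, M)$, where the policy commits to no weighings and $W^*_{\mc^\pi} = \mu$ deterministically. I would first characterize the WS MDP's Gittins index as $\min(g,\mu)$ by induction on the MDP's horizon: committing to $y$ and committing to $\mu$ are both available, while any multi-weighing policy can save at most $\expect{(y-X)^+}\le c$ relative to accepting $y$ (this is the ``full-information savings'' bound, since full information at cost $c$ achieves exactly this savings, and WS cannot do better in aggregate), so a policy performing $k\ge 1$ weighings costs at least $y + (k-1)c \ge y$ whenever $y\le g$. In subcase~1a ($g>\mu$), the same argument extended shows $\optf_\mc(y)=\min(y,\mu)$ throughout, hence $W^*_\mc=\mu$ deterministically and $\alpha=1$. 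In subcase~1b ($g\le \mu$ but $g>M$), the Gittins-index characterization gives $W^*_\mc \ge g$ pointwise, yielding $\alpha \le \mu/g \le \mu/M \le \kappa$.

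Case~2: $g \le \min(\mu,M)$, where OSH runs with thresholds $s_j = t_2/2^j$ for $j=0,1,\ldots$, halting once $s_j$ drops below $t_1 = g$. Thus OSH uses $k = \lceil\log_2(t_2/g)\rceil + 1 = O(\log(\mu/g)) = O(\kappa)$ weighings, and the induced Markov chain $\mc^\pi$ has terminals $T_0,\ldots,T_k$ partitioning $X$'s support into the geometric intervals $(s_j,s_{j-1}]$ (with $s_{-1}=\infty$) and $(0,s_{k-1}]$, giving $v(T_j)\in[s_j, 2s_j]$ for $j<k$ and $v(T_k)\le g$. I then perform the bottom-up water filling amortization on $\mc^\pi$: each weighing's cost $c \le g$ (since $c = \expect{(g-X)^+} \le g$) is distributed to downstream terminals starting from the lowest-valued, so by the geometric structure, after $k$ levels the cumulative cost share on any terminal is at most $O(kc) = O(\kappa g)$, giving $W^*_{\mc^\pi}(T_j) \le O(\kappa\cdot\max(v(T_j), g))$.

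For the lower bound on $W^*_\mc$, I first establish $W^*_\mc \ge g$ pointwise via the Case~2 analog of the Gittins-index argument above. For higher quantiles, I use the action-independence of the surrogate-cost mapping from~\cref{lem:mdp-wf-char-new} applied to~$\pi$: coupling the terminal sampling of~$\mc$ and~$\mc^\pi$ through the common distribution $p_\pi$, the expected MDP surrogate cost at each terminal~$T_j$ satisfies $\expect{\surr_\mc(T_j)} \ge v(T_j)$ (by the cost-sharing property of the lemma together with non-negativity of cost shares), and combined with the geometric structure of OSH this translates into a quantile-level bound $W^*_\mc(q) \ge \Omega(\max(v(T_{j(q)}), g))$ at the quantile~$q$ corresponding to terminal~$T_{j(q)}$. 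Combining this lower bound with the per-terminal upper bound on $W^*_{\mc^\pi}$ yields $W^*_{\mc^\pi}(q) \le O(\kappa)\, W^*_\mc(q)$ for all $q\in[0,1]$, establishing the claimed $O(\kappa)$-pointwise approximation.

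The main technical obstacle is the quantile-level lower bound on $W^*_\mc$. The readily available second-order-stochastic-dominance relation $\optf_\mc(y) \le \expect{\min(y,\max(X,g))}$, derived from the one-weighing-at-$y$ policy matching the full-information optimality curve, is insufficient on its own because second-order dominance does not imply pointwise quantile comparison. The delicate part of the proof therefore leverages the per-terminal action-independent mapping from~\cref{lem:mdp-wf-char-new} and the geometric structure of OSH's terminal distribution to transfer aggregate dominance into pointwise quantile dominance via the shared sampling of~$p_\pi$.
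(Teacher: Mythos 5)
Your outline of Case~1 and of the upper bound on the committing chain matches the paper's, but there is a genuine gap in the step you yourself flag as delicate: the quantile-level lower bound on $W^*_\mc$. The paper does not derive this from the general machinery of \cref{lem:mdp-wf-char-new} at all. Instead it observes that for a weighing-scale alternative the local game $(\mc,y)$ is essentially trivial: the only useful actions are ``accept $y$'', ``accept blindly'', or ``weigh once against $y$'', so $\optf_\mc(y)=\min\bigl(y,\mu,c+\expect{\min\{y,X\}}\bigr)$ in closed form, and hence $W^*_\mc$ is \emph{exactly} the deterministic, non-decreasing map $\surr^*(x)=\min(h,\max(g,x))$ of the same sample $x\sim X$ that determines the terminal of $\mc^\pi$ (the paper's \cref{cl:ws-opt-sur,cl:ws-com-sur}). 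With both $\surr^*$ and a monotone upper bound $u(x)$ on $\surr^\pi(x)$ written as functions of the common $x$, the quantile comparison is immediate. Your substitute — using the cost-sharing property of \cref{lem:mdp-wf-char-new} to get $\expect{\surr_\mc(T_j)}\ge v(T_j)$ and then ``translating'' this into $W^*_\mc(q)\ge\Omega(\max(v(T_{j(q)}),g))$ — does not go through: a lower bound on the \emph{expectation} of each per-terminal distribution gives no control on the quantiles of the mixture (the mass of $\surr_\mc(T_j)$ could sit near $g$ with a thin heavy tail), so first-order/quantile dominance does not follow. Moreover, making your claimed bound true at the top quantiles requires precisely the two facts absent from your argument: that $W^*_\mc$ is capped at $h$ but its top quantiles equal $h\ge\mu$, and that $t_2\le M$ forces $v(T_0)=\expect{X\mid X>t_2}\le 2\mu\le 2h$. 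You never use the median anywhere in Case~2, yet the $\mu/M$ term in $\kappa$ enters the paper's bound exactly through this top-interval comparison ($u\approx kg+4\mu$ versus $\surr^*\ge\min(M,h)$); an argument that never invokes $M$ in Case~2 cannot be correct as stated.

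A secondary issue is your upper bound on the amortization: the assertion that each terminal's cumulative water-filling share is $O(kc)$ ``since each weighing costs $c\le g$'' is unjustified, because a single stage can load a share of up to $p(s)c/p(t)\gg c$ onto a low-probability lowest terminal. The correct per-stage bound is $g$, and it comes from the probability estimate $\prob{X\in I_0}\ge\prob{X\le g}\ge c/g$ (using $c=\expect{(g-X)^+}\le g\,\prob{X\le g}$), together with the observation that the bottom terminal $I_0$ is downstream of every weighing and starts lowest, so it absorbs the maximal increase; the paper then gets $W^*_{\mc^\pi}(t(I))\le\mu(I)+kg$ for every interval. Your final magnitude $O(\kappa g)$ happens to coincide, but the justification is missing, and without the $\mu(I)$ bounds ($\mu(I_0)\le 2g$, $\mu(I)\le 2x$ for middle intervals, $\mu(I_0^\infty)\le 2\mu$ via the median) you do not obtain the monotone comparison function the pointwise guarantee needs.
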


The proof of~\Cref{lem:osh-policy} is deferred to~\Cref{app:ws}. Here, we provide some intuition on how this result is obtained. A commitment $\pi\in\comset{\mc}$ for WS will declare in advance a decision tree over pre-specified thresholds against which it will weigh $X$, resulting in a Markov chain $\mc^\pi$. Importantly, $\pi$ will end up partitioning the support $\X$ of distribution $\dist$ into a set of intervals, corresponding to its terminal states. Furthermore, the probability of running the Markov chain $\mc^\pi$ and ending up in a terminal state $t$ corresponding to some interval $I$ will be precisely $\prob{X\in I}$. A pictorial representation of such Markov chains is given in~\Cref{fig:ws_plots}. By directly relating the water filling amortized cost on these intervals for the One-Sided Halving algorithm to the surrogate cost of the underlying MDP, we are able to obtain our pointwise approximation guarantees.

\begin{figure}[h!]
    \centering
    \includegraphics[width=0.7\textwidth]{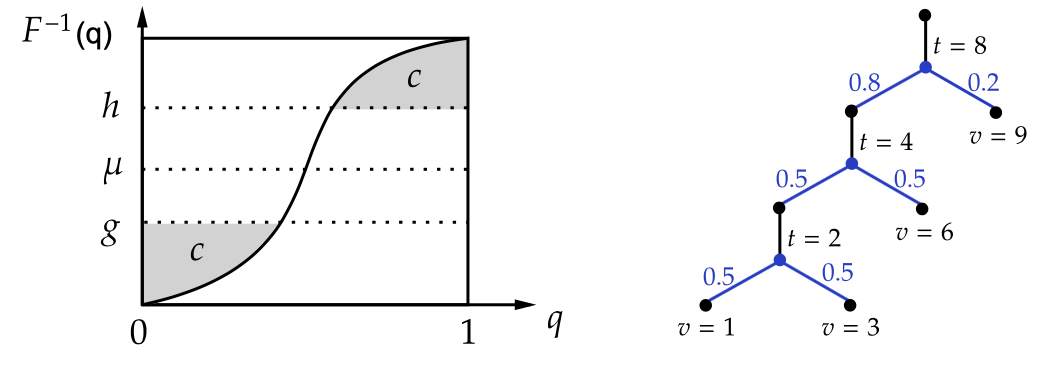}
    \caption{The indices $g$ and $h$ of the alternative can be obtained by the inverse CDF of the random value; the highlighted ares equal the weighing cost $c$. 
    The figure on the right corresponds to the Markov chain for One-Sided Halving instantiated with threshold $t_1=2$, $t_2=8$ for an instance with value $X\sim\mathrm{Unif}[0,10]$. The terminal states partition the support of $X$; the probability and value at each terminal equals the probability and the conditional expectation of the corresponding interval.}
    \label{fig:ws_plots}
\end{figure}

\newpage
\section{Pandora's Box with Optional Inspection (Maximization)}
\label{sec:gvo}
Finally, we consider the maximization version of \emph{Pandora's Box with Optional Inspection} (henceforth, PBOI), arguably the most studied Pandora's Box variant. In PBOI, the decision-maker may select a box without opening it; we refer to this action as ``grabbing'' the box. There is a rich line of research addressing PBOI; we provide a detailed discussion of these results in~\Cref{sec:related}. In the single-selection setting, PBOI is NP-hard but admits a PTAS \citep{FLL22,BC22}. For matroid constraints, \cite{BK19} show that the commitment gap is at least $0.63$; however, their proof is nonconstructive, so the only known way to achieve this ratio is to enumerate and evaluate the exponentially many committing policies.

Currently, there is no known policy that achieves better than a $0.5$-approximation for matroid-PBOI. Additionally, this $0.5$-approximation is somewhat uninteresting. As \citet{BK19} observe, the optimal policy that does not grab any box will attain half of the utility from boxes that the optimal doesn't grab, and the optimal policy that only grabs boxes will attain half of the utility from boxes that the optimal grabs. Thus, uniformly randomizing between these policies will immediately give us a $0.5$-approximation.\footnote{%
\Citet{BK19} make this observation in the single-item setting, but combining it with results of \citet{S17} generalizes it to the matroid setting.} In this section, we present the first improvement over the trivial approximation factor of $0.5$ for matroid-max-PBOI:

\begin{theorem}\label{thm:gvo-approximation}
    There exists an efficient randomized committing policy that achieves a $0.582$-approximation to the optimal adaptive policy for any instance of matroid-max-PBOI.
\end{theorem}

In~\Cref{sec:pboi_prelims} we cast PBOI into our CICS framework by considering each optional-inspection box as an MDP. A reasonable first approach would be to establish a local approximation condition for these MDPs in order to prove~\Cref{thm:gvo-approximation}; however, we show that for any $\epsilon>0$, there exist boxes for which we cannot do better than a $(0.5 + \epsilon)$-local approximation (\Cref{sec:pboi_no_local}). We use the intuition from these hard instances to develop a refinement of local approximation, which we call \emph{semilocal approximation}; we then show that, similarly to local approximation, semilocal approximation guarantees can be composed into a (global) approximation ratio (\Cref{sec:pboi_semilocal}). Finally, in \Cref{sec:pboi_approx} we show that this new notion of approximation suffices to establish the bound of~\Cref{thm:gvo-approximation}. Omitted proofs are presented in~\Cref{app:gvo}.

\paragraph{Limitations of Semilocal Approximation.} There are two prices we pay in refining local approximation to semilocal approximation. First, the definition as we state it currently is specific to PBOI, though it could perhaps be generalized to other MDP families in the future. Second, our composition result for semilocal approximations only holds for matroid feasibility constraints, whereas past works showed results compatible with any frugal (roughly, ``greedy'') algorithm \citep{S17, GJSS19, DS24}. We are not sure whether this second limitation is fundamental or might be overcome using a different proof technique.

\subsection{PBOI Preliminaries}\label{sec:pboi_prelims}
We begin by establishing the necessary notation for PBOI. All the definitions mentioned in this section were previously established for the minimization setting by \cite{DS24}; we simply restate them here for the maximization setting.

Fix any optional-inspection PB $\pbox=(\dist, c)$. We use $X\sim\dist$ to denote the value of the box and $\mu=\expect{X}$ to denote its expectation. We can model $\pbox$ as a Costly Information MDP $\mc$ with two actions: opening the box at a cost of $c$ and transitioning to a terminal state of reward $X\sim\dist$, and grabbing the box at a cost of $0$ and transitioning to a terminal state of reward $\mu$ (\Cref{fig:optional}). Therefore, the set of (deterministic) local commitments for $\mc$ consists of just two options $\{o,g\}$ and the optimality curve of $\mc$ can be written as $\optf_\mc (y) := \max\{\optf_{\mc^o}(y), \optf_{\mc^g}(y)\}$, where
\[\optf_{\mc^o}(y) := \max\{y, \expect{\max(X,y)} -c\} \quad \text{,} \quad \optf_{\mc^g}(y) := \max\{y, \mu\}\]
are (by definition) the optimality curves under the two commitments. Using the inherent connection between optimality curves and surrogate costs, we can use these expressions to recover the following \citep{D18}:

\begin{definition}[\textbf{Surrogate costs for PBOI}] \label{def:pboi_sur_cost} 
Let $\pbox = (\dist, c)$ be any optional-inspection box. We define its Gittins index $g$ as the solution to equation $c = \expectt{X\sim\dist}{(X-g)^+}$ and its backup index as $h=\max(\mu,h')$, where $h'$ is the solution to equation $c = \expectt{X\sim\dist}{(h'-X)^+}$. Then:
\begin{enumerate}
    \item $W^*_o := W^*_{\mc^o} = \min\{X,g\}$ satisfies $\optf_{\mc}^o(y) = \expect{\max\{W^*_o, y\}}$.

    \item $W^*_g := W^*_{\mc^g} = \mu$ satisfies $\optf_{\mc}^g(y) = \expect{\max\{W^*_g, y\}}$.

    \item $W^* := W^*_{\mc} = \max\{W^*_o, h\}$ satisfies $\optf_{\mc}(y) = \expect{\max\{W^*, y\}}$.
    
\end{enumerate}
\end{definition}

These indices have an intuitive interpretation: consider the local game $(\mc,y)$ for some value of the outside option $y$. For small values of $y$, the optimal policy will immediately grab the box, and for large values of $y$ it will accept the outside option. The Gittins index $g$ corresponds to the maximum value of the outside option for which opening is preferable to stopping, and $h'$ corresponds to the minimum value of the outside option for which opening is preferable to grabbing. In other words, the optimal policy for the local game $(\mc,y)$ will grab the box if $y\in [0,h']$, it will open the box if $y\in [h', g]$ and it will accept the outside option if $y\geq g$. Note that if $g<h'$, opening the box is never the optimal action. We refer to such boxes $\pbox = (\dist, c)$ as degenerate, and since they are never opened by any optimal policy, we can substitute them by normalized boxes $\pbox = (\mu, 0)$ without loss of any generality. From now on, we will be assuming that all degenerate boxes have been normalized; this in turn implies that $h=h'\leq \mu$ and $c\leq \mu\leq g$ (\Cref{fig:optional}).

\begin{figure}[h!]
    \centering
    \includegraphics[width=0.9\textwidth]{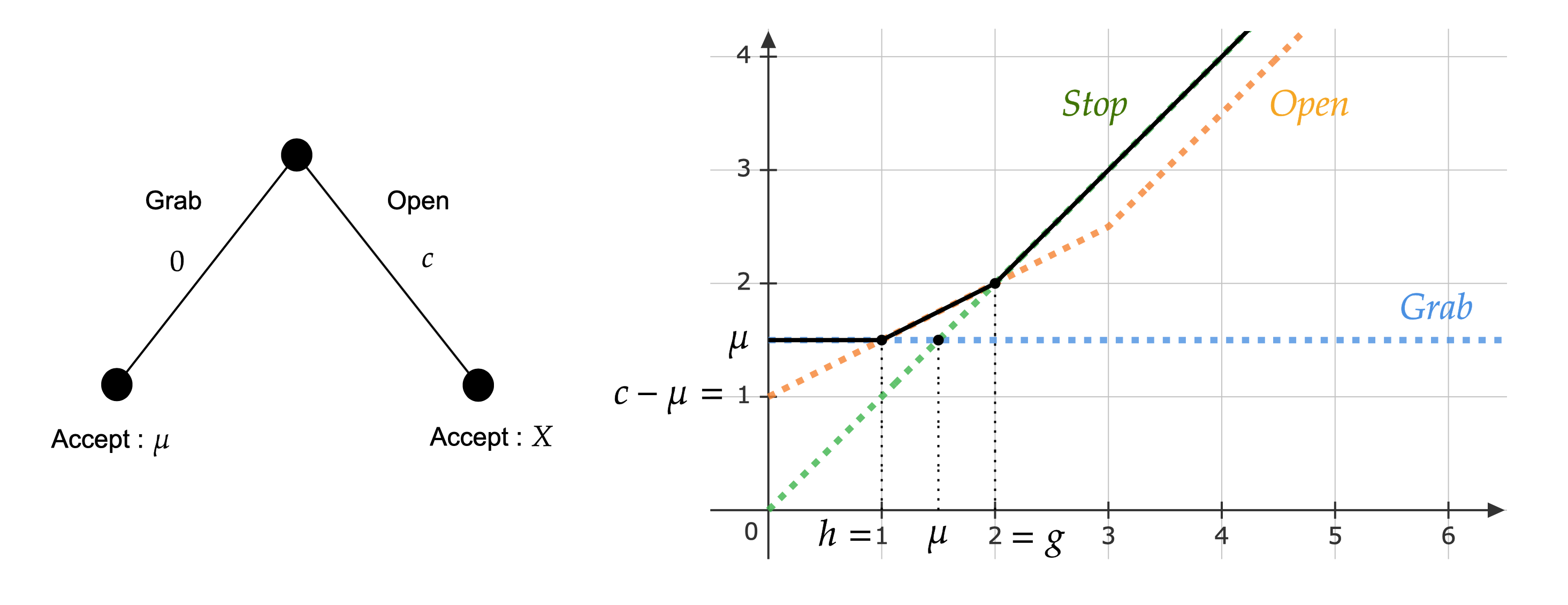} 
    
    \caption{An optional inspection box $\pbox = (\dist, c)$ can me modeled as an MDP $\mc$ with two actions, grab and open, incuring costs $0$ and $c$ and resulting to rewards $\expect{X}$ and $X$ respectively. On the right side, we demonstrate the optimality curves for the local game $(\mc,y)$ for a box with cost $c=0.5$ and reward $X=3\cdot \mathrm{Be}(1/2)$.}%
    \label{fig:optional}%
\end{figure}

\subsection{Insufficiency of Local Approximation}\label{sec:pboi_no_local}
Since any MDP $\mc$ describing an optional-inspection box only has two actions, any (randomized) commitment $\pi\in\comset{\mc}$ is fully described by the probability $p\in [0,1]$ of committing to the grab action. Therefore, by \cref{def:pboi_sur_cost}, the $\alpha$-local approximation condition for max-PBOI translates to
\[\exists p\in [0,1]: \;\; (1-p)\cdot \expect{\max\{W^*_o, y\}}+ p\cdot\expect{\max\{W^*_g, y\}} \geq \expect{\max\{\alpha W^*, y\}}.\]

We will now demonstrate a specific box where no commitment $p\in [0,1]$ can achieve a local approximation factor that is better than $1/2$.

\begin{example}\label{ex:pboi}
    For any $n \geq 1$, consider the optional-inspection box $\pbox^n = (\dist, c)$ where $c=n-1$ and
        \[
            \dist = \begin{cases}
                1 & \text{w.p. } 1-\frac{1}{n^2} \\
                n^3 & \text{w.p. } \frac{1}{n^2} \\
            \end{cases}.
        \]

\noindent Straightforward computations show that the parameters of this box are $\mu = n+1-n^{-2}$, $g=n^2$ and $h=1 + n^2/(n+1)$. Using these expressions and the definition of surrogate costs (\Cref{def:pboi_sur_cost}), we can proceed to write-down the local approximation condition at points $y=0$ and $y=\mu$ as follows:
\[\text{Local approximation at $y=0$: }\;\;\alpha \leq 1 - (1-p)\cdot \frac{n^3-n^2}{n^3+n^2-1}\]
\[\text{Local approximation at $y=\mu$: }\;\;\alpha \leq 1 - p\cdot \frac{n^2-n-1+n^{-2}}{n^2}\]
Just from these two values of $y$, we obtain that as $n\rightarrow \infty$, we have that $\alpha\leq 1 - \max(p,1-p)\leq 0.5$.
\end{example}

To better understand this ``worst-case'' behavior that \Cref{ex:pboi} demonstrates, it will be illustrative to visualize the optimality curves of the box $\pbox^n$ for some large value of $n$. \Cref{fig:vanilla-local-hedging-bad-case-box} shows the curves for $n = 20$. We see that when $y$ is small, we have a strong preference for grabbing the box, and when $y$ is larger, we have a very slight preference for opening the box. Intuitively, we might guess that committing to the grabbing action performs quite well, since in the worst case there is only a small \emph{additive} sub-optimality factor. However, local approximation requires a small \emph{multiplicative} sub-optimality factor, which, as \cref{ex:pboi} above demonstrates, we cannot achieve.

\begin{figure}[h!]
    \centering
    \includegraphics[width=0.4\textwidth]{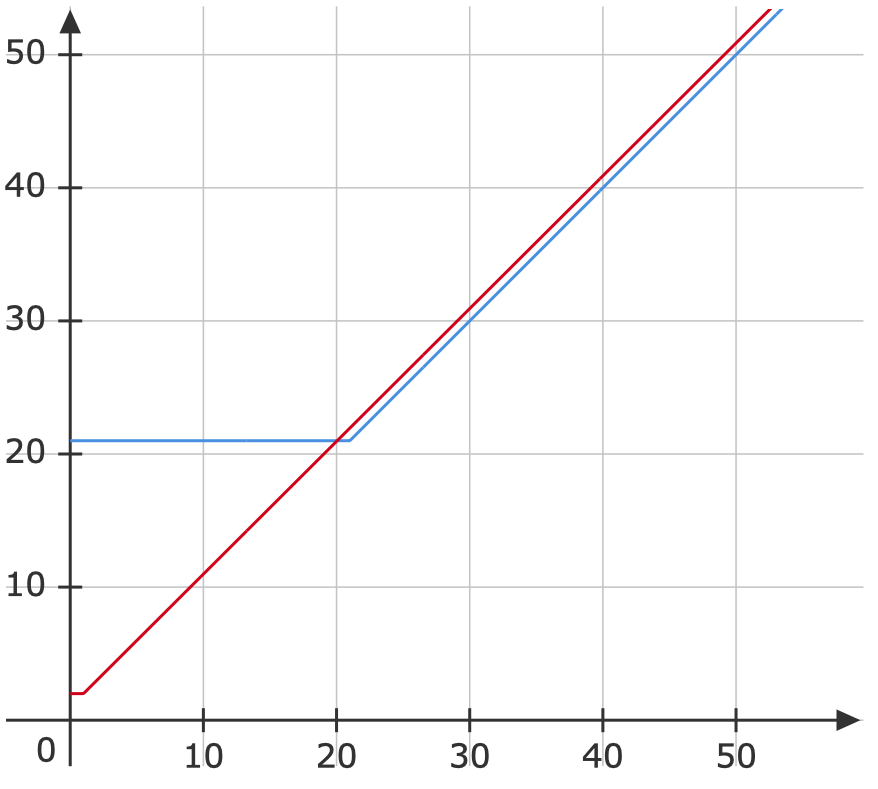}
    \caption{The red and blue lines represent the optimality curves for grabbing $\pbox^{20}$ and opening $\pbox^{20}$, respectively. This best possible local approximation achieved for this box is $0.537$.}%
    \label{fig:vanilla-local-hedging-bad-case-box}%
\end{figure}

\subsection{Semilocal Approximation}\label{sec:pboi_semilocal}

The above discussion suggests that we need to reason about additive suboptimality, not just multiplicative. We do so by introducing $(\alpha, \beta)$-\emph{semilocal approximation}, defined below. Roughly speaking, the $\alpha$ captures multiplicative suboptimality, while $\beta$ captures the additive suboptimality. Formally:

\begin{definition}[\textbf{Semilocal Approximation}]
    \label{def:semilocal_approx}
    We say that an optional-inspection box $\pbox = (\dist, c)$ admits an \emph{$(\alpha, \beta)$-semilocal approximation} if there exists a probability $p\in [0,1]$ such that for all $y\in\mathbb{R}$:
    \begin{equation*}
        (1 - p) \cdot \E{\max\Bgp{W^*_o, y}} + p \cdot \E{\max\Bgp{W^*_g, y}}
        \geq \E{\max\Bgp{\alpha W^*, y}} - p \beta \mu.
    \end{equation*}
\end{definition}

Re-writing the above expression as
\[
(1 - p) \cdot \E{\max\Bgp{W^*_o, y}} + p \cdot (\E{\max\Bgp{W^*_g, y}} + \beta\mu)
        \geq \E{\max\Bgp{\alpha W^*, y}} ,
\]
it becomes clear that $(\alpha,\beta)$-semilocal approximation is basically an $\alpha$-local approximation where we have boosted the value attained from grabbing by an additive factor proportional to the mean. It is important that we do this only for grabbing—boosting the value of inspecting by an additive factor would lead to poor guarantees when composing semilocal approximations in the matroid selection setting.

Equipped with the semilocal approximation definition, we now show that semilocal approximations can be composed under matroid feasibility constraints:

\begin{restatable}{theorem}{semilocalcompo}\label{thm:semilocal_composition}
    Let $\alpha>\beta\geq 0$, and let $\instance = (\pbox_1, \dots, \pbox_n, \feas)$ be any instance of matroid-max-PBOI where each constituent box $\pbox_i$ admits an $(\alpha, \beta)$-semilocal approximation under some probability $p_i\in [0,1]$. Then, $\cg(\instance)\geq \alpha - \beta$.
\end{restatable}

\begin{algorithm}
    \SetAlgoLined
    \SetKwInOut{Input}{Input}
    \SetKwFor{For}{For }{\unskip:}{}{}
    \SetKwIF{If}{ElseIf}{Else}{If}{\unskip:}{Else if}{Else}{}
    \SetKwComment{Comment}{}{}
    \DontPrintSemicolon
    \caption{Semilocal Approximation Composition Algorithm}
    \label{alg:semilocal_composition}
    \Input{%
        A normalized matroid-max-PBOI instance $\instance = (\pbox_1, \dots, \pbox_n, \feas)$. \\
        A vector of probabilities $(p_1, \dots, p_n)$.
    }
    \smallskip
    Relabel the boxes such that $\mu_1 \geq \dots \geq \mu_n$
    \;
    \smallskip
    $S^\grab \gets \{\}$
    \Comment*{\color{gray}\normalfont%
        set of boxes marked as ``grab''}
    \smallskip
    \For{\upshape$i \gets 1, \dots, n$}{%
        \smallskip
        
        $L_i \gets \textbf{if } S^\grab \cup \{i\} \in \feas \textbf{ then } 1 \textbf{ else } 0$
        \Comment*{\color{gray}\normalfont%
            $L_i = 0$ means we never want to grab box~$i$}
        \smallskip
        Sample $K_i \gets \Bernoulli(p_i)$
        \Comment*{\color{gray}\normalfont%
            $K_i = 1$ means \emph{provisionally} mark box~$i$ ``grab''}
        \smallskip
        \If{\upshape$K_i = 1 \textbf{ and } L_i = 1$}{
            \smallskip
            $S^\grab \gets S^\grab \cup \{i\}$
            \Comment*{\color{gray}\normalfont%
                fully mark box~$i$ as ``grab''}
        }
    }
    \smallskip

    Commit to grabbing the boxes $S^\grab$ and opening the boxes in $[n]\setminus S^\grab$.
    \;
    \smallskip
    Run the optimal (index) policy under the resulting commitment.
    \;
\end{algorithm}

The committing policy achieving the bound of~\Cref{thm:semilocal_composition} is presented in~\Cref{alg:semilocal_composition}. Importantly, this policy does not commit to grabbing each box $\pbox_i$ with probability $p_i$ independently (as would be the case when composing local approximation guarantees). Instead, it examines the boxes in decreasing mean order, flips a coin with probability $p_i$ for each box $\pbox_i$ and only commits to grabbing it if (i) the coin flip succeeds and (ii) the set of boxes that are committed to the grab action after inserting $\pbox_i$ remains an independent set of the underlying matroid. Once the commitments for each box have been specified, we simply run the optimal policy for the resulting matroid-max-CICS instance over Markov chains (equivalently, we substitute each grab box $(\dist, c)$ with a box $(\mu, 0)$ and run Weitzman's algorithm for the mandatory inspection setting). Therefore, as long as the probabilities $p_i$ achieving the semilocal approximation can be efficiently computed, the entire algorithm runs in polynomial time. The proof of~\Cref{thm:semilocal_composition} is deferred to~\Cref{app:gvo}.

We note that both the definition and composition algorithm for semilocal approximation are specifically tailored to the PBOI model, though we believe modest extensions beyond PBOI may be possible. The main limitaiton is that the proof of \cref{thm:semilocal_composition} relies crucially on the fact that if we choose to grab a box, then the box's terminal value is deterministic, namely~$\mu$. It is thus natural to try extending semilocal approximation to other Pandora's box variants that have a grab-like action that yields a deterministic terminal value, but we leave this to future work.

\subsection{Breaking the 0.5 Barrier}\label{sec:pboi_approx}
We are finally ready to prove~\Cref{thm:gvo-approximation}. From~\Cref{thm:semilocal_composition}, it suffices to prove that all optional inspection boxes admit an $(\alpha,\beta)$-semilocal approximation for some $\alpha,\beta$ such that $\alpha-\beta \geq 0.582$. This is achieved by the following lemma, proven in~\Cref{app:gvo}.

\begin{restatable}{lemma}{semilocalconstant}\label{lem:semilocal-constant}
    Let $\pbox$ be any (non-degenerate) box with opening cost $c$ and mean value $\mu$. For any $\beta\geq 0$, let
    \[
        \alpha(\beta) := \begin{cases}
            \frac{1}{1+\frac{c}{\mu}-\beta\frac{c}{\mu-c}} &\;\;, \text{if } \frac{1}{1+\frac{c}{\mu}-\beta\frac{c}{\mu-c}} \in (0,1]. \\
            1 &\;\; ,\textnormal{otherwise.}
        \end{cases}
    \]
Then, the probability $p=\frac{c}{\mu}\alpha(\beta)$ achieves an $(\alpha(\beta),\beta)$-semilocal approximation for $\pbox$.
\end{restatable}

By optimizing over $\alpha(\beta)-\beta$, we instantiate \Cref{lem:semilocal-constant} with $\beta = 0.1$, to obtain the corresponding probability. It is straightforward to verify (using a computer algebra system or numerical solver) that since $\frac{c}{\mu}\in[0,1]$ (recall that we have normalized all degenerate boxes), we have \[\gp*{1+\frac{c}{\mu}-\frac{c}{10(\mu-c)}}^{-1} \geq 0.682\] whenever $\gp[\big]{1+\frac{c}{\mu}-\frac{c}{10(\mu-c)}}^{-1} \in (0,1]$. This means $\alpha(0.1)\geq 0.682$, and thus all boxes admit a $(0.682, 0.1)$-semilocal approximation, from which the proof of \Cref{thm:gvo-approximation} follows.

Critical to the above argument is that \cref{lem:semilocal-constant} implies there exists a \emph{universal} pair $(\alpha, \beta)$, namely $(0.682, 0.1)$, such that all boxes admit an $(\alpha, \beta)$-semilocal approximation. It would \emph{not} suffice to show only that for each box, there exists a \emph{box-specific} pair $(\alpha', \beta')$ such that the box admits a $(\alpha', \beta')$-semilocal approximation, even if we always had $\alpha' - \beta' \geq 0.582$. This is because \cref{thm:semilocal_composition} requires all boxes to admit a $(\alpha, \beta)$-approximation for the same pair $(\alpha, \beta)$.


\newpage
\appendix
\section*{Appendix}
\section{The Maximization Setting}\label{app:maxim}

In this section, we describe how our entire framework extends to the maximization setting under matroid feasibility constraints. Our objective will be to restate our claims from Sections~\ref{sec:amortization} and~\ref{sec:local_approx}, as they were only established with respect to the matroid-min-CICS problem. Since the proofs follow exactly the same steps, rather than re-proving all our results, we simply discuss the differences, where there are any.

\subsection{Amortization for Markov Chains}
The cost amortization of a Markov chain is defined in the same manner, with the difference that instead of increasing the terminal values of the trajectories to obtain surrogate costs, we now \textit{decrease} them to obtain \textit{surrogate values}. Furthermore, the index of a state now corresponds to the \textit{maximum} surrogate value among all downwards trajectories instead of the minimum. Formally:

\begin{definition}[\bf Cost Amortization for Maximization]
A cost amortization for any Markov chain $\mc=(S, \sigma, A, c, \dist, V, T)$ is a non-negative vector $b=\{b_{s\tau}\}_{s\in S, \tau\in \traj(s)}$ satisfying $\sum_{\tau\in \traj(s)} p(\tau)b_{s\tau}= p(s)c(s)$ for all states $s\in S$. Based on this amortization, we define:
\begin{itemize}
    \item The amortized value of a trajectory $\tau\in \traj$ as $\surr_b(\tau) := v(\tau) - \sum_{s\in\tau} b_{s\tau_s}$. 
    
    \item The surrogate value of the Markov chain $\mc$ as the random variable $\surr_{\mc,b}$ that takes on value $\surr_b(\tau)$ for $\tau\in \traj$ with probability $p(\tau)$.
    
    \item The index of a state $s\in S$ of the Markov chain $\mc$ as $\ind_{\mc, b}(s):=\max_{\tau\in \traj: s\in\tau}\surr_b(\tau)$.
\end{itemize}
\end{definition}

Intuitively, we postpone the payment of the action costs until a terminal state is accepted, in which case a smaller (compared to the original value) surrogate value is collected. From this, we follow precisely the same steps as in the proof of~\Cref{lem:MC-lb} to upper bound the the utility of any algorithm for matroid-max-CICS through the surrogate values.

\begin{lemma}[Markov chain upper upper for maximization settings]
\label{lem:MC-lb-maxim}
    Consider any instance $\instance = (\vecM ,\feas)$ of matroid-max-CICS over Markov chains and let $b_i$ be any cost amortization of $\mc_i$ with surrogate value $\surr_i := \surr_{\mc_i, b_i}$ for all $i\in [n]$. Then, $\opt(\instance)\leq\expect{\max_{S\in\feas} \sum_{i\in S}\surr_i}$.
\end{lemma}

Up next, we extend our definition of water filling amortization to the maximization setting. Like before, the {\bf water draining cost amortization} is described algorithmically in a bottom-up fashion where states are considered in the chain in reverse topological order, starting from the terminals up towards the root $\sigma$. Trajectories $\tau\in\traj(t)$ for terminal states $t\in T$ are singletons and do not carry cost shares. Consider a state $s$ such that the cost shares for all states reachable from it have been determined. The state distributes its total cost $c(s)$ across its downstream trajectories $\tau\in\traj(s)$, starting from those with the \textbf{maximum current value}, until the equation $\sum_{\tau\in \traj(s)} p(\tau)b_{s\tau}= p(s)c(s)$ is satisfied. In other words, instead of increasing the cost of the minimum-cost trajectory, we now decrease the value of the maximum-value trajectory. We use $W^*_\mc$ to denote the water draining surrogate value of a Markov chain $\mc$, and $\ind^*_\mc(s)$ to denote the water draining index of a state $s$ in $\mc$.

Through this amortization, we once again define the corresponding index based policy; naturally, the policy will now select to advance the feasible Markov chain of \textit{maximum} index. 

\begin{definition}[Water Draining Index policy]
The water draining index policy for an instance $\instance = (\vecM ,\feas)$ of matroid-max-CICS over Markov chains selects at every step the Markov chain $i^*=\argmax_{i\in\feas_S} \ind^*_i(s_i)$, where $s_i$ is the current state of Markov chain $\mc_i$; $S$ is the set of terminated (selected) Markov chains; and $\feas_S=\{i: S\cup\{i\} \in \feas\}$.
\end{definition}

From the same observations as in the proof of~\Cref{lem:MC-lb}, it immediately follows that this algorithm achieves two desired properties: it always selects the feasible set of Markov chains with maximum total surrogate value, and whenever it advances a state $s\in S$ that contributes to the surrogate value of a downwards trajectory $\tau\in \traj(s)$ (i.e. $b_{s\tau}>0$) and nature realizes $\tau$, the algorithm will select it. From this, and the fact that the algorithm that greedily adds the maximum weight feasible element is optimal for the matroid independent set problem, the following counterpart to~\Cref{thm:MC-opt} follows, establishing the optimality of the Water Draining Index policy for matroid-max-CICS.

\begin{theorem}
\label{thm:MC-opt-maxim}
   For any matroid-max-CICS instance $\instance = (\vecM ,\feas)$ over Markov chains, the expected utility of the water draining index policy is equal 
    to $\expect{\max_{S\in\feas} \sum_{i\in S}W^*_{\mc_i}}$. The policy is therefore optimal for $\instance$.
\end{theorem}

\medskip
\subsection{Optimality Curves} 
The definition of a local game $(\mc,y)$ naturally extends to the maximization setting; at any step, the decision maker can either advance the Markov chain $\mc$ at a cost (until it reaches a terminal state in which case it may collect its value and terminate), or collect the reward $y$ of the outside option and terminate. Like before, we use $f_\mc(y)$ to denote the utility of the optimal adaptive policy in this game. From~\Cref{thm:MC-opt-maxim}, we immediately have that
\[f_\mc(y) = \expect{\max\{ y, W^*_{\mc}\}}.\]
From this expression, we obtain that the CDF of the water draining surrogate value $W^*_\mc$ can be derived from the
optimality curve as $\dd{y}{f_\mc(y)}$. This in turn allows us to define water draining surrogate values for arbitrary MDPs.

\begin{definition}
[Water draining surrogate values for MDPs]
 Let $\mc$ be an MDP with optimality curve $\optf_\mc$. The water draining surrogate value for $\mc$ is the random variable $W^*_\mc$ generated from the CDF $\dd{y}{\optf_\mc(y)}$. That is, $W^*_\mc$ is the random variable satisfying $\optf_\mc(y) = \expect{\max(y,W^*_\mc)}$ for all $y\in\real$.
\end{definition}

\medskip
\subsection{Amortization for MDPs} 
We will now use the definition of the water draining surrogate values to prove the following counterpart of~\Cref{lem:mdp-wf-char-new} that characterizes the water draining surrogate value of an MDP. Since this is the most technical proof of the framework and there are a few arguments that change in the maximization setting, we provide a proof sketch for the result.

\begin{lemma}
\label{lem:wf-char-maxim}
    For any MDP $\mc=(S, \sigma, A, c, \dist, V, T)$ and any deterministic commitment $\pi\in\comset{\mc}$, generating a Markov chain $\mc^\pi$ with states $S_\pi\subseteq S$, realizable trajectories $\traj_\pi\subseteq\traj$ and a distribution $p_\pi$ over trajectories and reachable states, there exists an amortized cost function $\surr^\pi:\traj_\pi \mapsto \Delta(\real)$ mapping trajectories to distributions over costs and a non-negative cost sharing vector $b^\pi = \{b^\pi_{s\tau}\}_{s\in S_\pi, \tau\in \traj_\pi(s)}$, such that the following properties hold:
    \begin{enumerate}
    \smallskip
    \item \textbf{\emph{Cost Sharing:}} the amortized cost of a trajectory pays for its own acceptance value and the cost shares it sends to upstream states. For all $\tau\in \traj_\pi$, it holds that $\expect{\surr^\pi(\tau)} = v(\tau) - \sum_{s\in \tau} b^\pi_{s\tau_s}$.
    \smallskip

    \item \textbf{\emph{Cost Dominance:}} the cost shares received by any state pay towards its action cost, but do not overpay. For all $s\in S_\pi$, it holds that $\sum_{\tau\in \traj_\pi(s)} p_\pi(\tau) b^\pi_{s\tau}\le p_\pi(s)c(\pi(s))$.
    \smallskip

    \item \textbf{\emph{Action Independence:}} sampling a trajectory $\tau\sim p_\pi$ and then sampling from the amortized cost distribution $\surr^\pi(\tau)$ generates a random surrogate cost for the MDP. This random variable is distributed identically to the water draining surrogate cost $W^*_\mc$. 
\end{enumerate}
\end{lemma}

\begin{proof}
    Notice that the only change with respect to the statement of~\Cref{lem:mdp-wf-char-new} for the minimization setting is the negative sign in the cost sharing property. Our proof will mirror the proof of~\Cref{lem:mdp-wf-char-new}, and we use the same notation throughout. Once again, the proof proceeds by induction on the horizon of the MDP; the $H=1$ case remains trivial.

    For the amortization of the action costs $c(a_j)$, we will now define $g_j$ to be the solution to equation
    \[\expect{Z_j} - c(a_j) = \expect{\min\{g_j,Z_j\}}\]
    and by defining $\hat{Z}_j:= \min\{g_j, Z_j\}$, we have
    \[\max\{y , \expect{\max\{y, Z_j\}} - c(a_j) \} = \expect{\max\{y , \hat{Z}_j\}} \]
    which in turn allows us to argue that
    \[\expect{\max\{y, W^*_\mc\}} \geq \expect{\max\{y,\hat{Z}_j\}}\]
    for all $y\in \real$ and $j\in [k]$.
    
    Using the identity $\max(a,b) = - \min(-a,-b)$, this implies that for all $j\in [k]$, the random variable $(-\hat{Z}_j)$ second-order stochastically dominates the random variable $(-W^*_\mc)$. From~\Cref{lemma:sdom}, this allows us to obtain mappings $m_j:\mathrm{supp}(\hat{Z}_j)\mapsto \Delta(\mathrm{supp}(W^*_\mc))$ such that:
    \begin{enumerate}
        \smallskip
        \item For each $j\in [k]$, $W^*_\mc$ can be obtain by sampling a $z\sim \hat{Z}_j$ and then sampling from $m_j(z)$.
        \smallskip
        \item For each $z\in\mathrm{supp}(\hat{Z}_j)$, we have $\expect{m_j(z)}\geq z$.
    \end{enumerate}
    \smallskip\noindent
    
    We can now define the amortized cost functions $\surr^\pi$ and the cost sharing vectors $b^\pi$. Fix any deterministic commitment $\pi\in\comset{\mc}$ and let $j = \pi(\sigma)\in [k]$ be the fixed action that $\pi$ takes at state $\sigma$. For each state $s\in R_j$, we use $\pi|s\in\comset{\mc_s}$ to denote the commitment $\pi$ after we transition to state $s$; note that this is also a deterministic commitment. By definition, each MDP $\mc_s$ has horizon up to $H$ and thus by the induction hypothesis, it admits a mapping $\surr^{\pi|\sigma}(\cdot)$ and a non-negative cost sharing vector $b^{\pi|\sigma}$, satisfying the properties of the lemma. Now, consider any trajectory $\tau\in\traj_\pi$ and observe that $\tau = \{\sigma, \tau_s\}$ for some $s\in R_j$ and some $\tau_s\in\traj_{\pi}(s)$. We define
    \[\surr^\pi(\tau) := m_j(\min\{g_j, \surr^{\pi|s}(\tau_s)\})\] 
    and
    \[b^\pi_{\sigma \tau} := \expect{\surr(\tau)}-\expect{\surr^{\pi|s}(\tau_s)}\]
    and for all other $s\in S_\pi\setminus\{\sigma\}$ and $\tau\in\traj_\pi(s)$, we use the same cost share $b^\pi_{s\tau} = b^{\pi|s}_{s\tau}$ that was used in $\mc_s$. The proof of the three properties follows precisely the same steps as~\Cref{lem:mdp-wf-char-new}.
\end{proof}

From~\Cref{lem:wf-char-maxim}, we immediately obtain the following counterpart of~\Cref{thm:MDP-lb}; the proof follows exactly the same steps as the proof of~\Cref{thm:MDP-lb} that we presented in~\Cref{app:amortization}, and is thus omitted.
\begin{theorem}
\label{thm:MDP-lb-max}
 In any instance $\instance = (\vecM ,\feas)$  of matroid-max-CICS, $\opt(\instance)\leq \expect{\max_{S\in\feas}\sum_{i\in S} W^*_{\mc_i}}$.
\end{theorem}

\medskip
\subsection{Local Approximation}
Finally, we note that our notion of local approximation seamlessly extends to the maximization setting by simply changing the inequality order. In particular:
\begin{definition}[Local approximation for maximization settings]\label{def:local-approx-maxim}
Let $\mc$ be any MDP and let $\alpha\in (0,1]$. We say that a commitment $\pi\in\comset{\mc}$ is an $\alpha$-local approximation for $\mc$ if 
    $\optf_{\mc^\pi}(\alpha y)\geq \alpha\cdot \optf_\mc(y)$ for all $y\in\real$.
\end{definition}

\smallskip
\noindent Notice that in the maximization setting, we have an $\alpha$-local approximations for $\alpha\in (0,1]$. By following exactly the same steps as in the proof of~\Cref{thm:la-comp}, the following composition theorem is immediate:
\begin{theorem}[Composition theorem for maximization settings]\label{thm:la-comp-max}
Let $\instance = (\vecM , \feas)$ be any instance of matroid-max-CICS, where each constituent MDP $\mc_i$ admits an $\alpha$-local approximation under some commitment $\pi_i\in\comset{\mc_i}$. Then, $\cg (\instance)\geq \alpha$.
\end{theorem}

\medskip
\noindent By combining~\Cref{thm:la-comp-max} with the optimality of the Water Draining Index policy (\Cref{thm:MC-opt-maxim}) and the upper bound of~\Cref{thm:MDP-lb-max}, we obtain a way to efficiently approximate the optimal solution for matroid-max-CICS assuming that the underlying MDPs achieve good local approximation guarantees.
\newpage
\section{The Combinatorial Setting}\label{app:combinatorial}

In this section, we describe how our entire framework extends to combinatorial settings beyond the case of matroids. We will state our results in full generality and distinguish between minimization and maximization whenever needed. The framework we consider is based on~\citep{S17} and its followup~\citep{GJSS19}, where the authors develop a technique for combinatorial selection over Markov chains. In this section, we show that under local approximation, their results can be seamlessly extended to arbitrary MDPs.

\begin{definition}[CICS] A Costly Information with Combinatorial Inspection (CICS) instance $\instance$ is defined with respect to a set of Costly Information MDPs $\{\mc_i\}_{i=1}^n$, a feasibility constraint $\feas\subseteq 2^{[n]}$ and a function $h:\feas\mapsto \real$. At each step, an algorithm chooses one of the MDPs and advances it through one of its actions. The game terminates once the algorithm accepts a feasible\footnote{We assume that the algorithm will always ensure that this happens, i.e. when having already accepted a set of MDPs $S$, it will never accept another MDP $i$ for which $S\cup \{i\} \cup A\notin \feas$ for all  $A\subseteq [n]$.} set $S\in\feas$ of MDPs. 

For a specific run of the algorithm, let $A$ denote the set of all actions the algorithm took, $S\in\feas$ denote the set of terminated MDPs and $T$ be the corresponding set of terminals it accepted. Then, the total cost of the algorithm for this run in the minimization setting is given by
\[\sum_{t\in T} v(t) + h(S) + \sum_{a\in A} c(a) \]
and the total utility of the algorithm for the run in the maximization setting is given by 
\[ \sum_{t\in T} v(t) + h(S) - \sum_{a\in A} c(a).\]
The optimal adaptive policy in the minimization (resp. maximization) setting is the policy of minimum (resp. maximum) expected cost (resp. utility).
\end{definition}

We note that there are two differences with respect to the matroid setting. The first is that $\feas$ can now be any arbitrary set of constraints. In fact, we won't even have to assume that the set is downwards or upwards close, as long as the algorithms always ensure that they accept a feasible set of MDPs. The second is the addition of the function $h(\cdot)$; this is a function that encodes an extra cost (or reward, depending on the setting) that does not depend on the precise terminals of the MDPs that were selected, but only on the set of terminated MDPs.

The notions of amortization, water filling/draining, optimality curves, surrogate costs and local approximation apply to each MDP separately, and thus are independent of the underlying combinatorial setting. Therefore, all these definitions extend to the combinatorial setting without change. Our \textbf{first contribution} is to extend~\Cref{thm:MDP-lb} beyond the matroid setting and prove that the performance of the optimal adaptive policy in any combinatorial setting is bounded by the surrogate costs of the underlying MDPs. We note that while this result was known in the single-selection setting, we are the first to prove it for the general combinatorial setting. 

\begin{theorem}\label{thm:comb-bound}
    Let $\instance = (\mc_1, \cdots , \mc_n , \feas , h)$ be any instance of CICS. For each $i\in [n]$, let $W^*_{\mc_i}$ be the water filling (resp. water draining) surrogate costs in the minimization (resp. maximization) setting. Then:
    \begin{enumerate}
        \item For the minimization setting, $\mathrm{OPT}(\instance)\geq  \expect{\min_{S\in\feas}(\sum_{i\in S} W^*_{\mc_i} + h(S))}$.
        \item For the maximization setting, $\mathrm{OPT}(\instance)\leq  \expect{\max_{S\in\feas}(\sum_{i\in S} W^*_{\mc_i} + h(S))}$.
    \end{enumerate}
\end{theorem}

Our \textbf{second contribution} is to show that local approximation continues to imply composition results even in the combinatorial setting. In particular, we prove the following extension of~\Cref{thm:la-comp}.

\begin{theorem}\label{thm:comb-la}
    Let $\instance = (\mc_1, \cdots , \mc_n , \feas , h)$ be any instance of CICS. For each $i\in [n]$, let $W^*_{\mc_i}$ be the water filling (resp. water draining) surrogate costs in the minimization (resp. maximization) setting. Finally, for each $i\in [n]$, let $\pi_i\in\comset{\mc_i}$ be some commitment that $\alpha$-locally approximates $\mc_i$. Then:
    \begin{enumerate}
        \item For the minimization setting, $\expect{\min_{S\in\feas}(\sum_{i\in S} W^*_{\mc^{\pi_i}_i} + h(S))}\leq  \alpha\cdot \expect{\min_{S\in\feas}(\sum_{i\in S} W^*_{\mc_i} + h(S))}$.
        \item For the maximization setting, $\expect{\max_{S\in\feas}(\sum_{i\in S} W^*_{\mc^{\pi_i}_i} + h(S))}\geq  \alpha\cdot \expect{\max_{S\in\feas}(\sum_{i\in S} W^*_{\mc_i} + h(S))}$.
    \end{enumerate}
\end{theorem}

By combining~\Cref{thm:comb-bound} with~\Cref{thm:comb-la}, and assuming that our MDPs admit local approximation guarantees, we are left with the task of optimizing over the CICS instance that is generated by the commitments; observe that this is now an instance over Markov chains. In the case of matroid constraints, we showed that we can always efficiently achieve this via the water filling/draining index policy. However, depending on the combinatorial constraint, efficient optimization might not be possible -- consider for example the case where each MDP is a single terminal state corresponding to a set of elements and we need to select a minimum cost set cover.

The final key to the puzzle will be a way of efficiently approximating the optimal policy for a CICS instance $\instance = (\mc_1, \cdots , \mc_n , \feas , h)$ over Markov chains. This is precisely the setting that is considered by~\cite{GJSS19}. In this work, it is shown that a sufficient condition to get an efficient $\beta$-approximation to the optimal policy for a CICS instance over Markov chains is for the underlying pair $(\feas, h)$ to admit a $\beta$-approximate frugal algorithm. We defer the reader to~\cite{GJSS19} for the full details. Here, we will just mention some examples of such combinatorial settings:
\begin{itemize}
    \item Matching constraints in the maximization setting admit a $0.5$-approximate frugal algorithm.

    \item Facility location constraints in the minimization setting admit a $1.861$-approximate frugal algorithm.

    \item Set cover constraints in the minimization setting admit a $\min(f, \log n)$-approximate frugal algorithm where $f$ is the maximum number of sets in which a ground element is present.
\end{itemize}

\bigskip   
\noindent Combining everything, we obtain the following result for the combinatorial setting:
\begin{corollary}
    Let $\instance = (\mc_1, \cdots , \mc_n , \feas , h)$ be any instance of CICS such that:
    \begin{enumerate}
        \item Each MDP $\mc_i$ admits an $\alpha$-local approximation.
        \item The combinatorial setting $(\feas, h)$ admits a $\beta$-frugal algorithm.
    \end{enumerate}
    Then, $\cg(\instance)\leq\alpha\cdot\beta$ for minimization, and $\cg(\instance)\geq\alpha\cdot\beta$ for maximization.
\end{corollary}

In other words, if (i) we can efficiently compute an $\alpha$-local approximate commitment for each MDP and (ii) the combinatorial setting admits a frugal algorithm, then we can efficiently construct a committing policy that $(\alpha\beta)$-approximates the optimal (non-committing) policy. We are left with the task of proving our extended Theorems~\ref{thm:comb-bound} and~\ref{thm:comb-la}. It is not hard to see that both proofs are identical to their matroid counterparts and are thus omitted. In particular:
\begin{enumerate}
    \item The proof of~\Cref{thm:comb-bound} follows exactly the same steps as the proofs of~\Cref{thm:MDP-lb} (for the minimization setting) and~\Cref{thm:MDP-lb-max} (for the maximization setting) that were presented in~\Cref{app:amortization} and~\Cref{app:maxim} respectively. In these proofs, we separately bounded the cost/utility contribution of each MDP by the corresponding surrogate costs/values and simply invoked the feasibility of the optimal adaptive policy at the end. Thus, the exact same proofs imply~\Cref{thm:comb-bound} for any feasibility constraint $\feas$ and any cost/reward function $h(\cdot)$.

    \item The proof of~\Cref{thm:comb-la} follows exactly the same steps as the proofs of~\Cref{thm:la-comp} (for the minimization setting) and~\Cref{thm:la-comp-max} that were presented in~\Cref{app:local_approx} and~\Cref{app:maxim} respectively. In particular, none of these proofs used the fact that $\feas$ is a matroid constraint at any point, and it is also straightforward to see that they immediately extend for any cost/reward function $h(\cdot)$.
\end{enumerate}

\newpage
\section{Challenges of Composing Local Conditions}\label{app:example}

In this section, we demonstrate the challenges of extending Whittle's condition to obtain bounds on the commitment gap. Fix any instance $\instance = (\vecM,\feas)$ of CICS. Whittle's condition states that if the (local) commitments $\pi_i$ are optimal for \textit{all} local games $(\mc_i,y)$, then $\Pi = (\pi_1,\cdots, \pi_n)$ is optimal for $\instance$ and thus the commitment gap is $1$. Naturally, this is a very strong local condition and it is desirable to relax it into an approximation guarantee, while still being able to compose it into a global bound.

A natural approach would be to try and extend this result to approximation guarantees over the local games. However, this does not work. Consider, for example, a minimization MDP $\mc$ with two $0$-cost actions: the first action leads to a deterministic value of $1$, and the second uniformly leads to a stochastic value of $0$ or $1$. Committing to each of these actions leads to an expected cost of $\min(y,1)$ and $1/2\cdot \min(y,1)$ in the local game respectively; the deterministic action achieves a $2$-approximation for all local problems $(\mc, y)$. Now consider a single-selection CICS with $n$ copies of the above MDP. If we commit to the deterministic action in each MDP, our cost is deterministically $1$, whereas taking the stochastic action in each costs us $1/2^n$ -- an exponential gap! 

A different approach would be to consider the negation of Whittle's condition: if a (local) commitment $\pi_i$ is strictly suboptimal for all local game $(\mc_i, y)$, then no optimal policy for the CICS instance $\instance$ will commit to $\pi_i$. This could potentially allow us to rule out some of the commitment policies and simplify our instance. However, we show that this isn't true: there are instances of CICS where the commitment gap is $1$ (and thus the optimal policy is a committing policy) and yet the actions it commits to are unambiguously suboptimal!


\paragraph{The Example.} We consider an instance of single-selection min-CICS consisting of two MDPs ($\mc_1,\mc_2)$. The MDP $\mc_1$ is a depth $1$ Markov chain, consisting of a single action of cost $3$ and resulting to a uniformly random terminal state of value $0$ or $50$. The MDP $\mc_2$ is of height $2$. On the first level, there are two actions $a_1,a_2$ of cost $c(a_1)=c(a_2)=0$. Taking action $a_1$ deterministically results to a terminal state of value $5$. Taking action $a_2$ results to a terminal state of value $0$ with probability $1/2$ and to a non-terminal state $s$ with probability $1/2$. On state $s$, there are two more actions $a_3,a_4$ of cost $c(a_3)=c(a_4)=0$. Taking $a_3$ results to a terminal of value $12$ and taking $a_4$ results to a uniformly random terminal of value $0$ or $50$. 

Notice that there are three committing policies in $\comset{\mc_2}$, corresponding to action sequences $\{a_1\}$, $\{a_2,a_3\}$ and $\{a_2,a_4\}$; we use $\pi_1$, $\pi_{23}$ and $\pi_{24}$ to denote these policies. As we have seen, each of these committing policies $\pi$ implies a Markov chain $\mc_2^\pi$ and an optimality curve for the local game $(\mc_2^\pi , y)$, denoting the cost of the optimal policy for the local game as a function of the outside option $y\geq 0$. For this specific example, it is not hard to see that:
\begin{itemize}
    \item $f_{1}(y) := f_{\mc^{\pi_1}_2}(y) = \min\{y,5\}$.
    
    \item $f_{23}(y) := f_{\mc^{\pi_{23}}_2}(y) =0.5\cdot \min\{y,12\}$.

    \item $f_{24}(y) := f_{\mc^{\pi_{24}}_2}(y) = 0.25\cdot \min\{y,50\}$.

\end{itemize}
\noindent A pictorial representation of the two MDPs and the corresponding optimality curves for the three committing policies on $\mc_2$ is shown in~\Cref{fig:example}.

\begin{figure}[h!]
    \centering
    \includegraphics[width=0.9\textwidth]{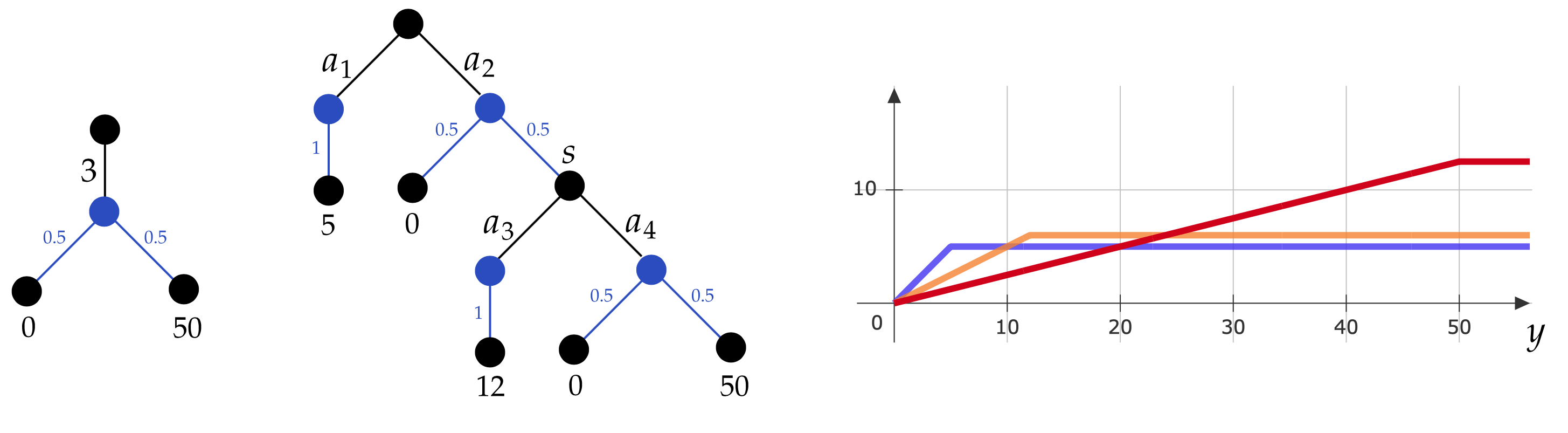}
    \caption{The MDPs $\mc_1$ (left) and $\mc_2$ (right). All the actions in $\mc_2$ have a cost of $0$. There are three committing policies for $\mc_2$ with optimality curves $f_1(y)$ (blue), $f_{23}(y)$ (orange) and $f_{24}(y)$ (red).}
    \label{fig:example}
\end{figure}

The optimality curves demonstrate that for any value of $y$, the committing policy $\pi_{23}$ is sub-optimal for the local game $(\mc_2, y)$; in other words, the optimal policy for the local game will \textit{never} take action $a_3$. One could expect that since $a_3$ is always suboptimal, the optimal policy would never take this action in the context of the bandit superprocess $(\mc_1,\mc_2)$. However, we can show that this isn't true. On the contrary, the optimal algorithm for $(\mc_1,\mc_2)$ will commit to $\pi_{23}$!
\begin{claim}
    The optimal policy for the single-selection min-CICS instance $(\mc_1,\mc_2)$ is the committing policy that commits to the unique commitment for Markov chain $\mc_1$ and to $\pi_{23}$ for MDP $\mc_2$.
\end{claim}
\begin{proof}
    We prove the claim by computing the expected cost of all possible policies for the single-selection min-CICS instance $(\mc_1,\mc_2)$ and showing that the proposed committing policy has the minimum expected cost. At the beginning, every policy needs to either take the costly action of $\mc_1$ or take either action $a_1$ or $a_2$ from $\mc_2$.

    \begin{itemize}
        \item  We first consider the set of policies that start by taking the costly action of $\mc_1$ at a cost of $3$; if the realized terminal has value $0$ then we should clearly terminate, otherwise we play the local game on $\mc_2$ for $y=50$. In other words, the optimal policy that starts by taking the costly action of $\mc_1$ has expected cost
        \[3 + \frac{1}{2}\cdot 0 + \frac{1}{2}\cdot \min\{f_1(50), f_{23}(50), f_{24}(50)\} = 5.5.\]

        \item Up next, we consider the set of policies that start by taking action $a_1$ in $\mc_2$; in that case, we should play the local game on $\mc_1$ with outside option $y=5$ optimally. In other words, the optimal policy that starts by taking action $a_1$ on $\mc_2$ has expected cost
         \[\min\{5, 3 + \frac{1}{2}\cdot 0 + \frac{1}{2}\cdot 5 \} = 5.\]

         \item Finally, we need to consider the set of policies that start by taking action $a_2$. If the realized outcome is the terminal state of value $0$ (this happens with probability $1/2$), then we should clearly accept it and terminate at a total cost of $0$. Otherwise, we are left with the task of computing the optimal policy for the single-selection min-CICS $(\mc_1,\mc_s)$ where $\mc_s$ is the sub-process of $\mc_2$ rooted at state $s$. In other words, the cost of the optimal policy that starts by taking action $a_2$ on $\mc_2$ will be precisely
         \[\frac{1}{2}\cdot\mathrm{OPT}(\mc_1,\mc_s).\]
         We will now compute the cost of the optimal policy for the sub-instance $(\mc_1,\mc_s)$. Observe that any policy will either start by taking $a_3$ or $a_4$ or by taking the costly action in $\mc_1$; in the latter case, if the outcome is $0$ the policy should terminate, otherwise there is once again an optimal choice between $a_3$ and $a_4$. Thus, we can assume without loss that the optimal policy for the sub-instance $(\mc_1,\mc_s)$ will commit in advance to either $a_3$ or $a_4$. Furthermore, since both $a_3$ and $a_4$ have a cost of $0$, we can assume that we start by taking the corresponding action. Thus:

         \begin{itemize}[$\bullet$]
             \item If the policy commits to $a_3$, the optimal cost for the sub-instance $(\mc_1,\mc_s)$ is
             \[\min\{12 , 3 + \frac{1}{2}\cdot 0 + \frac{1}{2}\cdot \min\{12,50\}\} = 9.\]

             \item If the policy commits to $a_4$, the optimal cost for the sub-instance $(\mc_1,\mc_s)$ is
             \[\frac{1}{2}\cdot 0 + \frac{1}{2}\cdot \min\{50, 3 + \frac{1}{2}\cdot \min\{0,50\} + \frac{1}{2}\cdot \min\{50,50\}\} = 14.\]
         \end{itemize}

    Thus, the cost of the optimal policy that starts by taking action $a_2$ on $\mc_2$ will be $(1/2)\cdot 9 = 4.5$.
    \end{itemize}
    From the above case analysis, we obtain that the optimal policy for $(\mc_1,\mc_2)$ will start by taking action $a_2$, then action $a_3$ and finally the costly action of $\mc_1$ - the policy terminates whenever a terminal state of value $0$ is reached or it runs out of actions to take (in which case it accepts the terminal of value $12$). Clearly, this is equivalent to the optimal policy that commits to $\pi_{23}$, concluding the proof.

\end{proof}

\newpage
\newpage
\section{Omitted Proofs from~\Cref{sec:amortization} (Amortization)}\label{app:amortization}

In this chapter of the appendix, we present all omitted proofs from~\Cref{sec:amortization}. We restate all our results for the reader's convenience.

\subsection{Amortized Surrogate Costs for Markov Chains}
In this section, we provide the formal proofs and extra details on all our results for matroid-min-CICS over Markov chains that we established in~\Cref{sec:mc-amortization}. 

We begin with the proof of the lower bound we stated in~\Cref{lem:MC-lb}, relating the cost of any algorithm to the surrogate costs of any amortization. We note that one way to prove~\Cref{lem:MC-lb} is to follow the approach developed by \cite{S17} and relate the performance of any algorithm on a ``costly information'' instance to its performance in a ``free information world'' where action costs are paid by an outside investor who is in turn paid back the extra amortized cost of any terminal that the algorithm follows and accepts. Instead, we provide an algorithmic proof that will allow us to directly argue about the optimality of the water filling index policy, as well as extend our setting to MDPs.

\mclbtag*
\begin{proof}
Let $\mathrm{ALG}$ be any algorithm for $\instance$. We will decompose the expected cost of $\mathrm{ALG}$ into the expected cost that it pays in every Markov chain $\mc_i$ for all $i\in [n]$. Let $\mc_i=(S_i, \sigma_i, A_i, c_i, \dist_i, V_i, T_i)$ be the $i$-th Markov chain and let $\surr_i := \surr_{\mc_i, b_i}$ be any cost amortization for this chain. We drop the $i$-index for notational convenience and define the following random events:
    \begin{itemize}
        \item $I(s) :=$ the event that $\mathrm{ALG}$ advances $\mc$ at the non-terminal state $s\in S$.
        \item $R(\tau):=$ the event that nature realizes a trajectory $\tau'\in\traj$ with suffix $\tau$.
        \item $R(s):=$ the event that nature realizes a trajectory $\tau\in\traj$ with $s\in\tau$.
        \item $A:=$ the event that $\mathrm{ALG}$ accepts a terminal state $t\in T$.
    \end{itemize}
Then, the expected cost paid by the algorithm on Markov chain $\mc$ can be written as:
\begin{equation}
\label{eq:amor1}
    \mathrm{ALG}(\mc) = \sum_{s\in S}\prob{I(s)}\cdot c(s) + \sum_{\tau\in\traj} \prob{R(\tau)}\cdot\prob{A|R(\tau)}\cdot v(\tau) 
\end{equation}

For the expected cost of taking costly actions, we have that
\begin{equation}\label{eq:amor2}
    \sum_{s\in S}\prob{I(s)}\cdot c(s) = \sum_{s\in S}\prob{I(s)}\cdot \sum_{\tau\in \traj(s)}\frac{\prob{R(\tau)}}{\prob{R(s)}}\cdot b_{s\tau} = \sum_{s\in S}\prob{I(s)}\cdot \sum_{\tau\in \traj(s)}\prob{R(\tau)|R(s)}\cdot b_{s\tau}
\end{equation}
where the first equality follows by the definition of cost shares and the second equality follows by $\tau'\in\traj (s)$.

For the expected cost of accepting a terminal state, we have that
\begin{equation}\label{eq:amor3}
    \sum_{\tau\in\traj} \prob{R(\tau)}\cdot\prob{A|R(\tau)}\cdot v(\tau)  = \sum_{\tau\in\traj} \prob{R(\tau)}\cdot\prob{A|R(\tau)}\cdot \big(\surr(\tau) - \sum_{s\in \tau}b_{s\tau_s}\big)
\end{equation}
by definition of the surrogate cost $\surr(\tau)$. We further decompose this expression into two terms:
\begin{itemize}
    \item For the first term, we have
    \begin{equation}\label{eq:amor4}
        \sum_{\tau\in\traj} \prob{R(\tau)}\cdot\prob{A|R(\tau)}\cdot \surr(\tau) 
    = \sum_{\tau\in\traj} \prob{R(\tau)}\cdot\expect{A\cdot \surr(\tau) |R(\tau)} = \expect{A\cdot \surr}
    \end{equation}
    with the first equality following from $A\in \{0,1\}$ and the second by definition of the surrogate cost.

    \item For the second term, we have
    \begin{align}\label{eq:amor5}
    \begin{split}
        \sum_{\tau\in\traj} \prob{R(\tau)}\cdot\prob{A|R(\tau)}\cdot \sum_{s\in \tau}b_{s\tau_s} &= \sum_{s\in S}\sum_{\tau\in \traj (s)}b_{s\tau}\cdot \sum_{\tau'\in\traj : \tau \subset \tau'}\prob{R(\tau')}\cdot \prob{A|R(\tau')} \\
        &= \sum_{s\in S}\sum_{\tau\in \traj (s)}b_{s\tau}\cdot \prob{A\cap R(\tau)} 
    \end{split}
    \end{align}
    with the first line following by exchanging the summation order and the second line following by definition of the event $R(\tau)$, specifically that $R(\tau) = \sum_{\tau'\in\traj : \tau \subset \tau'}R(\tau')$.
\end{itemize}
Combining Equations~\eqref{eq:amor1} through~\eqref{eq:amor5}, we obtain that the expected cost of the algorithm on Markov chain $\mc$ can be written as
\[\mathrm{ALG}(\mc) = \expect{A\cdot \surr} + \sum_{s\in S} \sum_{\tau\in \traj(s)} b_{s\tau}\cdot\bigg(\prob{I(s)}\cdot \prob{R(\tau)|R(s)} - \prob{A\cap R(\tau)} \bigg) .  \]

Observe that the second term is always non-negative: in order to accept a terminal from the chain \textbf{and} for nature to realize a trajectory $\tau\in \surr (s)$, it is necessary for the algorithm to advance state $s$ (this already requires that $s$ belongs in the realized trajectory) \textbf{and} for the suffix $\tau$ to be realized conditioned on having reached $s$; notice that the last two events are independent. Thus, we have concluded that 
\[\mathrm{ALG}(\mc) \geq \expect{A\cdot \surr}.\]

The proof is completed by noting that
\[\expect{\text{cost}(\mathrm{ALG})} = \sum_{i=1}^n \mathrm{ALG}(\mc_i) \geq \sum_{i=1}^n \expect{A_i\cdot \surr_i} = \expect{\sum_{i=1}^n A_i \surr_i} \geq \expect{\min_{S\in\feas}\sum_{i\in S}\surr_i}\]
with the last inequality obtained by the fact that the set $S=\{i\in [n] : A_i=1\}$ must be feasible (i.e. $S\in\feas$) with probability $1$ for $\mathrm{ALG}$ to be a valid policy.
\end{proof}

\medskip

Observe that in the proof of~\Cref{lem:MC-lb}, we only used inequalities at two points: (i) when relating the amortized cost of the trajectories that the algorithm selected to the (feasible) set of trajectories of minimum total surrogate cost and (ii) when we established that $\prob{I(s)}\cdot \prob{R(\tau)|R(s)}\geq \prob{A\cap R(\tau)}$ for all states $s$ and all trajectories $\tau\in\traj (s)$; notice that this was only required for pairs $(s,\tau)$ with $b_{s\tau}>0$. The first inequality can become tight if the algorithm somehow ensures that it will always accept the feasible set of realized trajectories whose total surrogate cost is minimum. The second inequality becomes tight if the algorithm can somehow ensure that whenever it advances a state $s$ sharing cost $b_{s\tau}>0$ with one of its downstream trajectories $\tau$, it will always accept the corresponding Markov chain if the suffix $\tau$ gets realized. This allows us to characterize the conditions under which the lower bound is actually met by an algorithm, which we formally state as the following corollary:
\begin{corollary}\label{cor:mc-opt-conditions}
    Consider any instance $\instance = (\vecM, \feas)$ of matroid-min-CICS over Markov chains and let $b_i$ be any cost amortization of $\mc_i$ with surrogate cost $\surr_i := \surr_{\mc_i, b_i}$ for all $i\in [n]$. Then, any algorithm that satisfies both
    \begin{enumerate}
        \item \textbf{Surrogate Optimality.} The algorithm always accepts a feasible set of Markov chains whose realized trajectories have the minimum total surrogate cost.
        \item \textbf{Promise of Payment.} Whenever one of the Markov chains $\mc_i$ gets advanced from a state $s_i\in S_i$ such that $b_{s_i\tau_i}>0$ for some $\tau_i \in \traj (s_i)$, the algorithm will keep advancing $\mc_i$ as long as it's trajectory evolves according to $\tau_i$.
    \end{enumerate}
    will have expected cost precisely $\expect{\min_{S\in\feas} \sum_{i\in S}\surr_i}$.
\end{corollary}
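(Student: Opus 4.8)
The plan is to revisit the proof of Lemma~\ref{lem:MC-lb} and isolate the two places where it fails to be tight, then argue that \emph{Surrogate Optimality} closes the first gap and \emph{Promise of Payment} closes the second. Recall that that proof works in the coupled probability space in which every chain $\mc_i$ carries a fully realized walk consistent with $\dist_i$ (which the algorithm follows whenever it advances $\mc_i$), and writes the expected cost of an arbitrary algorithm $\mathrm{ALG}$ as
\[
\expect{\mathrm{cost}(\mathrm{ALG})}
= \underbrace{\sum_{i=1}^n \sum_{t_i\in T_i}\prob{A(t_i)}\cdot\surr_{b_i}(t_i)}_{(\star)}
+ \underbrace{\sum_{i=1}^n \sum_{s_i \in S_i}\bigl(\prob{I(s_i)}\cdot c_i(s_i) - \sum_{t_i\in T(s_i)}\prob{A(t_i)}\cdot b_{s_it_i}\bigr)}_{(\star\star)},
\]
and then proves inequality~\eqref{eq:wf-cost}, namely $(\star)\ge \expect{\min_{S\in\feas}\sum_{i\in S}\surr_i}$, together with inequality~\eqref{eq:promise}, which says every summand of $(\star\star)$ is nonnegative. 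Thus it suffices to upgrade both of these to equalities under the two hypotheses.

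For $(\star)$: the argument in Lemma~\ref{lem:MC-lb} rewrites $(\star)=\sum_i\expect{X(i)\cdot\surr_i}=\expect{\sum_{i\in S}\surr_i}$, where $S=\{i:X(i)=1\}$ is the random set of chains whose terminal states $\mathrm{ALG}$ accepts, and the only inequality used is $\sum_{i\in S}\surr_i\ge \min_{S'\in\feas}\sum_{i\in S'}\surr_i$, valid pointwise because $S\in\feas$. The Surrogate Optimality hypothesis is precisely the assertion that, on every sample path, $S$ attains this minimum; hence the inequality is a pointwise equality and $(\star)=\expect{\min_{S'\in\feas}\sum_{i\in S'}\surr_i}$.

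For $(\star\star)$: fix $i$ and $s_i\in S_i$. If $c_i(s_i)=0$, the amortization constraint forces $\sum_{t_i\in T(s_i)}p_i(t_i)b_{s_it_i}=p_i(s_i)c_i(s_i)=0$, so $b_{s_it_i}=0$ for every reachable $t_i\in T(s_i)$ and the summand vanishes. Otherwise, Lemma~\ref{lem:MC-lb} derived, for each $t_i\in T(s_i)$,
\[
\prob{A(t_i)\mid Q(t_i)} = \prob{I(s_i)\mid Q(s_i)}\cdot \frac{\prob{A(t_i)\mid I(s_i)}}{\prob{Q(t_i)\mid Q(s_i)}},
\]
and used $\prob{A(t_i)\mid I(s_i)}\le\prob{Q(t_i)\mid Q(s_i)}$. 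Since the cost shares are nonnegative, the summand of $(\star\star)$ is $0$ as soon as $\prob{A(t_i)\mid I(s_i)}=\prob{Q(t_i)\mid Q(s_i)}$ for every $t_i\in T(s_i)$ with $b_{s_it_i}>0$. This is exactly where Promise of Payment enters: conditioned on $I(s_i)$, the continuation of the coupled walk out of $s_i$ is independent of the policy's choices, so with probability $\prob{Q(t_i)\mid Q(s_i)}$ it traverses the unique $s_i$-to-$t_i$ path $s_i=u_0\to u_1\to\cdots\to u_k=t_i$; at every $u_j$ on this path we have $t_i\in T(u_j)$ and $s_i$ is an ancestor of $u_j$ with $b_{s_it_i}>0$, so Promise of Payment forces $\mathrm{ALG}$ to advance $u_j$ the instant it reaches it, and at $u_k=t_i$ the only available action is ``accept'', so ``advancing'' $t_i$ means accepting it. Hence, conditioned on $I(s_i)$ and on the walk reaching $t_i$, $\mathrm{ALG}$ accepts $t_i$ with probability $1$, which gives the desired equality and makes the summand vanish. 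Summing over all $i,s_i$ gives $(\star\star)=0$, and combining with the equality for $(\star)$ yields $\expect{\mathrm{cost}(\mathrm{ALG})}=\expect{\min_{S\in\feas}\sum_{i\in S}\surr_i}$.

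I expect the main obstacle to be making this last step rigorous. One must (i) phrase everything in the coupled space so that ``the walk out of $s_i$'' is genuinely independent of the policy's decisions given $Q(s_i)$; (ii) check that the Promise of Payment hypothesis applies verbatim at each intermediate node $u_j$ — which needs $t_i\in T(u_j)$ and $s_i$ being an ancestor of $u_j$ with positive cost share, both immediate from the tree-like structure of the chain; and (iii) handle the terminal node, where one must observe that ``immediately advance'' means ``immediately accept'' and, crucially, that ``immediately'' forbids the algorithm from terminating (because $S$ has just become feasible) before discharging this obligation. Everything else is a direct transcription of equalities already present in the proof of Lemma~\ref{lem:MC-lb}.
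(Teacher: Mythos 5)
Your proposal is correct and follows essentially the same route as the paper: the paper proves the corollary by observing that the proof of Lemma~\ref{lem:MC-lb} uses exactly two inequalities, and that Surrogate Optimality makes the first (choice of the accepted feasible set) tight while Promise of Payment makes the second ($\prob{A(t_i)\mid I(s_i)}\le \prob{Q(t_i)\mid Q(s_i)}$, needed only when $b_{s_it_i}>0$) tight. Your path-following argument for why Promise of Payment forces acceptance of a realized terminal conditioned on its shared action being taken is just a more explicit rendering of the step the paper states without elaboration, so there is no substantive difference.
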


Promise of payment states that whenever $b_{s\tau}>0$ and the algorithm advances $s$, it will ensure that while it is possible for $\tau$ to be realized, the corresponding Markov chain will continue to be advanced; in other words, that $\prob{I(s)}\cdot \prob{R(\tau)|R(s)} = \prob{A\cap R(\tau)}$. We note that achieving any of these properties individually is trivial; fully advancing all Markov chains and picking the feasible set of trajectories of minimum total surrogate cost achieves surrogate optimality, and either advancing a Markov chain fully or not advancing it at all achieves promise of payment. The water filling amortization is specifically defined so that the corresponding index policy achieves both of these properties, simultaneously:
\mcopttag*
\begin{proof}
It suffices to argue that the water filling index policy, paired with the water filling amortization, satisfies both surrogate optimality and promise of payment. We begin with surrogate optimality. By definition, the water filling index policy advances the chain of minimum water filling index; recall that the index of a state corresponds to the minimum surrogate cost among all trajectories passing through it. This ensures that while there is potential for some chain $\mc_i$ to realize the trajectory of minimum surrogate cost, the algorithm will keep advancing it. In other words, this algorithms ends up \textbf{greedily} accepting Markov chains with respect to the surrogate cost of their realized trajectories. Paired with the fact that a minimum cost basis of a matroid is always obtained by greedily adding the cheaper feasible element, this establishes surrogate optimality of the water filling index policy.

Next, we show that promise of payment also holds. Say that at any point the algorithm advances Markov chain $\mc_i$ at state $s_i$ with index $I_i(s_i)$ and there exists some trajectory $\tau_i\in\traj(s_i)$ such that $b_{s_i\tau_i}>0$. We need to argue that as long as the state of $\mc_i$ evolves according to $\tau_i$, the algorithm will keep advancing it. Since $\mc_i$ was advanced by the water filling index policy, this implies that $I_i(s_i)$ was the minimum index among the current states of all Markov chains; thus, a sufficient condition is to show that $I_i(s'_i) \leq I_i(s_i)$ for all $s'_i\in \tau_i$, as then all the states in the trajectory will continue to have the minimum index and thus will be advanced if realized. Since $b_{s_i\tau_i}>0$, we know by definition of the water filling amortization that immediately after $s_i$ got amortized, $\tau_i$ had the minimum surrogate cost across all trajectories in $\traj (s_i)$. As states are considered in a bottom-up order, this implies that any subsequent steps of the amortization will maintain that the trajectory passing through $s_i$ that has the minimum surrogate cost ends with $\tau_i$. This immediately gives us the proof, by definition of the indices and $s'_i\in\tau_i$.
\end{proof}

\subsection{Amortized Surrogate Costs for MDPs}
In this section we formally prove our main result from~\Cref{sec:amortization}, namely~\Cref{thm:MDP-lb}. We note that the proof mirrors our approach for proving the same result in the special case of Markov chains (\Cref{lem:MC-lb}), and is enabled by the characterization of~\Cref{lem:mdp-wf-char-new}.
\mdplbtag*
\begin{proof}
Fix any algorithm (including the optimal adaptive policy) for instance $\instance$ and let $\mathrm{ALG}$ denote the (random) cost of this algorithm. Also, for all $i\in [n]$, let $\mathrm{ALG}(i)$ denote the (random) cost suffered by the algorithm from paying action costs and accepting terminal states in $\mc_i$. Notice that $\mathrm{ALG} = \sum_{i=1}^n \mathrm{ALG}(i)$ with probability $1$. Finally, let $X(i)$ be the random variable indicating whether the algorithm accepts a state from $\mc_i$ or not. To prove~\Cref{thm:MDP-lb}, we will argue that for all $i\in [n]$,
\begin{equation}\label{eq:to_show}
   \expect{\mathrm{ALG}(i)} \geq \expect{X(i)\cdot W^*_{\mc_i}}. 
\end{equation}
Notice that if this is true, then we immediately have that
\[\expect{\mathrm{ALG}} = \sum_{i=1}^n\expect{\mathrm{ALG}(i)} \geq \sum_{i=1}^n\expect{X(i)\cdot W^*_{\mc_i}} = \expect{\sum_{i=1}^nX(i)\cdot W^*_{\mc_i}} \geq \expect{\min_{S\in\feas}\sum_{i\in S}W^*_{\mc_i}} \]
with the last inequality following from the fact that for any feasible algorithm, the set of accepted terminals $S=\{i : X(i)=1\}$ must be feasible (i.e. $S\in\feas$) with probability $1$.

We now turn our attention to proving inequality~\eqref{eq:to_show}. We use $\A$ to denote the inner randomness of the algorithm and $\R_i$ to denote the randomness of each MDP $\mc_i$ for all $i\in [n]$. Notice that conditioned on $\A$ and $\R_i$ for all $i\in [n]$, the outcome of the algorithm is deterministic. Now fix any $i\in [n]$ and let \[\R^{-i} := \A \cup\big( \cup_{j\neq i}\R_j)\]
encode all the randomness in the algorithm's run \textbf{except} from the realizations of $\mc_i$. The key observation is that conditioned on $\R^{-i}$, all the actions that the algorithm takes in MDP $\mc_i$ are \textit{predetermined}; in other words, the algorithm's trajectory on $\mc_i$ is fully described by some deterministic commitment denoted $\pi_i = \pi_i(\R^{-i})\in\comset{\mc_i}$. Thus, we have that
\begin{equation}\label{eq:com-pol-cost}
\expect{\mathrm{ALG}(i)} = \expectt{\R^{-i}}{\expectt{\R_i}{\mathrm{ALG}(i) | R^{-i}}} = \expectt{\R^{-i}}{\mathrm{cost}(\pi_i)}
\end{equation}
where $cost(\pi_i)$ is the expected cost of the algorithm, following the commitment $\pi_i = \pi_i(\R^{-i})$, on MDP $\mc_i$. The proof is the completed by the following generalization of~\Cref{lem:MC-lb}, which is enabled from~\Cref{lem:mdp-wf-char-new}.
\begin{claim}\label{claim:mdp-lb-helper}
    Fix any MDP $\mc$ and any deterministic commitment $\pi\in\comset{\mc}$. Then, the expected cost of any algorithm following $\pi$ on $\mc$ will be at least
    \[\expect{X(\pi)\cdot W^*_{\mc}}\]
    where $X(\pi)$ is an indicator of whether the algorithm accepts a terminal state of $\mc$ or not.
\end{claim}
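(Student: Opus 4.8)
\textbf{Proof proposal for Claim~\ref{claim:mdp-lb-helper}.}

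The plan is to mirror the structure of the proof of Lemma~\ref{lem:MC-lb}, but using the amortized cost function $\surr(\cdot)$ and the cost sharing vector $b = \{b_{st}\}$ guaranteed by Lemma~\ref{lem:mdp-wf-char-new} for the MDP $\mc$. Fix the committing policy $\pi$, which generates a Markov chain $\mc^\pi$ with state set $S_\pi$, terminal set $T_\pi$, and the induced reachability distribution $p_\pi$. Without loss (by linearity of expectation over the internal randomness of $\pi$) we may assume $\pi$ is deterministic at each state, so that $\mc^\pi$ is a fixed finite tree. First I would write the expected cost incurred by the algorithm on $\mc$ as a sum over states of $S_\pi$ of $\prob{I(s)}\cdot c(a_{\pi,s})$ (probability of advancing state $s$ times the cost of the committed action) plus a sum over terminals $t\in T_\pi$ of $\prob{A(t)}\cdot v(t)$ (probability of accepting $t$ times its value). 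Here $I(s)$, $A(t)$, $X(\pi)$ are defined exactly as in the proof of Lemma~\ref{lem:MC-lb}.

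Next I would substitute the Cost Sharing identity from Lemma~\ref{lem:mdp-wf-char-new}(2), namely $v(t) = \expect{\surr(t)} - \sum_{s\in S_\pi : t\in T_\pi(s)} b_{st}$, to rewrite the accept-term. Regrouping exactly as in Lemma~\ref{lem:MC-lb}, the total expected cost becomes
\[
\sum_{t\in T_\pi}\prob{A(t)}\cdot\expect{\surr(t)} \;+\; \sum_{s\in S_\pi}\Bigl(\prob{I(s)}\cdot c(a_{\pi,s}) - \sum_{t\in T_\pi(s)}\prob{A(t)}\cdot b_{st}\Bigr).
\]
I would then establish two inequalities, paralleling Equations~\eqref{eq:wf-cost} and~\eqref{eq:promise}. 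For the first: since $\prob{A(t)} = \prob{A(t)\mid Q(t)}\cdot p_\pi(t)$ where $Q(t)$ is the event that the walk on $\mc^\pi$ passes through $t$, and since by Action Independence (Lemma~\ref{lem:mdp-wf-char-new}(1)) the random variable $W^*_\mc$ is obtained by sampling $t\sim p_\pi$ and then sampling $\surr(t)$, we get $\sum_{t\in T_\pi}\prob{A(t)}\expect{\surr(t)} = \expect{X(\pi)\cdot W^*_\mc}$ — here using that $\prob{A(t)\mid Q(t)} = \expect{X(\pi)\mid Q(t)}$ and that conditioning on $Q(t)$ fixes the first-stage sample of the two-stage description of $W^*_\mc$. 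For the second inequality, the same ancestor-chaining argument as in Lemma~\ref{lem:MC-lb} gives $\prob{A(t)\mid Q(t)}\le \prob{I(s)\mid Q(s)}$ for every state $s$ and downstream terminal $t\in T_\pi(s)$ (to accept $t$ the algorithm must advance $s$ and $t$ must be realized), hence
\[
\sum_{t\in T_\pi(s)}\prob{A(t)}\,b_{st} \le \prob{I(s)\mid Q(s)}\sum_{t\in T_\pi(s)}p_\pi(t)\,b_{st} \le \prob{I(s)\mid Q(s)}\cdot p_\pi(s)\,c(a_{\pi,s}) = \prob{I(s)}\,c(a_{\pi,s}),
\]
where the middle step is exactly the Cost Dominance property of Lemma~\ref{lem:mdp-wf-char-new}(3). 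Combining the two inequalities shows that the bracketed sum is nonnegative and the accept-term is at least $\expect{X(\pi)\cdot W^*_\mc}$, which is the claim.

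The main obstacle — and the one place where this genuinely differs from the Markov chain argument — is carefully handling the fact that $\surr(t)$ is a \emph{distribution} over costs rather than a deterministic number, so that ``the surrogate cost realized at terminal $t$'' must be understood as a fresh two-stage sample; I need the Action Independence statement to guarantee that taking the expectation over $\surr(t)$ after conditioning on $Q(t)$ recombines correctly into $W^*_\mc$ regardless of which committing policy $\pi$ (equivalently, which realization of $\R^{-i}$) produced $\mc^\pi$. I would also need to note the edge case where $\pi$ is randomized: by conditioning on $\pi$'s coins one reduces to the deterministic case, and all three properties of Lemma~\ref{lem:mdp-wf-char-new} are stated to hold for every committing policy, so the bound passes through the outer expectation. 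Once Claim~\ref{claim:mdp-lb-helper} is in hand, plugging it into Equation~\eqref{eq:com-pol-cost} and summing over $i$ with the feasibility argument completes the proof of Theorem~\ref{thm:MDP-lb}.
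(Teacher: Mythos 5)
Your proposal is correct, and it reaches the claim by a genuinely different (more direct) decomposition than the paper's. The paper argues in two stages: it first views the run under $\pi$ as an algorithm on the Markov chain $\mc^\pi$ and invokes the Markov-chain machinery (\Cref{lem:MC-lb}, \Cref{thm:MC-opt}) to lower bound the cost by $\expect{X(\pi)\cdot W^*_{\mc^\pi}}$, and then compares the chain's own water filling surrogate to the MDP's surrogate, proving $\expect{W^*_{\mc^\pi}}\geq \expect{W^*_\mc}$ by exactly the three properties of \Cref{lem:mdp-wf-char-new} (cost sharing to trade $v(t)$ for $\expect{\surr(t)}$, cost dominance to discard the residual action-cost terms, action independence to reassemble $W^*_\mc$). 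You instead bypass $W^*_{\mc^\pi}$ altogether and rerun the \Cref{lem:MC-lb} bookkeeping on $\mc^\pi$ with the MDP's amortization $(\surr, b)$ in place of the chain's own water filling shares, keeping the acceptance probabilities $\prob{A(t)}$ attached to every terminal: cost dominance substitutes for the exact amortization identity in the ancestor-chaining step, and action independence converts $\sum_{t}\prob{A(t)}\expect{\surr(t)}$ into $\expect{X(\pi)\cdot W^*_\mc}$ under the intended coupling. The ingredients are identical, but the two routes buy different things: the paper's version reuses the Markov-chain results as a black box and isolates the clean standalone fact that committing can only increase the expected surrogate cost, whereas your one-pass version carries the indicator $X(\pi)$ through the entire computation, so the correlation between the accept decision and the realized terminal is handled explicitly at the per-terminal level rather than only through unconditional expectations of surrogates, a point the paper's two-stage reduction leaves implicit. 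Your treatment of the two subtleties you flag, namely that $\surr(t)$ is a distribution whose second-stage sample is independent of the accept decision given the realized terminal, and the reduction of randomized $\pi$ to deterministic commitments by conditioning on its coins, is consistent with how the claim is deployed inside the proof of \Cref{thm:MDP-lb}.
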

Notice that the above claim doesn't depend on the underlying CICS instance $\instance$; it simply states that conditioned on running a committing policy on some MDP and accepting a terminal state, the expected total cost spent on this MDP is lower bounded by the surrogate cost. Since this lower bound applies to all commitments $\pi\in\comset{\mc}$, coupled with equation~\eqref{eq:com-pol-cost}, it directly implies inequality~\eqref{eq:to_show}, as
\[\expect{\mathrm{ALG}(i)} =  \expectt{\R^{-i}}{\mathrm{cost}(\pi_i)} \geq \expectt{R^{-i}}{\expect{(X(i)|R^{-i})\cdot W^*_{\mc_i}}} = \expect{X(i)\cdot W^*_{\mc_i}},\]
completing the proof.

\paragraph{Proof of~\Cref{claim:mdp-lb-helper}}. Notice that since we are committing to running policy $\pi\in\comset{\mc}$ on $\mc$, we are essentially running some algorithm on the Markov chain $\mc^\pi$. From~\Cref{lem:MC-lb} and~\Cref{thm:MC-opt}, we know that the contribution of Markov chain $\mc^\pi$ to the total cost will be at least
\[\expect{X(\pi)\cdot W^*_{\mc^\pi}}\]
and thus to prove the claim, we will need to show that 
\[\expect{W^*_{\mc^\pi}} \geq \expect{W^*_\mc}\]
for all $\pi\in\comset{\mc}$.

Let $\mc=(S, \sigma, A, c, \dist, V, T)$ and recall that we use $S_\pi$ and $\traj_\pi$ to denote the state space and trajectory set of $\mc^\pi$ and $p_\pi$ to denote the implied distribution over $\traj_\pi$. Finally, we use $c(s)$ to denote the cost of the unique action that the deterministic commitment $\pi$ chooses at state $s\in S_\pi$. By definition of the water filling surrogate cost of a Markov chain, we have
\begin{align*}
    \expect{W^*_{\mc^\pi}} &= \sum_{\tau\in \traj_\pi} p_\pi(\tau)\cdot \surr^*_{\mc^\pi}(\tau) = \sum_{\tau\in \traj_\pi} p_\pi(\tau)\cdot \bigg( v(\tau) + \sum_{s\in\tau}b^*_{s\tau_s} \bigg)
\end{align*}
where $b^*_{s\tau}$ are non-negative cost shares satisfying $\sum_{\tau\in\traj_\pi(s)} p_\pi(\tau) b^*_{s\tau} = p_\pi(s)c(s)$ for all $s\in S_\pi$. Since $\pi$ is deterministic, we know from \Cref{lem:mdp-wf-char-new} that there exists an amortized cost function $\surr^\pi (\cdot)$ over the trajectories in $\traj_\pi$ and a set of cost shares $\{b^\pi_{s\tau}\}$ that will satisfy action independence, cost sharing and cost dominance. Using these, we have
\begin{align*}
    \expect{W^*_{\mc^\pi}}
    &= \sum_{\tau\in \traj_\pi} p_\pi(\tau)\cdot \bigg( v(\tau) + \sum_{s\in\tau}b^*_{s\tau_s} \bigg) \\
    &= \sum_{\tau\in \traj_\pi} p_\pi(\tau)\cdot \bigg( \expect{\surr^\pi (\tau)} -\sum_{s\in\tau}b^\pi_{s\tau_s}  + \sum_{s\in\tau}b^*_{s\tau_s} \bigg) &&\text{(Cost Sharing)}\\
    &= \sum_{\tau\in \traj_\pi} p_\pi(\tau)\cdot  \expect{\surr^\pi (\tau)} + \sum_{\tau\in \traj_\pi}\sum_{s\in \tau}p_\pi(\tau)(b^*_{s\tau_s} - b^\pi_{s\tau_s})\\
    &= \expect{W^*_\mc} + \sum_{s\in S_\pi}\sum_{\tau\in \traj_\pi(s)}p_\pi(\tau)(b^*_{s\tau} - b^\pi_{s\tau}) &&\text{(Action Independence)} \\
    &= \expect{W^*_\mc} + \sum_{s\in S_\pi} p_\pi(s)c(s) - \sum_{\tau\in \traj_\pi(s)} p_\pi(\tau) b^\pi_{s\tau} \\
    &\geq \expect{W^*_\mc}. &&\text{(Cost Dominance)}
\end{align*}
\end{proof}

\medskip
\subsection{Second Order Stochastic Dominance}

Finally, in this section we provide a proof for~\Cref{lemma:sdom}. Once again, we note that this result is standard; here, we provide our own constructive proof for the sake of completeness and building intuition on how the water filling surrogate cost of an MDP is obtained. For simplicity, we refer to~\Cref{lemma:sdom} as the Stochastic Dominance Lemma, henceforth SDL.
\stdomtag*
\begin{proof}
We will say that $X\preceq Z$ if there exists a mapping $m:\mathrm{supp}(Z)\mapsto \Delta(\mathrm{supp}(X))$ such that the two conditions of the SDL hold. We associate each random variable $W$ with a function $f_W(y):=\expect{\min\{y,W\}}$ over $y\in\real$. In other words, we want to prove that \[f_X(y)\leq f_Z(y)\;\;\forall y\in\real \Longrightarrow X\preceq Z.\]

We first prove \textbf{transitivity} of our condition. In other words, if $X,Y,Z$ are discrete random variables such that $X\preceq Y$ and $Y\preceq Z$, we also have that $X\preceq Z$. The proof is immediate; let $m_1: \mathrm{supp}(Z) \mapsto \Delta(\mathrm{supp}(Y))$ be the corresponding mapping for $Y\preceq Z$ and $m_2: \mathrm{supp}(Y) \mapsto \Delta(\mathrm{supp}(X))$ be the corresponding mapping for $X\preceq Y$. Then, the composition mapping $m:=m_2\circ m_1$ that maps each $z\in\mathrm{supp}(Z)$ to a random realizations of $m_2(y)$ for a randomly sampled $y\sim m_1(z)$ immediately satisfies both conditions and yields $X\preceq Z$. Formally, for each $x\in\mathrm{supp}(X)$ we have
\[\prob{X=x}  =  \sum_{y,z} \prob{Z=z}\cdot\prob{m_1(z)=y}\cdot \prob{m_2(y)=x} =\sum_z \prob{Z=z}\cdot \prob{m(z)=x}\] 
and for each $z\in\mathrm{supp}(Z)$ we have
\[\expect{m(z)} = \sum_{y} \prob{m_1(z) = y}\cdot \expect{m_2(y)} \leq \sum_{y} \prob{m_1(z) = y} \cdot  y = \expect{m_1(z)}\leq z\]
and thus transitivity of the $\preceq$ operator is established.

We will now proceed to the \textbf{main proof}. Fix the random variables $X$ and $Z$ such that $f_X(y)\leq f_Z(y)$ for all $y\in\real$ and order their support sets so that \[\X:=\mathrm{support}(X) = \{x_1, \cdots , x_M\}\] and  \[\Z:=\mathrm{support}(Z) = \{z_1, \cdots ,z_N\}\]
with $x_1 < x_2 < \cdots < x_M$ and $z_1 < z_2 <\cdots < z_N$. We also use $p^X_i:= \prob{X=x_i}$ for all $i\in [N]$ and $p^Z_i:= \prob{Z=Z_i}$ for all $i\in [M]$ to denote the corresponding probabilities, with \[\sum_{i=1}^M p^X_i = \sum_{i=1}^N p^Z_i = 1.\]

We will say that $X$ and $Z$ \textbf{agree up to index $\mathbf{i}$}, if $x_j = z_j$ and $p^X_j=p^Z_j$ for all $j<i$. Conventionally, we say that any two random variables will agree up to index $1$ according to this definition. We will structure our proof of the SDL as an induction on the maximum index that $X$ and $Z$ agree up to. In particular, we break-down our proof in the following two steps:
\begin{enumerate}
    \item (Induction Base). If $X$ and $Z$ agree up to index $M$, the SDL holds.

    \item (Induction Step). If $X$ and $Z$ agree up to index $i\in [M-1]$, there exists a random variable $Z'$ such that
    \begin{enumerate}
        \item $Z'$ and $X$ agree up to index $(i+1)$.
        \item $Z'\preceq Z$.
        \item For all $y\in\real$, $f_X(y) \leq f_{Z'}(y)$.
    \end{enumerate}
\end{enumerate}

Before proving each of these two claims, let's see how they naturally construct an inductive proof for the SDL. Initially, we have that by definition, $X$ and $Z$ agree up to index $1$. We can then apply our second claim (i.e. the induction step) to obtain a random variable $Z_1\preceq Z$ that agrees with $X$ up to index $2$ and satisfies $f_X(y)\leq f_{Z_1}(y)$ for all $y\in\real$. We can then re-apply the induction step to obtain a random variable $Z_2\preceq Z_1$ that agrees with $X$ up to index $3$ and satisfies $f_X(y)\leq f_{Z_2}(y)$ for all $y\in\real$. We keep applying the induction step for as long as we can, until we obtain a random variable $Z_{M-1}\preceq Z_{M-2}\preceq \cdots \preceq Z_1 \preceq Z$ that agrees with $X$ up to index $M$ and satisfies $f_X(y)\leq f_{Z_{M-1}}(y)$ for all $y\in\real$. We then proceed to use our first claim (i.e. the induction base) to show that $X\preceq Z_{M-1}$. Finally, we use the transitivity of operator $\preceq$ in order to obtain $X\preceq Z$, which concludes the proof.

\paragraph{Proof of Induction Base.} Assume that $X$ and $Z$ agree up to index $M$; this means that $x_i=z_i$ and $p^X_i = p^Z_i$ for all $i<M$. Then, consider the deterministic mapping $m(z_i) = x_i$ if $i< M$ and $m(z_i) = x_M$ if $i\geq M$. Since $x_M$ is the last point in the support $\X$, sampling $z\sim Z$ and outputting $m(z)$ is clearly equivalent to sampling $x\sim X$. For the expectation condition, we need to show that $m(z_i) = x_M \leq z_i$ for all $i\geq M$. For $y=x_M$, we have that $f_X(x_M)\leq f_Z(x_M)$. By definition:
\[f_X(x_M) = \sum_{i=1}^{M} p_i^X \cdot x_i = \sum_{i=1}^{M-1} p_i^X \cdot x_i + p_M^X \cdot x_M\]
and
\[f_Z(x_M) = \sum_{i=1}^{M-1} p_i^Z\cdot z_i + \sum_{i=M}^N p_i^Z\cdot \min(z_i, x_M) .\]
Observe that the sums for $i\in [M-1]$ are equal by the agreement assumption and also $\sum_{i=M}^Np_i^Z = p_M^X$. Thus, to satisfy $f_X(x_M)\leq f_Z(x_M)$ we would need $x_M \leq \min(x_M, z_i)$ for all $i\geq M$ or equivalently that $x_M \leq z_M < z_{M+1} <\cdots < z_N$ and the proof follows.

\paragraph{Proof of Induction Step.} We will now prove the induction step. Assume that the random variables $X$ and $Z$ agree up to index $i$ for some $i\in [M-1]$; if they also agree up to index $(i+1)$ the step follows for $Z'=Z$ since clearly $Z\preceq Z$. If they don't agree up to index $(i+1)$, then by $f_X(y)\leq f_Z(y)$ for all $y\in\real$ it must necessarily be the case that either $x_i < z_i$ or ($x_i = z_i$ and $p_i^X > p_i^Z$). In any case, we deduce that $f_X(y) = f_Z(y)$ for all $y\leq x_i$ and $\lim_{y\rightarrow x^+_i}f_X(y) < \lim_{y\rightarrow x^+_i}f_Z(y)$.

Now, let $\ell(y)$ denote the unique straight line that passes through points $(x_i, f_X(x_i))$ and $(x_{i+1}, f_X(x_{i+1}))$, that is:
\[\ell(y) = \frac{f_X(x_{i+1})-f_X(x_{i})}{x_{i+1}-x_i}\cdot (y - x_i) + f_X(x_i). \]
Since $f_Z(y)$ is clearly concave, $\ell(y)$ will intersect it in at most two points; one of them is the point $(x_i, f_Z(x_i)) = (x_i, f_X(x_i))$ and the other point will necessarily be $(s,f_Z(s))$ for some $s\geq x_{i+1}$. A pictorial representation is given in~\Cref{fig:stochastic-dominance-induction-step}.

\begin{figure}[h!]
    \centering
    \includegraphics[width=0.65\textwidth]{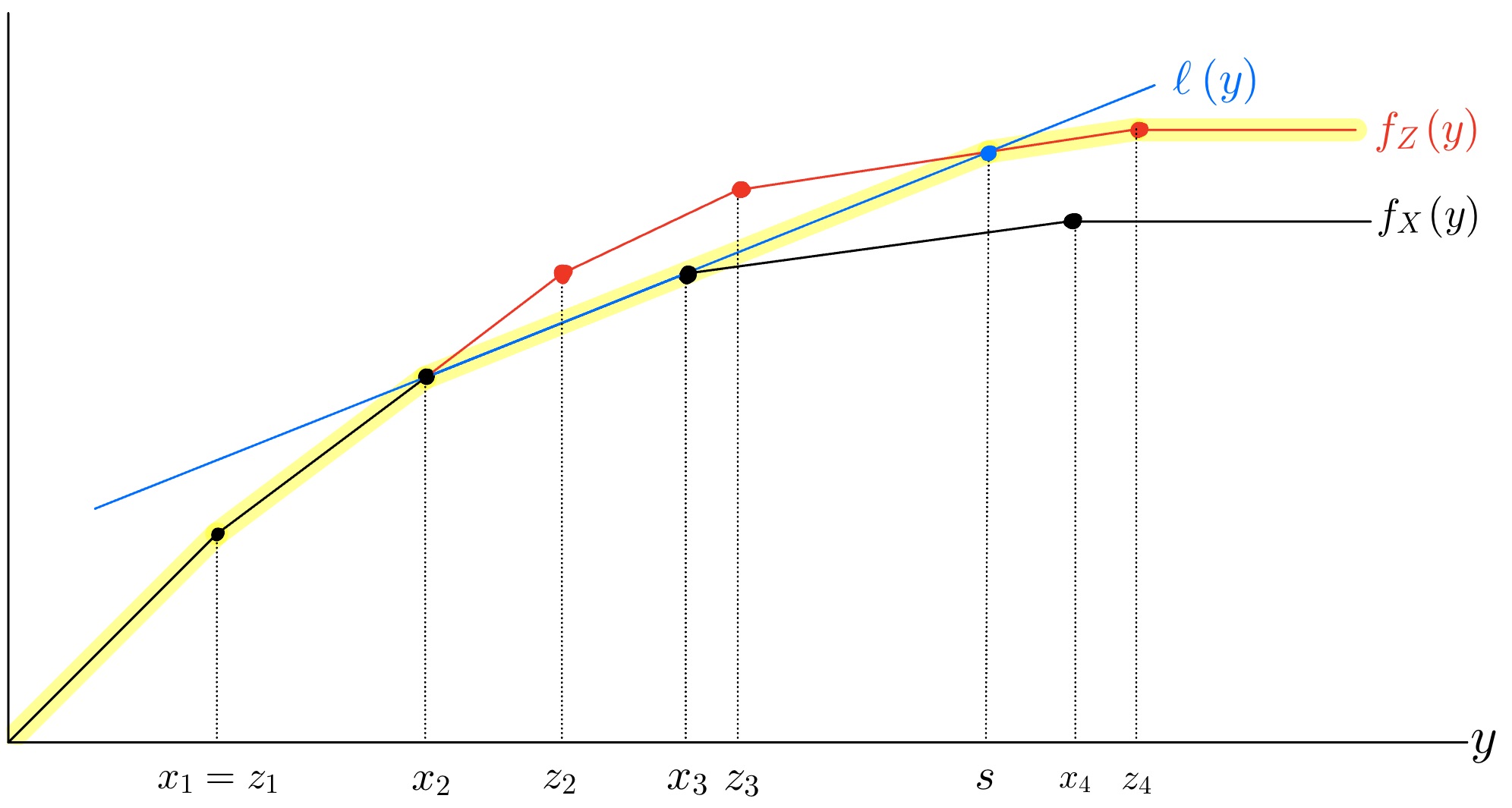}
    \caption{The induction step. Random variables $X$ and $Z$ agree up to index $2$. The line $\ell(y)$ extends the line-segment of $f_{X}(y)$ from $x_2$ to $x_3$ and cuts $f_Z(y)$ on some $y=s\geq x_3$. Let $h(y)$ denote the highlighted curve. The random variable $Z'$ that has curve $f_{Z'}(y) = h(y)$ agrees with $X$ up to index $3$ and satisfies $f_{Z'}(y)\geq  f_{X}(y)$ for all $y\in\real$. Here, $J=\{2,3\}$ denotes the set of indices of $f_Z$'s break-points that lie in $(x_2,s)$.}
    \label{fig:stochastic-dominance-induction-step}
\end{figure}

Now, consider the curve $h(y) = \min\{\ell(y), f_Z(y)\}$. By construction, this curve satisfies $f_X(y) \leq h(y)$ for all $y\in\real$. Furthermore, if it is the case that there exists some random variable $Z'$ such that $f_Z'(y) = h(y)$, then it would be the case that $Z'$ and $X$ agree up to index $(i+1)$. Thus, to complete the proof, we need to show that such a random variable $Z'$ not only exists, but also satisfies $Z'\preceq Z$. This will require it's own type of induction, so we will state it an prove it as a separate claim (\Cref{claim:sdom-helper}) to ease the presentation. Notice that once this claim is proven, the proof of the SDL is completed.
\end{proof}

\begin{claim}\label{claim:sdom-helper}
    For any discrete random variable $Z$ with curve $f_Z(y) = \expect{\min\{y,Z\}}$ and any line $\ell(y)$ intersecting $f_Z(y)$ at exactly two points $a<b$, there always exists discrete random variable $Z'$ such that $Z'\preceq Z$ and $f_{Z'}(y) = \min\{f_Z(y), \ell(y)\}$.
\end{claim}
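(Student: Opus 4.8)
\textbf{Proof proposal for Claim~\ref{claim:sdom-helper}.}

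The plan is to construct $Z'$ explicitly from the geometry of the curves and then exhibit the required mapping $m$ by induction on the number of break-points of $f_Z$ strictly between $a$ and $b$. Since $f_Z$ is concave, piecewise-linear, with slopes decreasing from $1$ down to $0$ (the break-points occurring at the atoms of $Z$), and $\ell$ is a line cutting it at exactly $a<b$, the curve $h(y):=\min\{f_Z(y),\ell(y)\}$ agrees with $f_Z$ outside $[a,b]$ and lies strictly below $f_Z$ (following $\ell$) inside $(a,b)$. The curve $h$ is itself concave and piecewise-linear with slopes in $[0,1]$, so it is the curve $f_{Z'}$ of a well-defined discrete random variable $Z'$: its atoms are the break-points of $h$, namely $a$, $b$, and any atoms of $Z$ outside $(a,b)$, with probabilities read off from the slope decrements of $h$. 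Concretely, $Z'$ keeps all atoms of $Z$ below $a$ and above $b$ with the same probabilities, and replaces all the probability mass that $Z$ places in the half-open interval $[a,b)$ — call it $q := \prob{a\le Z< b}$ together with whatever part of $\prob{Z=a}$ is ``used up'' — by a two-point distribution on $\{a,b\}$ chosen so that the slope of $h$ just after $a$ and just before $b$ matches; equivalently, $Z'$ moves each unit of $Z$'s mass in $[a,b)$ partly down to $a$ and partly up to $b$, keeping the mean of that mass fixed (this is exactly what forces $\ell$, a line, to be the new curve on $[a,b]$).

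With $Z'$ in hand, the mapping $m:\mathrm{supp}(Z)\to\Delta(\mathrm{supp}(Z'))$ is: the identity on atoms of $Z$ outside $[a,b)$, and on each atom $z\in[a,b)$ a two-point distribution supported on $\{a,b\}$ with $\E{m(z)}=z$ (possible since $a\le z< b$; take $m(z)$ to put mass $\frac{b-z}{b-a}$ on $a$ and $\frac{z-a}{b-a}$ on $b$). Condition~2 of the SDL ($\E{m(z)}\le z$) then holds with equality on these atoms and trivially elsewhere. For condition~1 (that sampling $z\sim Z$ then $x\sim m(z)$ reproduces $Z'$), we must check that the resulting total mass on $a$ and on $b$ matches the probabilities $Z'$ assigns them; this is a mass-balance computation. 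I would verify it by the slope bookkeeping: the mass $Z'$ puts at $b$ equals $\bigl(\text{slope of }h\text{ just before }b\bigr)-\bigl(\text{slope of }f_Z\text{ just after }b\bigr)$, and the contribution of the mixing is $\sum_{z\in[a,b)}\prob{Z=z}\frac{z-a}{b-a}$, which one checks equals that slope gap because $\ell$'s slope on $[a,b]$ is the average slope of $f_Z$ across $[a,b]$, weighted by the $Z$-mass; the analogous identity at $a$ follows by complementation.

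For full rigor I would actually run the induction the claim invites: if there are no break-points of $f_Z$ in $(a,b)$, then $f_Z$ is already linear on $[a,b]$, so $\ell$ coincides with $f_Z$ there, $h=f_Z$, and $Z'=Z$ works trivially. For the inductive step, let $x^\ast$ be the first atom of $Z$ strictly inside $(a,b)$ and let $\ell^\ast$ be the chord of $f_Z$ from $a$ to the next break-point — or more cleanly, peel off one atom at a time by first producing an intermediate $\tilde Z$ that merges the mass at $x^\ast$ into a mixture of the two neighboring break-points along $\ell$, show $\tilde Z\preceq Z$ by a local two-point mean-preserving spread (a single-atom instance of the mapping above), observe $f_{\tilde Z}=\min\{\ell,f_Z\}$ now has one fewer interior break-point, apply the inductive hypothesis to get $Z'\preceq\tilde Z$ with $f_{Z'}=\min\{\ell,f_{\tilde Z}\}=\min\{\ell,f_Z\}$, and conclude $Z'\preceq Z$ by transitivity of $\preceq$ (already established in the proof of Lemma~\ref{lemma:sdom}). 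The main obstacle is purely the mass-balance verification in condition~1 — making sure the probabilities induced at the two endpoints $a$ and $b$ are exactly those of $Z'$ rather than merely having the right mean — but this is a short linear computation once one writes the slopes of $h$ at $a^+$ and $b^-$ in terms of the $Z$-mass in $[a,b]$ and the fact that $\ell$ is a secant line, hence has slope equal to the mass-weighted average of $f_Z$'s slopes on that interval.
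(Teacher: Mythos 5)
Your construction is correct, but it takes a more direct route than the paper. The paper proves the claim via a single-atom ``gadget'': it repeatedly takes the leftmost atom $z_i$ dominated by $\ell$, spreads its mass onto $a$ and the \emph{next} atom $z_{i+1}$ along the secant of $f_Z$ from $a$ to $z_{i+1}$ (with probability $\lambda=\frac{s_1-s}{s_1-s_2}$ chosen from the slopes), and then chains these intermediate variables together using transitivity of $\preceq$. You instead split \emph{every} atom $z\in(a,b)$ in one shot onto the two endpoints $\{a,b\}$ with the mean-preserving weights $\frac{b-z}{b-a},\frac{z-a}{b-a}$; in the single-interior-atom case this coincides exactly with the paper's gadget, so your primary argument is a one-step version of their iteration, and your ``for full rigor'' fallback induction is essentially their proof. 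Your route buys a shorter argument with no transitivity step, and the mass-balance verification you flag as the main obstacle can be dispatched more cleanly than by slope bookkeeping: the mapped variable $\tilde Z$ has \emph{no} atoms in $(a,b)$, so $f_{\tilde Z}$ is linear on $[a,b]$; it agrees with $f_Z$ for $y\le a$ trivially and for $y\ge b$ because each split preserves the conditional mean, so $f_{\tilde Z}$ is pinned to $f_Z=\ell$ at $a$ and $b$ and hence equals $\min\{f_Z,\ell\}$ everywhere, which identifies $\tilde Z$ with $Z'$ without any explicit atom accounting. One small imprecision in your sketch: the mixing contribution at $b$ equals $s-\prob{Z\ge b}$, i.e.\ the slope of $\ell$ minus the slope of $f_Z$ at $b^-$, not the gap down to the slope just after $b$; when $b$ is itself an atom of $Z$, that atom (kept in place by your identity map) supplies the remaining $\prob{Z\ge b}-\prob{Z>b}$, and only the \emph{total} matches the slope decrement of $\min\{f_Z,\ell\}$ at $b$. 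With that correction (or with the curve-pinning argument above replacing it), the proof is complete and matches the statement, including the boundary cases where $a$ or $b$ is an atom of $Z$ or where a kink of $\min\{f_Z,\ell\}$ at $a$ or $b$ degenerates.
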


\begin{proof}
Once again, we denote $\Z:=\mathrm{support}(Z) = \{z_1, \cdots ,z_N\}$ and assume $z_1 < z_2 <\cdots < z_N$. We also use $p_i^Z := \prob{Z=z_i}$ for $i\in [N]$. Notice that the function $f_Z(y)$ is piece-wise linear, with breakpoints precisely at $z_i\in\Z$. Furthermore, the slope at the interval $(-\infty , z_i]$ is $1$, the slope at any interval $[z_i, z_{i+1}]$ for $i\in [N-1]$ is $1-\sum_{j\leq i}p^Z_j $ and the slope at the interval $[z_N,\infty)$ is $0$. Finally, for each point $z_i\in\Z$, the difference of the slope of $f_Z(y)$ on the segment to its left minus the slope of $f_Z(y)$ on the segment to its right equals the probability $p_i^Z$.

We will begin by designing a useful \textbf{gadget}. This gadget $G(Z,a,b)$ takes as input a discrete random variable $Z$ and two parameters $a<b$ such that there exists index $i\in [N]$ with $z_{i-1}\leq a < z_i < b \leq z_{i+1}$ (we denote $z_0=-\infty$ and $z_{N+1}=+\infty$) and returns a random variable $Z'$ that is obtained by mapping each point $z_j\in \Z$ with $j\neq i$ to itself, and mapping point $z_i$ to point $a$ with some probability $\lambda$ and to point $b$ with probability $1-\lambda$. Clearly, $\mathrm{support}(Z')= \{a,b\}\cup \Z\setminus\{z_i\}$. Let $\ell_{ab}(y)$ be the line passing through points $(a,f_Z(a))$ and $(b,f_Z(b))$ and let $s$ be the slope of this line. Furthermore, let $s_1$ be the slope of $f_Z(y)$ at the interval $[a,z_i]$ and $s_2$ be its slope at the interval $[z_i, b]$; by definition, $s_1-s_2 = p^Z_i$. Also, note that $s_1>s>s_2$ by concavity. Then, by using mapping probability
\[\lambda:= \frac{s_1-s}{s_1-s_2}\]
it is not hard to see that $f_{Z'}(y) = \min\{f_Z(y),\ell_{ab}(y)\}$, since we maintain the probability mass at all points $z_j\neq z_i$ and furthermore we have $\prob{Z'=a} = p_i^Z\cdot \lambda = s_1-s$ and $\prob{Z'=b} = p_i^Z \cdot (1-\lambda) = s-s_2$. Furthermore, it is also not hard to see that $Z'\preceq Z$; we only need to verify that
\[\lambda\cdot a + (1-\lambda)\cdot b \leq z_i\]
or equivalently (by substituting $\lambda$'s definition) that $s(b-a) \leq s_1(z_i-a) + s_2(b-z_i)$; this always holds with equality due to the definition of $s_1,s_2$ and $s$. 

We are now ready to prove the claim. Let $J=\{z_i\in (a,b)\}$; this is the set of points $z_i\in\Z$ for which $f_Z(z_i)> \ell(z_i)$. Notice that the gadget $G(Z,a,b)$ already proves the claim for the special case of $|J|=1$. We will now inductively apply the gadget to prove the general case. Let $z_i$ be the minimum point in $J$; we begin by applying the gadget $G(z_i , a , z_{i+1})$ to obtain a new random variable $Z'$; this is allowed since $z_i$ is the unique point of $\Z$ in the interval $(a,z_{i+1})$. By our construction of the gadget and concavity of $f_Z(y)$, we have that $Z'\preceq Z$ and also that $f_{Z'}(y) \geq \min\{f_Z(y),\ell(y)\}$ for all $y\in\real$. Furthermore, the corresponding $J$-interval for $Z'$ will now have one less point; thus, we can inductively keep applying our gadget and by transitivity of the $\preceq$ operator the claim follows. A pictorial proof is shown in~\Cref{fig:sdom_induction}

\begin{figure}[h!]
    \centering
    \includegraphics[width=0.9\textwidth]{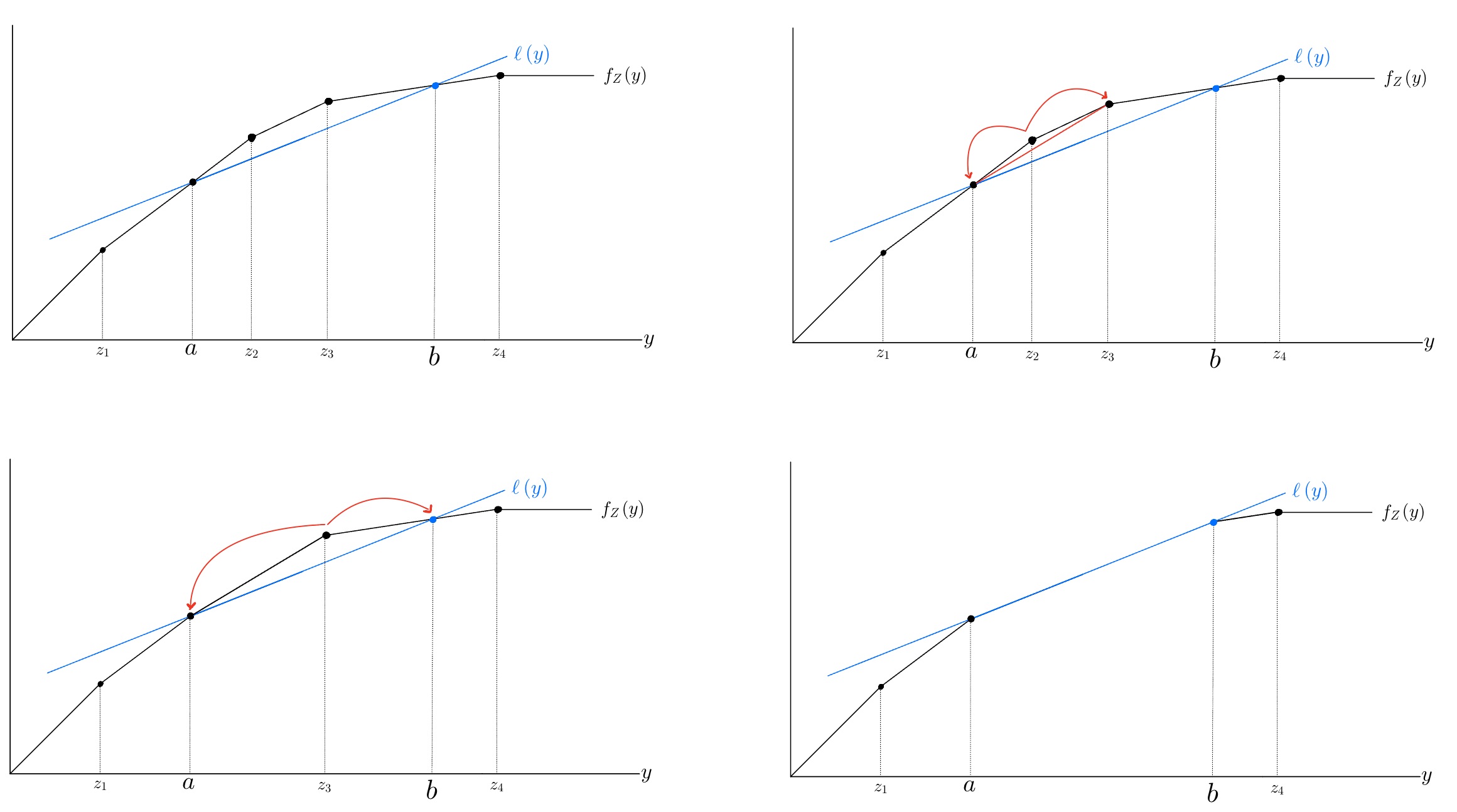}
    \caption{Here, $J=\{z_2,z_3\}$. At each step, we isolate the leftmost point of $f_Z(y)$ that is dominated by $\ell(y)$ and use our gadget to distribute it between $a$ and the next point. Eventually, we recover a random variable with optimality curve $\min\{f_Z(y), \ell(y)\}$.}
    \label{fig:sdom_induction}
\end{figure}

\end{proof}

\newpage
\newpage
\section{Omitted Proofs from~\Cref{sec:local_approx} (Local Approximation)}\label{app:local_approx}

In this section, we present the omitted proof of~\Cref{thm:la-comp} from~\Cref{sec:local_approx}.
\loccomptag*
\begin{proof}
Let $z := [W^*_{\mc_2^\pi}, \cdots W^*_{\mc_n^\pi}]$ encode the surrogate costs of all Markov chains $\mc_i^{\pi_i}$ with the exception of $\mc_1^{\pi_1}$. Then, we have

\begin{align*}
    \expect{\min_{S\in \feas} \sum_{i\in S} W^*_{\mc_i^{\pi_i}}} &= \expectt{z}{\expectt{W^*_{\mc_1^{\pi_1}}}{\min_{S\in \feas} \sum_{i\in S} W^*_{\mc_i^{\pi_i}}}}\\
    &= \expectt{z}{\expectt{W^*_{\mc_1^{\pi_1}}}{\min_{S\in \feas}\bigg( W^*_{\mc_1^{\pi_1}}\cdot \mathbbm{1}[1\in S] + \sum_{i\in S\setminus\{1\}} W^*_{\mc_i^{\pi_i}}\bigg)}} \\
    &= \expectt{z}{\expectt{W^*_{\mc_1^{\pi_1}}}{ \min\bigg(
    \min_{S\in\feas : 1\notin S} \sum_{i\in S}W^*_{\mc_i^{\pi_i}}\;\;
    ,\;\;
    W^*_{\mc_1^{\pi_1}} + \min_{S: 1\notin S ; S\cup\{1\}\in\feas} \sum_{i\in S} W^*_{\mc_i^{\pi_i}} \bigg)
    }} \\
\end{align*}

Now, let $f_1(z):= \min_{S\in\feas : 1\notin S} \sum_{i\in S}W^*_{\mc_i^{\pi_i}}$ and $f_2(z):=\min_{S: 1\notin S ; S\cup\{1\}\in\feas} \sum_{i\in S} W^*_{\mc_i^{\pi_i}}$. Note that both quantities depend only on $z$ and not on $W^*_{\mc_1^{\pi_1}}$. Then, we have

\begin{align*}
    \expect{\min_{S\in \feas} \sum_{i\in S} W^*_{\mc_i^{\pi_i}}} &=  \expectt{z}{\expectt{W^*_{\mc_1^{\pi_1}}}{ \min\bigg(
    f_1(z)\;\;
    ,\;\;
    W^*_{\mc_1^{\pi_1}} + f_2(z)\bigg)
    }} \\
    &= \expectt{z}{f_2(z) + \expectt{W^*_{\mc_1^{\pi_1}}}{ \min\bigg(
    f_1(z)-f_2(z)\;\;
    ,\;\;
    W^*_{\mc_1^{\pi_1}}\bigg)
    }} \\
    &\leq \expectt{z}{f_2(z) + \expectt{W^*_{\mc_1}}{ \min\bigg(
    f_1(z)-f_2(z)\;\;
    ,\;\;
    \alpha\cdot W^*_{\mc_1}\bigg)
    }} && \text{(Local Approximation)}\\
    &= \expectt{z}{\expectt{W^*_{\mc_1}}{ \min\bigg(
    f_1(z)\;\;
    ,\;\;
    \alpha\cdot W^*_{\mc_1} + f_2(z)\bigg)
    }} \\
    &= \expect{\min_{S\in \feas} \sum_{i\in S} \Tilde{W}_{i}}
\end{align*}
where $\Tilde{W}_1 = \alpha\cdot W^*_{\mc_1}$ and $\Tilde{W}_i := W^*_{\mc_i^{\pi_i}}$ for all $i\neq 1$. Thus, we have substituted $W^*_{\mc_1^{\pi_1}}$ with $\alpha\cdot W^*_{\mc_1}$. By repeating the same process for $i=2,3,\cdots, n$, we obtain that
\[\expect{\min_{S\in \feas} \sum_{i\in S} W^*_{\mc_i^{\pi_i}}} \leq \expect{\min_{S\in \feas} \sum_{i\in S} \alpha\cdot W^*_{\mc_i}}\]
and thus
\[\frac{\expect{\min_{S\in \feas} \sum_{i\in S} W^*_{\mc_i^{\pi_i}}}}{\expect{\min_{S\in \feas} \sum_{i\in S} W^*_{\mc_i}}} \leq \alpha .\]
From~\Cref{thm:MC-opt} the nominator is precisely the cost of the optimal committing policy for $\instance$ under commitments $\comms=(\pi_1, \cdots, \pi_n)$ and from~\Cref{thm:MDP-lb}, the denominator is a lower bound on $\opt(\instance)$. Thus, we have $\cg(\instance)\leq \alpha$ as desired.

\end{proof}

\newpage
\section{Omitted Proofs from~\Cref{sec:pvo} (Partial Inspection PB)}\label{app:pvo}

In this chapter of the appendix, we present all omitted proofs from~\Cref{sec:pvo}. We restate all our results for the reader's convenience.

\subsection{Proof of~\Cref{claim:pvo-global-approx}}

\lemmaphiapprox*
\begin{proof}
    The proof relies on the crucial fact that opening and peeking into a PI box reveals precisely the same information to the decision maker; in particular, the value of the box. We construct a policy that commits to directly opening boxes $i\in O$ and peeking before opening boxes $i\in P$, while simultaneously mimicking the optimal adaptive policy. In particular, fix any box $i$. 
    
    \begin{itemize}
        \item If the optimal never interacts with box $i$, neither does our policy.
        
        \item If the optimal directly opens box $i$, then (i) if $i\in O$ our policy also opens it and never peeks into it, and (ii) if $i\in P$ our policy first peeks into the box and then immediately opens it.

        \item If the optimal first peeks into box $i$, then (i) if $i\in P$ our policy also peeks into it (and then opens it whenever the optimal decides to open it), and (ii) if $i\in O$ our policy directly opens it and never peeks into it.

        \item If the optimal selects box $i$, so does our policy.

    \end{itemize}

    Observe that at any point the optimal and our policy have the exact same information, so we can keep mimicking the optimal decision tree. Furthermore, our policy clearly respects the given commitment. Finally, whenever the optimal selects a box, we can do the same as the set of our policy's opened boxes is always a superset of the optimal's opened boxes. This ensures feasibility of our algorithm under any combinatorial constraint $\feas$.

    Thus, the only difference between the costs of our policy and the optimal is due to differences in the selected actions. In particular, if $i\in P$ then if the optimal peeks into or ignores the box then so does the algorithm, with the worst case being the optimal directly opening the box. In that case, the optimal pays $c_i^o$ whereas our policy pays $c_i^p+c_i^o$. On the other hand, if $i\in O$, then the worst case is if the optimal peeks into the box and decides not to open it; in that case the optimal pays $c_i^p$ whereas our algorithm pays $c_i^o$. Finally, our algorithm pays precisely the same cost as the optimal for accepting boxes.

    Combining everything, we conclude that our policy achieves an
    \[\alpha:= \max \bigg( \max_{i\in O} (\frac{c_i^o}{c_i^p}) \; , \; \max_{i\in P} (\frac{c_i^p + c_i^o}{c_i^o})\bigg)\]
    approximation to the optimal adaptive policy. For each box $i\in [n]$, let $\lambda_i = c^p_i/c^o_i \in (0,1)$ and observe that by definition of the partition sets we have that $1/\lambda_i \leq 1 + \lambda_i$ if and only if $i\in O$. Thus, we obtain that
    \[\alpha \leq \max_i \min(1+ \lambda_i, \frac{1}{\lambda_i}) \leq \max_{x\in (0,1)} \min(1+x, \frac{1}{x}) = \phi.\]
\end{proof}
\medskip

\subsection{Proof of~\Cref{lemma:pvo-local-approx}} 
Recall that we have already defined the opening index $g^o$ and peeking index $g^p$ of a PI $\pbox = (\dist , c^o, c^p)$ as the solutions to equations $c^o = \expectt{X\sim\dist}{(g^o-X)^+}$ and $c^p = \expectt{X\sim\dist}{(g^p-X-c^o)^+}$ respectively. Observe that if the optimal adaptive policy prefers opening over peeking the box for some outside option $y$, then the same will be true for any $y'>y$. Also, for $y=0$ peeking is clearly preferable to opening. Thus, the optimality curves $f^o_\pbox(y)$ and $f^p_\pbox(y)$ will have a unique intersection (up to an interval). We use $\tau$ to denote this intersection; formally, $\tau$ is the maximal solution to equation \[\expectt{X\sim\dist}{(\tau - X)^+} - \expectt{X\sim\dist}{(\tau - X-c^o)^+} = c^o - c^p.\]
This implies that unless $g^p < g^o < \tau$, the opening action will dominate the peeking action for all values of the outside option $y$ for which accepting it isn't optimal; as a consequence, such instances trivially admit a $1$-local approximation by committing to the opening action. From now on, we assume that $g^p < g^o < \tau$ and prove the two missing claims from~\Cref{sec:pvo} that complete the proof of~\Cref{lemma:pvo-local-approx}.

\pvoclone*
\begin{proof}
    By concavity, we have that $f^o_\pbox(y) \leq \alpha f^o_\pbox(\frac{y}{\alpha})$ for all $y\in \real$ and $\alpha\geq$ 1; thus, we only need to verify the condition for $y$ such that $f_{\pbox}(\frac{y}{\alpha}) = f^p_\pbox (\frac{y}{\alpha})$. In other words, we only need to verify
\[\min\{y, f^o_\pbox(y)\} \leq \alpha f^p_\pbox(\frac{y}{\alpha})\;\;\;\;\forall y\in [\alpha g^p, \alpha\tau].\]

For start, we will show that the condition holds for $\alpha = g^o/g^p$; notice that for this parameter, $f^o_\pbox(y)\leq y$ for all $y\in [\alpha g^p, \alpha\tau]$ and thus we can now write our condition as
   \[D_\alpha(y) :=  \alpha f^p_\pbox(\frac{y}{\alpha})-f^o_\pbox(y)\geq 0\;\;\;\;\forall y\in [\alpha g^p, \alpha\tau].\] 
Let $F(\cdot)$ denote the CDF of distribution $\dist$. By definition of $f^o_\pbox(y)$ and $f^p_\pbox(y)$, we have
\[\dd{y}{f^o_\pbox(y)} = 1 - F(y)\;\; , \;\; \dd{y}{f^p_\pbox(y)} = 1 - F(y-c^o) \]
and by taking the derivative of $D_\alpha(\cdot)$, we immediately obtain that $D_\alpha(\cdot)$ is a weakly increasing function of $y$ and thus the condition only needs to hold on $y=\alpha g^p$. Observe that for $\alpha = g^o/g^p$ and $y=\alpha g^p = g^o$, we have
\[D_{\alpha}(y) = \frac{g^o}{g^p}\cdot f^p_\pbox(g^p) - f^o_\pbox(g^o)  = g^o - g^o = 0\]
and thus we obtain that the opening action always achieves a $g^o/g^p$ local approximation.

The final step of the proof will be to show that
\[\frac{g^o}{g^p} \leq \frac{c^o}{c^p}\cdot \big(1-\frac{c^o}{g^p}\big)\]
For this purpose, we define the function $h(z):= \expectt{X\sim\dist}{(z-X)^+}$ and observe that $h'(z) = F(z)$; thus, $h(z)$ is an increasing and convex function of $z$ with $h(0)=0$. Furthermore, by definition of the indices and the curves we have that $h(g^o)=c^o$ and $h(g^p-c^o)=c^p$. Thus, by convexity, we immediately obtain that
\[\frac{h(g^o)}{g^o}\geq \frac{h(g^p-c^o)}{g^p-c^o} \Rightarrow \frac{c^o}{g^o}\geq \frac{c^p}{g^p-c^o}\]
from which the claim follows.

\end{proof}
\smallskip

\pvocltwo*
\begin{proof}
    By concavity, we have that $f^p_\pbox(y) \leq \alpha f^p_\pbox(\frac{y}{\alpha})$ for all $y\in \real$ and $\alpha\geq$ 1; thus, we only need to verify the condition for $y$ such that $f_{\pbox}(\frac{y}{\alpha}) = f^o_\pbox (\frac{y}{\alpha})$; notice that this corresponds to $y\geq \alpha\tau > g^p$ and thus $f^p_\pbox(y) \leq y$. In other words, we only need to verify
\[f^p_\pbox(y) \leq \alpha f^o_\pbox(\frac{y}{\alpha})\;\;\;\;\forall y\geq \alpha\tau.\]
   
Let $D_\alpha(y):=\alpha \cdot f^o_\pbox(y/\alpha)-f^p_\pbox(y)$. We need $D_\alpha(y)\geq 0$ for $y\geq \alpha\tau$. The condition immediately holds at $y=\alpha\tau$ since 
\[\alpha f^o_\pbox(\frac{\alpha\tau}{\alpha}) = \alpha f^o_\pbox(\tau) = \alpha f^p_\pbox(\tau)\geq f^p_\pbox(\alpha\tau)\]
by concavity of $f^p_\pbox(\cdot)$. Next, observe that

\[\dd{y}{D_\alpha(y)} = \dd{y}{f^o_\pbox(\frac{y}{\alpha})} - \dd{y}{f^p_\pbox(y)} = F(y-c^o) - F(\frac{y}{\alpha})\]
and thus $D_a(y)$ gets minimized at $y=\frac{\alpha\cdot c^o}{\alpha-1}$. If $\frac{\alpha\cdot c^o}{\alpha -1} \leq \alpha\tau$ or equivalently $\alpha \geq 1 + c^o/\tau$, then $\dd{y}{D_\alpha(y)}\geq 0$ in the area of interest; thus, we obtain that the peeking action always achieves a $1+\frac{c^o}{\tau} \leq 1+\frac{c^o}{g^p}$ local approximation. To complete the proof, we need to show that the peeking action also achieves a $1+\frac{c^p}{c^o}$ local approximation or equivalently that for $\alpha = 1 + \frac{c^p}{c^o}$ we have that $\min_{y\geq \alpha\tau}D_\alpha(y) = D(\frac{\alpha\cdot c^o}{\alpha-1}) \geq 0$.

Once again, we consider the function $h(z) = \expectt{X\sim\dist}{(z-X)^+}$; this time, we observe that by definition we have
\[f^o_\pbox(y) = c^o + y - h(y) \;\; , \;\; f^p_\pbox(y) = c^p + y - h(y-c^o).\] 
From this, the condition $D(\frac{\alpha\cdot c^o}{\alpha-1}) \geq 0$ translates to
\[h(\frac{c^o}{\alpha-1}) \leq \frac{\alpha\cdot c^o-c^p}{\alpha-1}\]
and for $\alpha = 1+ \frac{c^p}{c^o}$, this statement is equivalent to 
\[h(\frac{c^o\cdot c^o}{c^p})\leq \frac{c^o\cdot c^o}{c^p}\]
which is clearly true, by definition of $h(\cdot)$.

\end{proof}

\newpage
\section{Omitted Proofs from~\Cref{sec:additive_pb} (Additive PB)}\label{app:additive_pb}
In this chapter of the appendix, we present all omitted proofs from~\Cref{sec:additive_pb}. We restate all our results for the reader's convenience.

\subsection{Proof of~\Cref{thm:sum-mdp-static}}
\additivestatic*
\noindent We will separately prove the theorem for $k=2$ and $k\geq 3$.

\paragraph{The $k=2$ case.} We use $(X,c_x)$ and $(Z,c_z)$ to denote the $k=2$ components of the additive-MDP and $g_x,g_z$ to denote their respective water filling indices. Notice that there are only two (static) committing policies to consider: $\pi_{XZ}$ and $\pi_{ZX}$ depending on whether $X$ is probed first or not. We use $g_{xz}$ and $g_{zx}$ for their respective indices. Using the same structural properties of surrogate costs for additive-MDPes as in the proof of~\Cref{thm:sum-mdp-gap}, we have that $\pi_{XZ}$ admits an $\alpha_{xz}$-pointwise approximation for
\[\alpha_{xz} = \max_{t=(x,z)} \frac{\surr_{xz}(t)}{\surr_{zx}(t)} = \max_{t=(x,z)}\frac{\max(g_{xz}, g_x + z , z+x)}{\max( g_{zx}, g_z + x, z+x)} \leq \max_{t=(x,z)}\max (1, \frac{g_{xz}}{g_{zx}} , \frac{g_x + z}{\max( g_{zx}, g_z + x, z+x)})\]
and for the last term in the maximum, since $x\geq 0$, we have that
\[\alpha_{xz}  \leq \max_{t=(x,z)}\max (1, \frac{g_{xz}}{g_{zx}} , \frac{g_x + z}{\max( g_{zx},z)}) \leq \max ( \frac{g_{xz}}{g_{zx}} , 1 + \frac{g_x}{ g_{zx}})\]
and likewise we obtain that 
\[\alpha_{zx}   \leq \max ( \frac{g_{zx}}{g_{xz}} , 1 + \frac{g_z}{ g_{xz}}).\]

Finally, we will now prove that $\min(\alpha_{zx},\alpha_{zx})\leq \phi$ always. By re-labeling and re-scaling we can assume without loss that $g_{xz}\leq g_{zx}=1$. Furthermore, we have that $g_{xz} \geq g_x + g_z$. To see why this is true, consider an amortization $\mc^{\pi_{xz}}$ where each (height $1$) $Z$-vertex sends the same cost shares $b_z$ as in the water filling amortization of $(Z,c_z)$ to any terminal $(x,z)$, and the (unique) $X$-vertex sends the same cost shares $b_x$ as in the water filling amortization of $(X,c_x)$ to any terminal $(x,z)$ - this amortization covers the opening costs by definition, and the surrogate cost of terminal $(x,z)$ is $x+z + b_x + b_z = \surr_x(x) + \surr_z(z) \geq g_x + g_z$. Since water-filling amortization maximizes the minimum surrogate cost across all amortizations, we indeed have $g_{xz}\geq g_x+g_z$. Thus, we have $g_x + g_z \leq g_{xz} \leq g_{zx} = 1$. These in turn imply that $\alpha_{xz} \leq 1 + g_x$ and $\alpha_{zx} \leq \frac{1}{g_x + g_z}\max(1, g_x + 2g_z)$. The proof is then completed by showing
\[\alpha:= \min\bigg(1 + g_x, \frac{1}{g_x + g_z}\max(1, g_x + 2g_z)\bigg)\leq \phi\]
for all $g_x,g_z\geq 0$ with $g_x+g_z\leq 1$. By fixing $g_x$, it is not hard to see that the second term gets maximized at either $g_z=0$ or $g_z = 1-g_x$. Thus, we have
\[\alpha \leq \max_{x\in [0,1]}\min\bigg( x+1 , \max(\frac{1}{x}, 2-x)\bigg) = \phi.\]

\smallskip
\paragraph{The $k\geq 3$ case.} Let $\pi\in\comset{\mc}$ be any commitment of index $g^\pi$ for an additive-MDP $\mc$ over $k$-components and let $\alpha = \lfloor \frac{k+1}{2}\rfloor$. We will show that there exists a static committing policy $\pi'$ such that $g^{\pi'}\leq \alpha\cdot g^\pi$. Then, by instantiating $\pi$ as any minimum index policy and observing that all static policies are by definition root dominant (i.e. the root state has maximum index across all states), step 1 in the proof of~\Cref{thm:sum-mdp-gap} immediately gives us the proof of~\Cref{thm:sum-mdp-static}.

Consider the Markov chain $\mc^\pi$. Like in the proof of~\Cref{thm:sum-mdp-gap}, we define the index $g^\pi(s)$ of a state $s$ as the water filling index of the sub-chain rooted at $s$ if all previously probed values are realized at $0$, and its value $V^\pi(s)$ as the sum of all realized values in the trajectory from the root to $s$. Observe that these indices and values uniquely determine the water filling cost shares that each state $s$ sends to its downwards terminal states in $\mc^\pi$; in particular, we have that
\begin{align*}
    b_{st} &= \max\bigg( 0, g^\pi(s) + V^\pi(s) - \max_{s'\in P(s,t)}(g^\pi(s') + V^\pi(s'))\bigg) \\
    &= \max\bigg( 0, g^\pi(s)  - \max_{s'\in P(s,t)}(g^\pi(s') + V^\pi(s,s'))\bigg)
\end{align*}
where the inner maximum is taken over all states in the (unique) path from $s$ to $t$, excluding state $s$, and $V^\pi(s,s')$ denotes the sum of all realizations in the trajectory from $s$ to $s'$. The above equality is a direct consequence of the fact that the minimum surrogate cost in the sub-tree of state $s$ immediately after $s$ gets amortized is by definition $g^\pi(s) + V^\pi(s)$.

We will now define the static policy $\pi'$. Observe that any trajectory from the root of $\mc^\pi$ to a terminal state defines a fixed ordering of the $k$-components; we define $\pi'$ as the ordering corresponding to a trajectory of lexicographically maximum index. In other words, $\pi'$ probes the same first box as $\pi$, then the second box corresponding to a child of the root with maximum index etc. By re-labeling, we assume that $\pi'$ is the static ordering $k\mapsto (k-1)\mapsto \cdots \mapsto 1$. 

We now turn our attention to the Markov chain $\mc^{\pi'}$. We use $h(s)$ to denote the height of state $s$; this is $0$ for terminal states and $k$ for the root. Notice that in $\mc^{\pi'}$, all the states $s$ with $h(s)=j\in [k]$ correspond to the same (remaining) fixed probing order $j\mapsto (j-1)\mapsto \cdots \mapsto 1$, and will have the same index $g^{\pi'}(s) = g(j)$, with $g^{\pi'}=g(k)$. Furthermore, each state $s$ of $\mc^{\pi'}$ at height $j\in [k]$ can be mapped to the unique height $j$ state of $\mc^\pi$ that was chosen by our process; we use $s_j$ to denote this state.

We will prove that for all $j\in [k]$, it holds that $g(j)\leq \alpha(j)\cdot g^\pi(s_j)$ for $\alpha(j):=\lfloor \frac{j+1}{2}\rfloor$. This would directly imply that $g^\pi = g(k) \leq \alpha(k)\cdot g^\pi(s_k) = \alpha(k)\cdot g^\pi$, completing the proof. In order to upper bound the indices $g(j)$ of $\mc^{\pi'}$, it suffices to argue that the suggested upper bounds generate enough cost shares to cover the amortization of the costs. Since each state $s$ of $\mc^{\pi'}$ at height $j\in [k]$ shares the same distribution over terminal states with the state $s_j$ of $\mc^\pi$, it will suffice to point-wise compare these cost shares. In other words, it suffices to prove that 
\[\max\bigg( 0, \alpha(j)\cdot g^\pi(s_j)  - \max_{s'\in P'(s,t)}(\alpha(h(s'))\cdot g^\pi(s_{h(s')}) + V^{\pi'}(s,s'))\bigg) \geq 
\max\bigg( 0, g^\pi(s_j)  - \max_{s'\in P(s_j,t)}(g^\pi(s') + V^\pi(s_j,s'))\bigg)\]
for all $j\in [k]$. Our proof will be completed by arguing that our specific definition of $\alpha(j)$ satisfies
\[\max_{s'\in P'(s,t)}\bigg( \alpha(h(s'))\cdot g^\pi(s_{h(s')}) + V^{\pi'}(s,s')\bigg) \leq \alpha(j)\cdot \max_{s'\in P(s_j,t)}\bigg( g^\pi(s') + V^\pi(s_j,s')\bigg)\]
for all $j\in [k]$, all states $s$ with $h(s)=j$ and all terminal states $t$.

The maximum in the left hand side will get realized at some $s'\in P'(s,t)$ with $h(s') := j'\leq j-1$ (as $h(s)=j$). If $j' < j-1$, then we can upper bound the left hand side by $a(j')\cdot g^\pi(s_{j'}) + V^{\pi'}(s,t)$. Since $\pi'$ is a static policy, it is not hard to see that the indices in any trajectory from the root to a terminal will form a decreasing sequence, so $g^\pi(s_{j'}) \leq g^\pi(s_j)$. Furthermore, we have that $V^{\pi'}(s,t) = V^\pi (s_j,t)$ and thus our inequality holds as long as $\alpha(j)\geq \alpha(j')+1$ which is always the case for $j'< j-1$.

It remains to argue about $j'=j-1$. In that case, the left hand side is $\alpha(j-1)\cdot g^\pi (s_{j-1}) + V_j$ for some realization $V_j$ of the $j$-th component. We will upper bound this through the first term in the right hand side maximum, corresponding to some child vertex $\hat{s}$ of $s_j$; in that case, we need to show that
\[\alpha(j-1)\cdot g^\pi (s_{j-1}) + V_j \leq \alpha(j)\cdot g^\pi (\hat{s})+ V_j \]
and the proof is completed by the fact that our policy selected a trajectory of lexicographically maximum indices and thus $g^\pi (s_{j-1}) \leq g^\pi (s')$ as $s_{j-1}$ and $s'$ are both children of the same state $s_j$ in $\mc^\pi$.

\newpage
\newpage
\section{Omitted Proofs from~\Cref{sec:ws} (Weighing Scale Problem)}\label{app:ws}
In this chapter of the appendix, we present all omitted proofs from~\Cref{sec:ws}. We restate all our results for the reader's convenience.

\subsection{Proof of~\Cref{lem:osh-policy}}

\wsupptag*
\noindent We fix a WS alternative $\mc$ corresponding to random value $X$ of support $\X:=\mathrm{support}(X)$ and weighing cost $c$. We use $\mu , M , g, h$ and $\kappa$ to denote the parameters of the alternative, as previously defined. In order to develop our pointwise approximation guarantees, we first need to obtain expressions for the surrogate costs.

\paragraph{Surrogate cost of $\mc$.} We begin by noting that the optimality curve of $\mc$ has a very simple form. Indeed, consider the local game $(\mc , y)$; the optimal adaptive policy has only three choices available: either accept the outside option at a cost of $y$, or accept the alternative without performing any weighings at an expected cost of $\mu$ or perform a single weighing of the alternative against $y$ to determine which of the two costs is smaller. Thus, we immediately obtain that $\optf_\mc(y) = \min(y, \mu , c + \expect{\min\{y,X\}})$.

Observe that by definition, $g$ corresponds to the maximal threshold for which $y\leq c + \expect{\min\{y,X\}}$ for all $y\leq g$. Thus, if $\mu\leq g$, then the cost $c + \expect{\min\{y,X\}}$ will always be dominated by either $y$ or $\mu$, making the weighing action universally sub-optimal. If that's the case, then we can safely commit to blindly accepting the alternative without performing any weighings, achieving a $1$-local approximation. Thus, from now on we will be assuming that $g<\mu$. Recall that $g$ satisfies $c=\expect{(g-X)^+}$; $h$ satisfies $c=\expect{(X-h)^+}$; and $\mu$ satisfies $\expect{(\mu-X)^+}=\expect{(X-\mu)^+}$ by its definition. Then we can deduce that $g<\mu$ implies $\mu<h$.
We can therefore re-write the optimality curve of $\mc$ as

\begin{equation*}
 \optf_\mc (y) =
    \begin{cases}
    y & \text{if $y < g$}\\
    \expect{\min(y,\max (g,X))} & \text{if $y\in [g,h]$}\\
    \mu & \text{if $y> h$}
    \end{cases}       
\end{equation*}

From this, we obtain the following characterization of the surrogate costs. We note that the same result is proven by~\cite{DS24}, as the optimality curve for the alternative $\mc$ coincides with the optimality curve of a Pandora's Box with optional inspection in the minimization setting.
\begin{fact}\label{cl:ws-opt-sur}
    The water filling surrogate cost $W^*_{\mc}$ of $\mc$ corresponds to sampling $x\sim X$ and returning
    \[\surr^*(x):= \min(h , \max(g,x)).\]
\end{fact}


\paragraph{Committing policies.} Now, consider any commitment $\pi\in\comset{\mc}$. Observe that $\pi$ corresponds to a protocol that determines in advance a decision tree (or a distribution over decision trees) over pre-specified thresholds against which it will weigh $X$, resulting in a Markov chain $\mc^\pi$. Importantly, any commitment will end up partitioning the support $\X$ of distribution $\dist$ into a set of intervals, corresponding to its terminal states. We use $\I^\pi$ to denote this set of intervals, and $t(I)$ to denote the terminal state of $\mc^\pi$ corresponding to interval $I\in \I^\pi$. Furthermore, the probability of running the Markov chain $\mc^\pi$ and ending up in a terminal state $t(I)$ will be precisely $\prob{X\in I}$. This allows us to obtain the following characterization of surrogate costs for committing policies.
\begin{fact}\label{cl:ws-com-sur}
    For any committing policy $\pi\in\comset{\mc}$, the water filling surrogate cost $W^*_{\mc^\pi}$ corresponds to sampling $x\sim X$ and returning
    \[\surr^\pi(x):= W^*_{\mc^\pi}(t(I))\]
    for the unique interval $I\in\I^\pi$ that contains $x$.
\end{fact}


\noindent
We are now ready to prove~\Cref{lem:osh-policy}. Let $\pi$ be the committing policy described in~\Cref{lem:osh-policy}. We have already handled the case of $g>\mu$. Next, consider $g>M$. In this case $\pi$ commits to no weighings, $\mc^\pi$ is simply a terminal state of value $\mu$ and thus $\surr^\pi(x) = \mu$ for all $x\in \X$. Since $\surr^*(x)\geq g$ for all $x\in \X$, this trivially implies a $\alpha = \mu/g \leq \mu/M$ pointwise approximation and the lemma follows.

If $M\geq g$, then $\pi$ corresponds to the one-sided halving algorithm with $t_1=g$ and $t_2 = \min(M,h)$. Observe that by definition, the minimum threshold used by the policy will be some $t_f\in [g,2g]$. Thus, the first interval in $\I^\pi$ will be $I_0=(-\infty , t_f]$. Note that \[\prob{X\in I_0} = \prob{X\leq t_f}\geq \prob{X\geq g} \geq \frac{c}{g}\]
where the last inequality follows from the fact that $c = \expect{(g-X)^+} \leq g \cdot \prob{X\leq g}$. Furthermore, recall that by the definition of (any) amortization, the cost shares for the amortization of a state $s$ satisfy
\[p(s)\cdot c(s) = \sum_{t\in T(S)}p(t)\cdot b_{st}\]
and since $\prob{X\in I_0} \geq c/g$ and all the action costs are $c(s)=c$, this implies that in any step during the amortization of $\mc^\pi$, the surrogate cost of terminal $t(I_0)$ increases by at most $g$. Finally, we note that the horizon of $\mc^\pi$ will be at most
\[k:=\log\frac{t_2}{t_f}\leq \log\frac{M}{g} \leq \log\frac{2\mu}{g}\] 
and thus we conclude that the total number of amortization steps will be $k$, and that the total increase in the value of the first terminal will be at most $kg$.

Up next, we will argue that the terminal $t(I_0)$ will be the terminal that suffers the maximum increase during the water-filling amortization of $\mc^\pi$. Note that by definition of $\mc^\pi$, the initial value of a terminal state $t(I)$ corresponding to some $I\in\I^\pi$ will be precisely $\mu(I):=\expect{X|X\in I}$. Thus, terminal $t(I_0)$ starts with the minimum value. Furthermore, by the one-sided structure of $\mc^\pi$, observe that $t(I_0)$ is a terminal state for all intermediate action states, and thus it will participate in all the stages of the water filling amortization. These two facts immediately prove the claim.

From the above, we can summarize that for all $I\in \I^\pi$ we have
\[W^*_{\mc^\pi}(t(I)) \leq \mu(I) + k\cdot g\]
and thus for any $x\in\X$, we have $\surr^\pi(x) \leq  \mu(I) + k g$ for the unique $I\in \I^\pi$ for which $x\in I$. It remains to upper bound the expectations $\mu(I)$. For $I_0=(-\infty , t_f] \subseteq (-\infty, 2g]$ we have that $\mu(I_0) \leq 2g$. For the maximum interval $I=(t_2,\infty)$, we have that $\mu(I)\leq 2\mu$; this is a consequence of the fact that $t_2\leq M$ and thus $\expect{X|X>t_2} \leq \expect{X|X\geq M}\leq 2\mu$.
Finally, for any other interval $I$, we know that the ratio between its endpoints will be precisely $2$ due to the halving and thus $\mu(I)\leq 2x$ for any $x\in I$. 

To summarize, we have shown that for all $x\in\X$:
\begin{equation*}
  \surr^\pi(x) \leq
    \begin{cases}
    kg + 2g & \text{if $x < g$}\\
    kg + 2x & \text{if $x\in [g,t_2]$}\\
    kg + 2\mu & \text{if $x>t_2$}
    \end{cases}       
\end{equation*}
Observe that since $t_2\leq M \leq 2\mu$, we have that for any $x\in\X$, \[\surr^\pi(x) \leq u(x):= kg + 2\cdot\min(2\mu , \max(x,g)).\] 

Notice that the upper bound $u(x)$ is non-decreasing. Now, recall that $\surr^*(x)=\min(h,\max(g,x))$ is also a non-decreasing mapping. This, policy $\pi$ will $\alpha$-pointwise approximate $\mc$ for
\[a = \max_{x\in X}\frac{u(x)}{\surr^*(x)}\]
and since $\mu\leq h$ and $k=O(\log\frac{\mu}{g})$ this implies a $O(\log\frac{\mu}{g} + \frac{\mu}{M})$-pointwise approximation.

\subsection{Proof of~\Cref{thm:ws-lb}}

\wslbtag*
\noindent For ease of notation, let $k:=\alpha+1$ and $B:= 2^{k^2}$. We consider the alternative with weighing cost $c=1$ and random cost $X$ that is continuously distributed in interval $[1,B]$ and has two point masses on $0$ and $(k+1)B$, namely:
    \[
    X =
    \begin{cases}
      0 & \text{with probability $1 - \frac{1}{k}$}\\
      x\in [1,B] & \text{with density $f(x) = \frac{1}{kx^2}$}\\
      (k+1)B & \text{with probability $\frac{1}{kB}$}
    \end{cases}       
    \]
    Note that since $B > k > 1$ and $\int_{x=1}^B\frac{1}{kx^2}dx = \frac{1}{k} - \frac{1}{kB}$, $X$ is indeed a valid random variable. We proceed by computing the relevant parameters of $X$.
    \begin{itemize}
        \smallskip
        \item The expected value of $X$ is $\mu = \expect{X} = k+1+\frac{1}{k}$ by definition. Thus, $\mu\in [k+1,k+2]$.
        
        \smallskip
        \item The $g$-index of $\mc$ satisfies $g\in [1,\frac{k}{k-1}]$. Define $\text{exc}(z):= \expect{(z-X)^+}$. The claim follows by noting that $\text{exc}(\cdot)$ is non-decreasing; $\text{exc}(1) = 1-\frac{1}{k}<1$ and $\text{exc}(\frac{k}{k-1}) > 1$; whereas $\text{exc}(g)=1$ by definition.
        
        \smallskip
        \item The $h$-index of $\mc$ is $h=B$, since $\expect{(X-B)^+} = \frac{1}{kB}\cdot [(k+1)B-B] = 1$.  
    \end{itemize}

\smallskip\noindent
By definition of $\surr^*(x)$, we have that $\surr^*(0) = g$,  $\surr^*((k+1)B) = B$ and $\surr^*(x) = \max(g,x)$ for all $x\in [1,B]$. We will now show that no committing policy can achieve an $\alpha'$-pointwise approximation for the alternative $(X,1)$ for any $\alpha'<\alpha$.

Fix any commitment $\pi$ and let $\ell(x):= \expect{X\in I}$ where $I\in\I^\pi$ is the unique interval of the policy's partition that contains $x$. Clearly, $\ell(x)$ is a lower bound to $\surr^\pi(x)$ and it is also a non-decreasing function. Thus, in order to prove impossibility of pointwise approximation for any $\alpha'<\alpha$, it suffices to show that there exists some $x\in\X$ for which
\[\frac{\ell(x)}{\surr^*(x)}\geq \alpha.\]

We begin by considering the committing policy $\pi$ that does not perform any weighings. Then, we simply have $\ell(x) = \surr^\pi(x)=\mu$ for all $x\in\X$ and since $\surr^*(x)\geq g$ for all $x\in X$, and the claim follows from $\alpha \leq \mu/g$. Next, consider any commitment that performs weighings, and let $t$ be the maximum threshold that it uses. Clearly, $t<(k+1)B$ otherwise there is no point in the weighing. This means that the final interval in $\I^\pi$ will be $I_\infty := (t,\infty)$. Let $\mu_\infty := \mu(I_\infty) = \expect{X|X>t}$. We distinguish between the following cases:

\smallskip
\begin{itemize}
\smallskip

    \item If $t\geq B$, then $\mu_\infty = (k+1)B$ and $\surr^*(t)\leq h = B$. In that case, the ratio is at least $k>\alpha$.
\smallskip

    \item If $t\leq 1$, then $\mu_\infty \geq \expect{X|X\geq 1} = k^2 + k + 1$ and $\surr^*(t) \leq \surr^*(1) = \min(h, \max(g,1)) = g$. In that case, the ratio is at least $k^2-k > \alpha$.
\smallskip

    \item Finally, if $t\in (1,B)$, then $\prob{X\geq t} = \frac{1}{kt}$ and thus $\mu_\infty \geq kt$. Since $\surr^*(t) = \max(g,t)$, the ratio is at least $k-1 = \alpha$. 

\end{itemize}
Thus, in any case there exists some $x$ for which $\ell(x) \geq \surr^*(x)$. As already mentioned, by the fact that both mappings are non-decreasing and by definition of pointwise-approximation, this proves~\Cref{thm:ws-lb}.

\newpage
\newpage
\section{Omitted Proofs from~\Cref{sec:gvo} (Optional Inspection PB)}\label{app:gvo}

In this chapter of the appendix, we present all omitted proofs from~\Cref{sec:gvo}. We restate all our results for the reader's convenience.

\subsection{Proof of~\Cref{thm:semilocal_composition}}
\semilocalcompo*
\begin{proof}
We consider the following algorithm, restated from~\Cref{sec:gvo}:   
\begin{algorithm}
    \SetAlgoLined
    \SetKwInOut{Input}{Input}
    \SetKwFor{For}{For }{\unskip:}{}{}
    \SetKwIF{If}{ElseIf}{Else}{If}{\unskip:}{Else if}{Else}{}
    \SetKwComment{Comment}{}{}
    \DontPrintSemicolon
    \caption{Semilocal Approximation Composition Algorithm}
    \label{algo:pboi_repeat}
    \Input{%
        A normalized matroid-max-PBOI instance $\instance = (\pbox_1, \dots, \pbox_n, \feas)$. \\
        A vector of probabilities $(p_1, \dots, p_n)$.
    }
    \smallskip
    Relabel the boxes such that $\mu_1 \geq \dots \geq \mu_n$
    \;
    \smallskip
    $S^\grab \gets \{\}$
    \Comment*{\color{gray}\normalfont%
        set of boxes marked as ``grab''}
    \smallskip
    \For{\upshape$i \gets 1, \dots, n$}{%
        \smallskip
        
        $L_i \gets \textbf{if } S^\grab \cup \{i\} \in \feas \textbf{ then } 1 \textbf{ else } 0$
        \Comment*{\color{gray}\normalfont%
            $L_i = 0$ means we never want to grab box~$i$}
        \label{line:semilocal_composition:use_local}
        \smallskip
        Sample $K_i \gets \Bernoulli(p_i)$
        \Comment*{\color{gray}\normalfont%
            $K_i = 1$ means \emph{provisionally} mark box~$i$ ``grab''}
        \smallskip
        \If{\upshape$K_i = 1 \textbf{ and } L_i = 1$}{
            \smallskip
            $S^\grab \gets S^\grab \cup \{i\}$
            \Comment*{\color{gray}\normalfont%
                fully mark box~$i$ as ``grab''}
        }
    }
    \smallskip
    
    Commit to grabbing the boxes $S^\grab$ and opening the boxes in $[n]\setminus S^\grab$.
    \;
    \label{line:semilocal_composition:commitment}
    \smallskip
    Run the optimal (index) policy under the resulting commitment.
    \;
\end{algorithm}

To simplify the notation, we let $W^\insp_i:= W^*_{o,i}$ be the surrogate cost of the opening action for box $\pbox_i$. There are two main steps of the proof, each stated and proved in a lemma below. We express both steps in terms of the random variables
    \[
        W^\alg_i = \begin{cases}
            W^\insp_i & \text{if } K_i L_i = 0 \\
            \mu_i & \text{if } K_i L_i = 1,
        \end{cases}
    \]
    where $K_i$ and $L_i$ are as defined in \Cref{algo:pboi_repeat}.
    One can think of $W^\alg_i$ as the surrogate value of box~$i$ conditional on the state of the algorithm at \cref{line:semilocal_composition:commitment}. The first step, \cref{thm:semilocal_composition:achieved}, is to express the value achieved by \cref{algo:pboi_repeat} in terms of~$W^\alg_i$:

    \begin{lemma}
        \label{thm:semilocal_composition:achieved}
        For any max-matroid Pandora's box instance $\instance = (\pbox_1, \dots, \pbox_n, \feas)$ and any vector of probabilities $(p_1, \dots, p_n)$, the expected value achieved by \cref{algo:pboi_repeat} is
        \[
            \E{\textnormal{value achieved by \cref{algo:pboi_repeat}}}
            = \E[\bigg]{\max_{S \in \feas} \sum_{i \in S} W^\alg_i}.
        \]
    \end{lemma}

    The second step, \cref{thm:semilocal_composition:comparison}, is to compare the resulting expression to an upper bound on the optimal value. This step uses the semilocal $(\alpha, \beta)$-approximation guarantee from \cref{def:semilocal_approx}, which gives us a relationship between $W^\alg_i$ and~$W^*_i$.
    \begin{lemma}
        \label{thm:semilocal_composition:comparison}
        Under the hypotheses of \cref{thm:semilocal_composition},
        \begin{equation*}
            \E[\bigg]{\max_{S \in \feas} \sum_{i \in S} W^\alg_i}
            \geq \E[\bigg]{\max_{S \in \feas} \sum_{i \in S} \alpha W^*_i - \smashoperator{\sum_{i \in S^\grab}} \beta \mu_i},
        \end{equation*}
        where $S^\grab$ refers to the value of the $S^\grab$ variable from \cref{algo:pboi_repeat} at \cref{line:semilocal_composition:commitment}.
    \end{lemma}

    Combining the lemmas yields
    \[
        \E{\text{value achieved by \cref{alg:semilocal_composition}}}
        \geq \E[\bigg]{\max_{S \in \feas} \sum_{i \in S} \alpha W^*_i - \smashoperator{\sum_{i \in S^\grab}} \beta \mu_i},
    \]
    where $S^\grab$ is the set of boxes marked ``grab'' at \cref{line:semilocal_composition:commitment}, i.e. at the end of the algorithm.
    It remains only to relate the two terms on the right-hand side to the optimal expected value.
    \* From the analogue of~\Cref{thm:MDP-lb} for the maximization setting (see~\Cref{thm:MDP-lb-max} in~\Cref{app:maxim}), we know that the water filling surrogate costs provide an upper bound on the utility of the optimal adaptive policy for any max-CICS instance. Thus:
    \[
        \E[\bigg]{\max_{S \in \feas} \sum_{i \in S}  W^*_i} \geq \E{\text{value of optimal policy}}.
    \]
    \* Because \cref{algo:pboi_repeat} ensures $S^\grab \in \feas$ by construction, the following policy is feasible: ``Compute $S^\grab$ as in \cref{algo:pboi_repeat}, but then simply grab the boxes in $S^\grab$.'' This algorithm achieves value $\E[\big]{\sum_{i \in S^\grab} \mu_i}$, which means
    \[
        \E[\bigg]{\smashoperator[r]{\sum_{i \in S^\grab}} \mu_i} \leq \E{\text{value of optimal policy}}.
    \]
    \*/
    Therefore, as desired, $\E{\text{value achieved by \cref{algo:pboi_repeat}}} \geq (\alpha - \beta)\cdot  \E{\text{value of optimal policy}}$ which implies a lower bound of $\alpha-\beta$ for the commitment gap.
\end{proof}

\paragraph{Proof of~\Cref{thm:semilocal_composition:achieved}.} Consider running \cref{algo:pboi_repeat} through \cref{line:semilocal_composition:commitment}, but not further. All of the randomness thus far comes from the coin flips~$K_i$, and no boxes have been opened yet. This means that conditional on the coin flips~$K_i$, the expected value achieved is that of a \emph{mandatory-inspection} instance $\instance' = (\pbox'_1, \dots, \pbox'_n, \feas)$ whose $i$th box $\pbox'_i$ is defined as follows:
    \* If $K_i L_i = 0$ (marked ``open''), $\pbox'_i = (\dist_i, c_i)$, i.e. the box is the same as the original instance.
    \* If $K_i L_i = 1$ (marked ``grab''), $\pbox'_i = (\mu_i, 0)$, i.e. the box is free to open and always contains value~$\mu_i$.
    \*/
    Under instance~$\instance'$, box~$i$'s surrogate value is given by $W^\alg_i$. Since this is an instance of max-CICS over Markov chains, from the counterpart of~\Cref{thm:MC-opt} for maximization (see~\Cref{thm:MC-opt-maxim} in~\Cref{app:maxim}) we have
    \[
        \E{\text{value achieved by \cref{algo:pboi_repeat}} \given K_1, \dots, K_n}
        = \E[\bigg]{\max_{S \in \feas} \sum_{i \in S} W^\alg_i \given K_1, \dots, K_n}.
    \]
    The lemma then follows by the law of total expectation.

\paragraph{Proof of~\Cref{thm:semilocal_composition:comparison}.} We want to prove that
   
   \begin{equation}
    \label{eq:semilocal_composition:comparison}
            \E[\bigg]{\max_{S \in \feas} \sum_{i \in S} W^\alg_i}
            \geq \E[\bigg]{\max_{S \in \feas} \sum_{i \in S} \alpha W^*_i - \smashoperator{\sum_{i \in S^\grab}} \beta \mu_i}.
    \end{equation}
        
    The outline of the proof is as follows. We begin with the left-hand side of~\cref{eq:semilocal_composition:comparison}.
    Then, for each box~$i$, we swap $W^\alg_i$ with $\alpha W^*_i$, and also subtract $\beta \mu_i$ if box~$i$ is marked ``grab''.
    Because each box admits a semilocal $(\alpha, \beta)$-approximation, each of these replacements only decreases the expression's expected value.
    After all $n$ replacements, we are left with the right-hand side of~\cref{eq:semilocal_composition:comparison}, as desired.
    This is the same strategy used by \citet[Theorem~5.4]{DS24}, but some careful conditioning is needed to account for the $\beta \mu_i$ subtractions.

    In order to notate the one-by-one replacement outlined above, let
    \begin{align*}
        W^{(j)}_i
        &= \begin{cases}
            W_i^\alg & \text{if } i \leq j \\
            \alpha W_i^* & \text{if } i > j,
        \end{cases}
        &
        U^{(j)}
        &= \max_{S \in \feas} \sum_{i \in S} W^{(j)}_i.
    \end{align*}
    Using this notation and recalling how $S^\grab$ is defined in \cref{algo:pboi_repeat}, we can rewrite our goal \cref{eq:semilocal_composition:comparison} as
    \[
        \E{U^{(n)}} \geq \E[\bigg]{U^{(0)} - \sum_{i = 1}^n \beta K_i L_i \mu_i}.
    \]
    Therefore, it suffices to show that for all $j \in \{1, \dots, m\}$,
    \begin{equation}
        \label{eq:semilocal_composition:suffices_1}
        \E{U^{(j)}}
        \geq \E[\bigg]{U^{(j - 1)} - \beta K_j L_j \mu_j}.
    \end{equation}

    We will show \cref{eq:semilocal_composition:suffices_1} using the definition of semilocal approximation (\cref{def:semilocal_approx}).
    But in order to do so, we need to express each side in terms of a maximum between $W^\alg_j$ or $\alpha W^*_j$ and a quantity that is independent of box~$j$'s value~$X_j$ and coin flip~$K_j$ (this will take the place of the outside option in the definition of semilocal approximation).
    We express the latter quantity in terms of
    \begin{align*}
        Y_{\neq j}
        &= \max_{S \in \feas : j \not\in S} \sum_{i \in S} W^{(j)}_i,
        &
        Z_{\neq j}
        &= \max_{S \in \feas : j \in S} \smashoperator[r]{\sum_{i \in S \setminus \{j\}}} W^{(j)}_i.
    \end{align*}
    These can both be seen as optimal total surrogate values achievable without box~$j$.
    The difference is that $Y_{\neq j}$ optimizes over sets that exclude~$j$, whereas $Z_{\neq j}$ optimizes over sets that include~$j$ (but still excludes box~$j$'s surrogate value from the sum). With the definitions of $Y_{\neq j}$ and $Z_{\neq j}$ in hand, we can express $U^{(j)}$ and $U^{(j - 1)}$ as
    \begin{align*}
        U^{(j)}
        &= \max\Bgp{Y_{\neq j}, Z_{\neq j} + W^\alg_j} = Z_{\neq j} + \max\Bgp{W^\alg_j, Y_{\neq j} - Z_{\neq j}},
        \\
        U^{(j - 1)}
        &= \max\Bgp{Y_{\neq j}, Z_{\neq j} + \alpha W^*_j} = Z_{\neq j} + \max\Bgp{\alpha W^*_j, Y_{\neq j} - Z_{\neq j}}.
    \end{align*}
    So to show \cref{eq:semilocal_composition:suffices_1}, it suffices to show
    \[
        \E{\max\Bgp{W^\alg_j, Y_{\neq j} - Z_{\neq j}}}
        \geq \E{\max\Bgp{\alpha W^*_j, Y_{\neq j} - Z_{\neq j}} - \beta K_j L_j \mu_j}.
    \]
    Letting $K_{< j} = (K_1, \dots, K_{j - 1})$, by the law of total expectation, it suffices to show
    \begin{equation}
        \label{eq:semilocal_composition:suffices_2}
        \E{\max\Bgp{W^\alg_j, Y_{\neq j} - Z_{\neq j}} \given K_{< j}, Y_{\neq j}, Z_{\neq j}}
        \geq \E{\max\Bgp{\alpha W^*_j, Y_{\neq j} - Z_{\neq j}} - \beta K_j L_j \mu_j \given K_{< j}, Y_{\neq j}, Z_{\neq j}}.
    \end{equation}
    The key to showing \cref{eq:semilocal_composition:suffices_2} is observing the following independence facts:
    \* $(K_j, X_j)$ is independent of $K_{< j}$. This is because the coin flips~$K_{< j}$ affect neither the coin flip~$K_j$ nor the box value~$X_j$.
    \* $(K_j, X_j)$ is conditionally independent of $(Y_{\neq j}, Z_{\neq j})$ given $K_{< j}$. This is because once $K_{< j}$ are fixed, the values $(Y_{\neq j}, Z_{\neq j})$ are a function of the values of boxes other than~$j$, and the box values are mutually independent.
    \*/
    The main obstacle to applying the semilocal approximation condition to \cref{eq:semilocal_composition:suffices_2} is that $W^\alg_j$ depends on $L_j$, which in turn depends on $K_{< j}$.
    Fortunately, we see from \cref{algo:pboi_repeat} that
    \[
        L_{\leq j} = (L_1, \dots, L_j) \text{ is a deterministic function of } K_{< j} = (K_1, \dots, K_{j - 1}).
    \]
    This is because for all~$i$, when executing \cref{line:semilocal_composition:use_local}, the only randomness the algorithm has used is the past coin flips~$K_{< i}$.
    So to show \cref{eq:semilocal_composition:suffices_2}, we split into cases based on whether $L_j = 0$ or $L_j = 1$.

    Suppose that $L_j = 1$.
    More precisely, suppose $K_{< j} = k_{< j}$, where $k_{< j}$ is any bit vector such that $K_{< j} = k_{< j}$ induces $L_j = 1$ in \cref{algo:pboi_repeat}.
    In this case, the coin flip~$K_j$ impacts whether we mark box~$j$ as ``grab'' or ``open'', so we will use the semilocal approximation guarantee from \cref{def:semilocal_approx}.
    By the assumption on $k_{< j}$ and the fact that $K_j \sim \Bernoulli(p_j)$ independently of $K_{< j}$,
    \[
        \E{K_j L_j \given K_{< j} = k_{< j}} = \E{K_j} = p_j.
    \]
    So, by \cref{def:semilocal_approx},
    \begin{align*}
        \E{\max\Bgp{W^\alg_j, y} \given K_{< j} = k_{< j}}
        &= (1 - p_j) \E{\max\Bgp{W^\insp_j, y}} + p_j \max\Bgp{\mu_j, y} \\
        &\geq \E{\max\Bgp{\alpha W^*_j, y}} - \beta p_j \mu_j \\
        &= \E{\max\Bgp{\alpha W^*_j, y} - \beta K_j \mu_j} \\
        &= \E{\max\Bgp{\alpha W^*_j, y} - \beta K_j L_j \mu_j \given K_{< j} = k_{< j}}.
    \end{align*}
    The fact that $(Y_{\neq j}, Z_{\neq j})$ is conditionally independent of $(K_j, X_j)$ given $K_{< j}$ completes the proof of \cref{eq:semilocal_composition:suffices_2} on the event $L_j = 1$.

    Suppose now that $L_j = 0$.
    In this case, we mark box~$j$ as ``open'' regardless of the coin flip~$K_j$, so instead of using the semilocal approximation condition, we will show that marking box~$j$ as ``open'' does not lose any any potential value.
    Specifically, we will show the following:
    \*[(a)] For all $y \geq h_j$, we have $\E{\max\Bgp{W^\insp_j, y}} = \E{\max\Bgp{W^*_j, y}}$.
    \* If $L_j = 0$, then $Y_{\neq j} - Z_{\neq j} \geq \mu_j$.
    \*/
    Together with the fact that $\mu_j \geq h_j$ (which holds by the assumption that all degenerate boxes have been normalized), facts~(a) and~(b) imply that for any $k_{< j}$ such that $K_{< j} = k_{< j}$ induces $L_j = 0$ in \cref{algo:pboi_repeat},
    \begin{align*}
        \E{\max\Bgp{W^\alg_j, Y_{\neq j} - Z_{\neq j}} \given K_{< j} = k_{< j}}
        &= \E{\max\Bgp{W^\insp_j, Y_{\neq j} - Z_{\neq j}} \given K_{< j} = k_{< j}} \\
        &= \E{\max\Bgp{W^*_j, Y_{\neq j} - Z_{\neq j}} \given K_{< j} = k_{< j}} \\
        &\geq \E{\max\Bgp{\alpha W^*_j, Y_{\neq j} - Z_{\neq j}} \given K_{< j} = k_{< j}} \\
        &= \E{\max\Bgp{\alpha W^*_j, Y_{\neq j} - Z_{\neq j}} - \beta K_j L_j \mu_j \given K_{< j} = k_{< j}},
    \end{align*}
    which completes the proof of \cref{eq:semilocal_composition:suffices_2} on the event $L_j = 0$.
    It remains only to show (a) and~(b).
    Fact~(a) holds by definition: recall that $h_j$ denotes the smallest value of the outside option $y$ for which the optimal policy in the local game will prefer opening to grabbing. For~(b), we will show that if $L_j = 0$, then $Y_{\neq j} \geq \mu_j + Z_{\neq j}$.
    Let
    \* $B^Z \in \argmax_{S \in \feas : j \in S} \sum_{i \in S \setminus \{j\}} W^{(j)}_i$ be a maximizing basis in the definition of $Z_{\neq j}$,
    \* $S^\grab_{< j} = S^\grab \cap \{1, \dots, j - 1\}$ be the boxes marked ``grab'' before the $j$-th step in the loop, and
    \* $B^\grab$ be $S^\grab_{< j}$ extended to a basis by elements of $B^Z$, so that $S^\grab_{< j} \subseteq B^\grab \subseteq S^\grab_{< j} \cup B^Z$.
    \*/
    Because $L_j = 0$, we have $S^\grab_{< j} \cup \{j\} \not\in \feas$, which means $j \not\in B^\grab$.
    But $j \in B^Z$ by definition, so $j \in B^Z \setminus B^\grab$.
    By the basis exchange property, there exists $k \in B^\grab \setminus B^Z$ such that the following is a basis:
    \[
        B^Y = (B^Z \setminus \{j\}) \cup \{k\}.
    \]
    But $B^\grab \setminus B^Z \subseteq S^\grab_{< j}$, which means $W^{(j)}_k = \mu_k \geq \mu_j$ since~\Cref{algo:pboi_repeat} iterates over the boxes in decreasing order of mean value.
    This means
    \[
        Y_{\neq j}
        \geq \smashoperator{\sum_{i \in B_Y}} W^{(j)}_i
        = \mu_k + \smashoperator{\sum_{i \in B_Z \setminus \{j\}}} W^{(j)}_i
        \geq 
        \mu_j + \smashoperator{\sum_{i \in B_Z \setminus \{j\}}} W^{(j)}_i
        = \mu_j + Z_{\neq j}.
        \qedhere
    \]
    and the proof is completed.

\subsection{Proof of~\Cref{lem:semilocal-constant}}

\semilocalconstant*
\begin{proof}
    To simplify notation we denote $\alpha(\beta)$ as $\alpha$ throughout the proof. By~\Cref{def:semilocal_approx}, proving this lemma amounts to showing that these $\alpha$ and $\beta$ satisfy:
    \begin{equation}\label{eq:def-semilocal-approx}    
        (1 - p) \E{\max\Bgp{W^*_o, y}} + p \max\Bgp{\mu, y}
        \geq \E{\max\Bgp{\alpha W^*, y}} - p \beta \mu
    \end{equation}
    for all $y \geq 0$. This is equivalent to showing that
    \[
        f(y) := (1 - p) \E{\max\Bgp{W^*_o, y}} + p \max\Bgp{\mu, y} - \E{\max\Bgp{\alpha W^*, y}} + p \beta \mu
    \]
    satisfies $f(y) \geq 0$ for all $y \geq 0$. We will first show that it is sufficient to check that $f(0) \geq 0$ and that $f(y) \geq 0$ for $y \geq \mu$. It follows from \cref{def:pboi_sur_cost} that at all values $y$ for which $f'(y)$ is defined, it is equal to
    \[
        (1-p) \prob{y \geq W^*_o} + p \mathbbm{1}(y > \mu) - \mathbbm{1}(y > \alpha h)\prob{y > \alpha W^*_o}.
    \]
    Observe that
    \begin{itemize}
        \item if $y < \alpha h < \mu$, then $f'(y) \geq 0$,
        \item if $\alpha h < y < \mu$, then since $\alpha \leq 1$, we must have $f'(y) \leq 0$.
    \end{itemize}
    Thus a global minimum of $f(y)$ must be at $y = 0$ or on $y \geq \mu$, so it is sufficient to check that \cref{eq:def-semilocal-approx} is satisfied when $y=0$ and $y \geq \mu$. When $y=0$, \cref{eq:def-semilocal-approx} reduces to
    \begin{equation*}
        (1-p)(\mu-c)+p\mu \geq \alpha \mu-p\beta \mu.
    \end{equation*}
    which holds for the values of $\alpha$ and $p$ given in the lemma.
    When $y \geq \mu$, \cref{eq:def-semilocal-approx} reduces to
    \begin{equation}\label{eq:semilocal-approx-suff-inequality-y>m}
        (1-p) \E{\max\Bgp{W^*_o, y}} + py
        \geq \E{\max\Bgp{\alpha W^*, y}} - p \beta \mu = \alpha\E{\max\Bgp{W^*_o, \frac{y}{\alpha}}} - p \beta \mu 
    \end{equation}
    where the last equality follows from the fact that $\E{\max\Bgp{\alpha W^*, y}} = \max\{\E{\max\Bgp{\alpha W^*_o, y}}, \alpha \mu\}$. The slope of $\E{\max\{W^*_o, y\}}$ is bounded above by $1$, so 
    \[
        \alpha\E{\max\Bgp{W^*_o, \frac{y}{\alpha}}} \leq \alpha \E{\max\Bgp{W^*_o, y}} + \alpha \left(
            \frac{y}{\alpha} - y
        \right).
    \]
    Applying this bound to \cref{eq:semilocal-approx-suff-inequality-y>m}, we find that \cref{eq:semilocal-approx-suff-inequality-y>m} holds if,
    \begin{equation}\label{eq:semilocal-approx-suff-inequal-y>m-after-bound}
        (\alpha+p-1) (\E{\max\Bgp{W^*_o, y}}-y)
        \leq p \beta \mu. 
    \end{equation}
    Since the slope of $\E{\max\{W^*_o, y\}}$ is bounded above by $1$, it must be that $\E{\max\Bgp{W^*_o, y}}-y$ is maximized at $y=0$ where it is equal to $\mu - c$, so \cref{eq:semilocal-approx-suff-inequal-y>m-after-bound} holds if,
    \[
        (\alpha+p-1) (\mu-c)
        \leq p \beta \mu. 
    \]
    Observe that if $p = \alpha\frac{c}{\mu}$, we can rewrite this inequality as
    \[
        \alpha\left(1 + \frac{c}{\mu} - \frac{c}{\mu-c}\beta \right) \leq 1,
    \]
    which is satisfied by the $\alpha$ given in the lemma.
\end{proof}


\newpage
\bibliographystyle{plainnat}
\bibliography{references}

\end{document}